\newtheorem{defi}{Definition}
\theoremstyle{definition}
\newtheorem{ex}{Example}
\def\cl{{C}\!\ell}
\def\j{{\rm j}}
\def\k{{\rm k}}
\def\ii{{\rm i}}
\def\Mat{{\rm Mat}}
\def\R{{\mathbb R}}
\def\C{{\mathbb C}}
\def\H{{\mathbb H}}
\def\Z{{\mathbb Z}}
\def\FF{{\mathbb F}}
\def\F{{\rm F}}
\def\M{{\rm M}}
\def\G{{\rm G}}
\def\U{{\rm U}}
\def\SU{{\rm SU}}
\def\SL{{\rm SL}}
\def\Sp{{\rm Sp}}
\def\GL{{\rm GL}}
\def\u{\frak{u}}
\def\Cen{{\rm Cen}}
\def\I{{\rm I}}
\def\T{{\rm T}}
\def\O{{\rm O}}
\def\SO{{\rm SO}}
\def\Aut{{\rm Aut}}
\def\End{{\rm End}}
\def\Ad{{\rm Ad}}
\def\Pin{{\rm Pin}}
\def\Spin{{\rm Spin}}
\newcommand{\mcc}[1]{\overleftarrow{#1}}
\def\Det{{\rm Det}}
\begin{document}

\title[Clifford Algebras and Their Applications to Lie Groups and Spinors]
{CLIFFORD ALGEBRAS AND THEIR APPLICATIONS TO LIE GROUPS AND SPINORS\footnotemark}
\author[Dmitry Shirokov]
{Dmitry Shirokov}
\address{National Research University Higher School of Economics\\
101000 Moscow, Russia; dshirokov@hse.ru             
\\[4pt]
Institute for Information Transmission Problems of Russian Academy of Sciences\\
127051 Moscow, Russia; shirokov@iitp.ru
}

\begin{abstract}
In these lectures, we discuss some well-known facts about Clifford algebras: matrix representations, Cartan's periodicity of 8, double coverings of orthogonal groups by spin groups, Dirac equation in different formalisms, spinors in $n$ dimensions, etc. We also present our point of view on some problems. Namely, we discuss the generalization of the Pauli theorem, the basic ideas of the method of averaging in Clifford algebras, the notion of quaternion type of Clifford algebra elements, the classification of Lie subalgebras of specific type in Clifford algebra, etc.\\[0.2cm]
 \textsl{MSC}: 15A66, 22E60, 35Q41 \\
 \textsl{Keywords}: Clifford algebra, matrix representations, Lie groups, Lie algebras, spin groups, Dirac equation, spinors, Pauli theorem, quaternion type, method of averaging
\end{abstract}

\maketitle
\tableofcontents

\section*{Introduction}\label{sec:1}

Clifford algebra was invented by W.~Clifford \cite{Clifford}. In his research, he combined Hamilton's quaternions \cite{hamilton} and Grassmann's exterior algebra \cite{grassmann}. Further development of the theory of Clifford algebras is associated with a number of famous mathematicians and physicists -- R.~Lipschitz, T.~Vahlen, E.~Cartan \cite{Cartan}, E.~Witt, C.~Chevalley, M.~Riesz \cite{Riesz} and others. Dirac equation \cite{Dirac}, \cite{Dirac2} had a great influence on the development of Clifford algebra. Also note the article \cite{ABS}.

Nowadays Clifford algebra is used in different branches of modern mathematics and physics. There are different applications of Clifford algebra in physics, analysis, geometry, computer science, mechanics, robotics, signal and image processing, etc.

In this text, we discuss some well-known facts about Clifford algebras: matrix representations, Cartan's periodicity of 8, double coverings of orthogonal groups by spin groups, Dirac equation in different formalisms, spinors in $n$ dimensions, etc. We also present our point of view on some problems. Namely, we discuss the generalization of the Pauli theorem, the basic ideas of the method of averaging in Clifford algebras, the notion of quaternion type of Clifford algebra elements, the classification of Lie subalgebras of specific type in Clifford algebra, etc.

We recommend a number of classical books on Clifford algebras and applications \cite{Lounesto}, \cite{Hestenes}, \cite{HestenesSob}, \cite{Chevalley}, \cite{Port}, \cite{Benn:Tucker}, \cite{LM}, \cite{Bud}, \cite{GM}, \cite{DSS}, \cite{GM}, \cite{HM}, \cite{Snygg}, \cite{Snygg2}, \cite{lasenby}, etc.
We can also offer the book \cite{msh3} and a course of lectures \cite{sh8} for Russian speakers.

\smallskip

This text is based on five lectures given by the author at the International Summer School ``Hypercomplex numbers, Lie groups, and Applications'' (Varna, Bulgaria, 9-12 June 2017).

\section{Definition of Clifford Algebra}\label{sec:1}

\subsection{Clifford Algebra as a Quotient Algebra}\label{sec:1.1}

In \cite{Lounesto}, one can find five different (equivalent) definitions of Clifford algebra. We will discuss two definitions of Clifford algebra in this work. Let us start with the definition of Clifford algebra as a quotient algebra \cite{Chevalley}.

\begin{defi}\label{def:1}
Let we have a vector space $V$ of arbitrary finite dimension $n$ over the field $\R$ and a quadratic form $Q: V \to \R$. Consider the tensor algebra
$$T(V)=\bigoplus_{k=0}^\infty \bigotimes^k V$$
and the two-sided ideal $I(V, Q)$ generated by all elements of the form $x \otimes x-Q(x) e$ for $x\in V$, where $e$ is the identity element. Then the following quotient algebra
$$\cl(V,Q)=T(V) \slash I(V,Q)$$
is called real Clifford algebra $\cl(V, Q)$.
\end{defi}

\subsection{Clifford Algebra with Fixed Basis}\label{sec:1.2}

Now let us discuss definition of the real Clifford algebra with fixed basis which is more useful for calculations and different applications.

\begin{defi}\label{def:2}
Let $n$ be a natural number and $E$ be a linear space of dimension $2^n$ over the field of real numbers $\R$ with the basis enumerated by the ordered multi-indices with a length between $0$ and $n$:
$$e, e_{a_1}, e_{a_1 a_2}, \ldots, e_{1\ldots n}$$
where $1\leq a_1 < a_2 < \cdots < a_n \leq n$.
Let us introduce the operation of multiplication on $E$:
\begin{itemize}
  \item with the properties of distributivity, associativity:
\begin{equation}\begin{split}
&U(\alpha V+\beta W)=\alpha UV+\beta UW,\qquad
 U, V, W\in E,\qquad \alpha, \beta \in \R\\
&(\alpha U+\beta V)W=\alpha UW+\beta VW,\qquad
U, V, W\in E,\qquad \alpha, \beta \in \R\\
&U(VW)=(UV)W,\qquad U, V, W\in E
\end{split}\nonumber\end{equation}
  \item $e$ is the identity element:
$$Ue=eU=U,\qquad U\in E$$
  \item $e_a$, $a=1, \ldots, n$ are generators:
$$e_{a_1} e_{a_2} \cdots e_{a_n}=e_{a_1 \ldots a_n},\qquad 1\leq a_1 < a_2 < \cdots < a_n \leq n$$
  \item generators satisfy
$$e_a e_b +e_b e_a =2\eta_{ab}e$$
where
\begin{eqnarray}
\eta=||\eta_{ab}||=\diag(\underbrace{1,\ldots , 1}_p,\underbrace{-1, \ldots, -1}_{q},\underbrace{0, \ldots, 0}_{r}),\qquad p+q+r=n\label{eta}
\end{eqnarray}
is a diagonal matrix with $p$ times $1$, $q$ times $-1$, and $r$ times $0$ on the diagonal.
\end{itemize}
The linear space $E$ with such operation of multiplication is called real Clifford algebra $\cl_{p,q,r}$.
\end{defi}

\begin{ex}
In the case $r=0$, we obtain \emph{nondegenerate Clifford algebra} $\cl_{p,q}:=\cl_{p,q,0}$. The quadratic form $Q$ in Definition \ref{def:1} is nondegenerate in this case.
\end{ex}

\begin{ex}
In the case $r=0$, $q=0$, we obtain \emph{Clifford algebra $\cl_n:=\cl_{n,0,0}$ of Euclidian space}.
The quadratic form $Q$ in Definition \ref{def:1} is positive definite in this case.
\end{ex}

\begin{ex}
In the case $p=q=0$, $r=n$, we obtain \emph{Grassmann algebra} $\Lambda_{n}:=\cl_{0,0,n}$. In this case Clifford multiplication is called \emph{exterior multiplication} and it is denoted by $\wedge$. The generators of Grassmann algebra satisfy conditions $e_a \wedge e_b=-e_b \wedge e_a$, $a, b=1, \ldots, n$.
\end{ex}

Any element of the real Clifford algebra $\cl_{p,q,r}$ has the form
\begin{equation}\label{U}
U=ue+\sum_{a=1}^n u_a e_a + \sum_{a<b} u_{ab} e_{ab}+\cdots+ u_{1\ldots n}e_{1\ldots n}
\end{equation}
where $u, u_a, u_{ab}, \ldots, u_{1\ldots n}\in \R$ are real numbers.

Also we consider \emph{complexified Clifford algebras} $\C\otimes\cl_{p,q,r}$. Any element of the complexified Clifford algebra $\C\otimes\cl_{p,q,r}$ has the form (\ref{U}), where $u$, $u_a$, $u_{ab}$, \ldots, $u_{1\ldots n}\in\C$ are complex numbers.

Also we consider \emph{complex Clifford algebras}. In Definition \ref{def:1}, we must take vector space $V$ over the field of complex numbers $\C$ in this case. In Definition \ref{def:2}, we must take vector space $E$ over the field of complex numbers $\C$ and it is sufficient to consider matrix $\eta=\diag(1,\ldots , 1, 0, \ldots, 0)$, $p+r=n,$ with $p$ times $1$ and $r$ times $0$ on the diagonal instead of the matrix (\ref{eta}) in this case. The most popular case is $\cl({\C}^n)$, when the quadratic form $Q$ is nondegenerate and $\eta$ is the identity matrix.

\subsection{Examples in Small Dimensions}\label{sec:1.3}

\begin{ex}
In the case of $\cl_0$, arbitrary Clifford algebra element has the form $U=ue$, where $e^2=e$. We obtain the isomorphism $\cl_0 \cong \R$.
\end{ex}

\begin{ex}
In the case of $\cl_1$, arbitrary Clifford algebra element has the form $U=ue+u_1 e_1$, where $e$ is the identity element and $e_1^2=e$. We obtain the isomorphism with double numbers: $\cl_1 \cong \R \oplus \R$.
\end{ex}

\begin{ex}
In the case of $\cl_{0,1}$, arbitrary Clifford algebra element has the form $U=ue+u_1 e_1$, where $e$ is the identity element and $e_1^2=-e$. We obtain the isomorphism with complex numbers: $\cl_{0,1}\cong \C$.
\end{ex}

\begin{ex}
In the case of $\cl_{0,2}$, arbitrary Clifford algebra element has the form $U=ue+u_1 e_1+u_2 e_2+u_{12}e_{12}$. We can easily verify the following relations
\begin{eqnarray}
&&(e_1)^2=(e_2)^2=-e\nonumber\\
&&(e_{12})^2=e_1 e_2 e_1 e_2=-e_1 e_1 e_2 e_2=-e\nonumber\\
&&e_1 e_2=-e_2 e_1=e_{12},\qquad e_2 e_{12}=-e_{12} e_2=e_1\nonumber\\
&&e_{12} e_{1}=-e_{1} e_{12}=e_2.\nonumber
\end{eqnarray}
Using the following substitution
$$e_1 \to \ii,\qquad e_2\to \j,\qquad e_{12}\to \k$$
where $\ii$, $\j$, and $\k$ are imaginary units of quaternions, we obtain the isomorphism $\cl_{0,2}\simeq \H$.
\end{ex}

Recall that $\H$ is an associative division algebra. An arbitrary quaternion has the form
$$q=a 1+b\ii+c\j+d\k\in \H,\qquad a, b, c, d\in\R$$
where $1$ is the identity element, $\ii^2=\j^2=\k^2=-1$, $\ii\j=-\j\ii=\k$, $\j\k=-\k\j=\ii$, $\k\ii=-\ii\k=\j$.

If $q\neq 0$, then $q^{-1}=\frac{1}{||q||^2}\bar{q}$, where
$$\bar{q}:=a-b\ii-c\j-d\k,\qquad ||q||:=\sqrt{q \bar{q}}=\sqrt{a^2+b^2+c^2+d^2}.$$

Note that $\cl_{2,0}\cong\cl_{1,1}\ncong\cl_{0,2}$ (see Section \ref{sec:3.1}).

\begin{ex}
Let us consider the Pauli matrices:
$$\sigma_0=\left( \begin{array}{ll}
 1 & 0 \\
 0 & 1 \end{array}\right),\quad
\sigma_1=\left( \begin{array}{ll}
 0 & 1 \\
 1 & 0 \end{array}\right),\quad
\sigma_2=\left( \begin{array}{ll}
 0 & -\ii \\
 \ii & 0 \end{array}\right),\quad
\sigma_3=\left( \begin{array}{ll}
 1 & 0 \\
 0 & -1 \end{array}\right).
$$
W.~Pauli introduced these matrices \cite{Paulimatr} to describe spin of the electron.

It can be easily verified that
\begin{eqnarray}
&&\sigma_1\sigma_2=\ii\sigma_3, \quad\sigma_2\sigma_3=\ii\sigma_1, \quad\sigma_3\sigma_1=\ii\sigma_2\nonumber\\
&&(\sigma_a)^\dagger=\sigma_a,\quad  \tr(\sigma_a)=0,\quad (\sigma_a)^2=\sigma_0,\qquad a=1, 2, 3\nonumber\\
&&\sigma_a \sigma_b=-\sigma_b \sigma_a,\qquad a\neq b,\qquad a,b=1,2,3.\nonumber
\end{eqnarray}
Using the substitution
$$e\to\sigma_{0},\quad e_a\to\sigma_a, a=1, 2, 3,\quad e_{ab}\to \sigma_a \sigma_b, a<b,\quad e_{123}\to \sigma_1\sigma_2\sigma_3$$
we obtain the isomorphism
$$\cl_{3}\cong \Mat(2,\C).$$
The matrices
$$\{\sigma_0,\quad \sigma_1,\quad \sigma_2,\quad \sigma_3,\quad \ii\sigma_1,\quad \ii\sigma_2,\quad \ii\sigma_3,\quad \ii\sigma_0\}$$
constitute a basis of $\Mat(2,\C)$.
\end{ex}

\begin{ex}
Let us consider the Dirac gamma-matrices \cite{Dirac}, \cite{Dirac2}
\begin{eqnarray}
\gamma_0&=&\begin{pmatrix}1 &0 &0 & 0\cr
                  0 &1 & 0&0 \cr
                  0 &0 &-1&0 \cr
                  0 &0 &0 &-1\end{pmatrix},\quad
\gamma_1=\begin{pmatrix}0 &0 &0 & 1\cr
                  0 &0 & 1&0 \cr
                  0 &-1 &0 &0 \cr
                  -1 &0 &0 &0\end{pmatrix}\nonumber
\\
\gamma_2&=&\begin{pmatrix}0 &0 &0 & -\ii\cr
                  0 &0 & \ii&0 \cr
                  0 & \ii&0 &0 \cr
                  -\ii &0 &0 &0\end{pmatrix},\quad
\gamma_3=\begin{pmatrix}0 &0 & 1& 0\cr
                  0 &0 & 0&-1 \cr
                  -1 &0 &0 &0 \cr
                  0 &1&0 &0\end{pmatrix}.
\nonumber
\end{eqnarray}

These matrices satisfy conditions
\begin{eqnarray}
&&\gamma_a\gamma_b+\gamma_b\gamma_a=2\eta_{ab}{\bf1},\quad
a,b=0,1,2,3,\quad \eta=\|\eta_{ab}\|=\diag(1,-1,-1,-1)\nonumber\\
&&\tr \gamma_a=0,\qquad \gamma_a^\dagger=\gamma_0 \gamma_a \gamma_0,\qquad a=0, 1, 2, 3.\nonumber
\end{eqnarray}

Using the substitution $e_{a+1}\to\gamma_a,\quad a=0, 1, 2, 3$, we obtain the isomorphism
$$\C\otimes\cl_{1,3}\cong \Mat(4,\C).$$
\end{ex}

\section{Gradings and Conjugations}\label{sec:2}
\subsection{Gradings}\label{sec:2.1}

Any Clifford algebra element $U\in\cl_{p,q,r}$ has the form
$$U=ue+\sum_a u_a e_a +\sum_{a<b} u_{ab} e_{ab}+\cdots+u_{1\ldots n}e_{1\ldots n}=\sum_A u_A e_A,\quad u_A\in\R$$
where we denote arbitrary ordered multi-index by $A=a_1\ldots a_k$. Denote its length by $|A|=k$.

\begin{defi}
The following subspace
$$\cl^k_{p,q,r}=\{\sum_{|A|=k} u_A e_A\}$$
is called subspace of grade $k$.
\end{defi}

We have
$$\cl_{p,q,r}=\bigoplus_{k=0}^{n}\cl^k_{p,q,r},\qquad \dim \cl^k_{p,q,r}=C^k_n=\frac{n!}{k! (n-k)!}.$$

Let us consider \emph{projection operations onto subspaces of grade $k$}:
$$U\in\cl_{p,q,r}\to <\!\!U\!\!>_k \in \cl_{p,q,r}^k.$$
Note that for arbitrary element $U\in\cl_{p,q,r}$ we have
$$U=\sum_{k=0}^n <\!\!U\!\!>_k.$$

The Clifford algebra $\cl_{p,q,r}$ is \emph{a $Z_2$-graded algebra}. It can be represented in the form of the direct sum of \emph{even and odd subspaces}:
$$
\cl_{p,q,r}=\cl^{(0)}_{p,q,r}\oplus\cl^{(1)}_{p,q,r}$$
where
$$
\cl^{(0)}_{p,q,r}=\!\!\!\bigoplus_{k=0{\rm mod} 2}\!\!\!\cl^k_{p,q,r},\quad \cl^{(1)}_{p,q,r}=\!\!\!\bigoplus_{k=1 {\rm mod} 2}\!\!\!\cl^k_{p,q,r}.
$$
We have
$$
\cl^{(i)}_{p,q,r} \cl^{(j)}_{p,q,r}\subset \cl^{(i+j){\rm mod} 2}_{p,q,r},\quad i=0, 1
$$
and
$$
\dim \cl^{(0)}_{p,q,r}=\dim \cl^{(1)}_{p,q,r}=2^{n-1}.
$$
Note that $\cl^{(0)}_{p,q,r}$ is a subalgebra of $\cl_{p,q,r}$.

\subsection{Center of Clifford Algebra}\label{sec:2.2}

We have the following well-known theorem about \emph{the center of Clifford algebra}
$\Cen(\cl_{p,q}):=\{U\in\cl_{p,q} \semicolon UV=VU \,\mbox{for all}\, V\in\cl_{p,q}\}$.

\begin{theorem}\label{thcenter} The center of the Clifford algebra $\cl_{p,q}$ is
$$\Cen(\cl_{p,q})=\left\lbrace
\begin{array}{ll}
\cl^0_{p,q}=\{ue \semicolon u\in\R\}, & \parbox{.5\linewidth}{if $n$ is even}\\
\cl^0_{p,q}\oplus\cl^n_{p,q}=\{ue+u_{1\ldots n}e_{1\ldots n} \semicolon u, u_{1\ldots n}\in\R\}, & \parbox{.5\linewidth}{if $n$ is odd.}
\end{array}\nonumber
\right.$$
\end{theorem}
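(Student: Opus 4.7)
The plan is to reduce commutation with everything in $\cl_{p,q}$ to commutation with the generators $e_1,\ldots,e_n$, since these generate the algebra. Writing $U=\sum_A u_A e_A$ in the basis of ordered multi-indices, it then suffices, for each $a$, to compare $e_a e_A$ with $e_A e_a$ basis-element by basis-element and read off when the coefficient $u_A$ is forced to vanish.

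First I would carry out the sign computation. For $a\notin A$, pushing $e_a$ past each of the $k:=|A|$ distinct factors of $e_A$ contributes one sign flip (they all anticommute), so $e_a e_A=(-1)^k e_A e_a$. For $a\in A$, say $a=a_j$, pushing $e_a$ past $e_{a_1}\cdots e_{a_{j-1}}$ and letting the two resulting copies of $e_a$ collapse to the scalar $e_a^2=\pm e$, then comparing with the analogous reduction of $e_A e_a$, yields $e_a e_A=(-1)^{k+1} e_A e_a$. Hence $e_A e_a=\sigma(a,A)\,e_a e_A$ with $\sigma(a,A)=(-1)^k$ if $a\notin A$ and $\sigma(a,A)=(-1)^{k+1}$ if $a\in A$.

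Next I would combine these into $e_a U-U e_a=\sum_A u_A(1-\sigma(a,A))\,e_a e_A$. Because $r=0$, every generator is invertible ($e_a^2=\pm e\neq 0$), so left multiplication by $e_a$ permutes the basis up to signs and the family $\{e_a e_A\}_A$ is linearly independent. Consequently $U$ is central iff $u_A=0$ whenever $\sigma(a,A)=-1$ for some $a$. A short case analysis in $k$ finishes: if $0<k<n$ one may pick $a\in A$ (if $k$ is even) or $a\notin A$ (if $k$ is odd) to achieve $\sigma=-1$, killing $u_A$; the case $k=0$ is always central; and $k=n$ forces every $a$ to lie in $A$, giving $\sigma=(-1)^{n+1}$, which equals $+1$ precisely when $n$ is odd.

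The one step that needs a little care is the linear-independence argument licensing the coefficientwise conclusion; it uses the nondegeneracy assumption $r=0$ to rule out cancellations across different $A$'s. Everything else is sign bookkeeping, organized by the parity of $k$ and whether $a$ lies in $A$ or not.
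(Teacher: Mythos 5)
Your proof is correct, and the sign formula $\sigma(a,A)$ and the ensuing case analysis are all accurate. The route you take is organized differently from the paper's, though the underlying computation is the same. The paper first splits $U$ into its even and odd parts $U^{(0)}+U^{(1)}$, and then for each part repeatedly peels off one generator at a time by writing, say, $U^{(0)}=A^{(0)}+e_1B^{(1)}$ with $A^{(0)},B^{(1)}$ free of $e_1$, using commutation with $e_1$ to kill $B^{(1)}$ (respectively $A^{(1)}$) and iterating over $e_2,\ldots,e_n$; this drives $U^{(0)}$ to $ue$ and $U^{(1)}$ to a multiple of $e_{1\ldots n}$, which survives only when $n$ is odd. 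You instead expand $U=\sum_A u_Ae_A$ directly, record the commutation sign $e_Ae_a=\sigma(a,A)\,e_ae_A$ with $\sigma=(-1)^k$ or $(-1)^{k+1}$ according to whether $a\notin A$ or $a\in A$, and read off which coefficients survive from a case analysis in $k=|A|$. Your version makes the combinatorics fully explicit and needs no iterative peeling; the paper's version avoids writing down the sign formula by hiding it in the two-step decomposition. Both use the same essential facts: the commutation sign depends only on $|A|$ and whether $a\in A$, and the generators are invertible because $r=0$ -- a point you correctly flag as the place where nondegeneracy enters, since it is what makes $\{e_ae_A\}_A$ a basis (up to signs) and licenses the coefficientwise conclusion.
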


\begin{proof}
Let us represent element $U$ in the form
$$U=U^{(0)}+U^{(1)},\qquad U^{(i)}\in\cl^{(i)}_{p,q}, i=0, 1.
$$
We have conditions $UV=VU$ for all $V\in\cl_{p,q}$. We obtain
$$U^{(i)} e_k=e_k U^{(i)},\qquad k=1,\ldots, n,\qquad i=0, 1.$$

We represent $U^{(0)}$ in the form $U^{(0)}=A^{(0)}+e_1 B^{(1)}$, where $A^{(0)}\in\cl^{(0)}_{p,q}$ and $B^{(1)}\in\cl^{(1)}_{p,q}$ do not contain $e_1$. For $k=1$ we obtain
$$(A^{(0)}+e_1 B^{(1)})e_1=e_1 (A^{(0)}+e_1 B^{(1)}).$$
Using $A^{(0)}e_1=e_1 A^{(0)}$ and $e_1 B^{(1)}e_1=-e_1 e_1 B^{(1)}$, we obtain $B^{(1)}=0$. Acting similarly for $e_2, \ldots, e_n$, we obtain $U^{(0)}=ue$.

We represent $U^{(1)}$ in the form $U^{(1)}=A^{(1)}+e_1 B^{(0)}$, where $A^{(1)}\in\cl^{(1)}_{p,q}$ and $B^{(0)}\in\cl^{(0)}_{p,q}$ do not contain $e_1$. For $k=1$ we obtain
$$(A^{(1)}+e_1 B^{(0)})e_1=e_1 (A^{(1)}+e_1 B^{(0)}).$$
Using $A^{(1)}e_1=-e_1 A^{(1)}$ and $e_1 B^{(0)}e_1=e_1 e_1 B^{(0)}$, we obtain $A^{(1)}=0$. Acting similarly for $e_2, \ldots, e_k$, we obtain $U^{(1)}=u_{1\ldots n}e_{1\ldots n}$ in the case of odd $n$ and $U^{(1)}=0$ in the case of even $n$.
\end{proof}

\subsection{Operations of Conjugations}\label{sec:2.3}

\begin{defi} The following operation (involution) in the Clifford algebra $\cl_{p,q,r}$
$$\widehat{U}:=U|_{e_a\to -e_a},\qquad U\in\cl_{p,q,r}$$
is called grade involution or main involution.
\end{defi}

It can be verified that
$$
\widehat{U}=\sum_{k=0}^n <\!\!\widehat{U}\!\!>_k=\sum_{k=0}^n (-1)^k <\!\!U\!\!>_k.
$$
We have
$$
\widehat{\widehat{U}}=U,\quad \widehat{UV}=\widehat{U} \widehat{V},\quad \widehat{\lambda U+\mu V}=\lambda \widehat{U}+\mu\widehat{V},\quad U, V\in\cl_{p,q}, \lambda, \mu\in\R.
$$

\begin{defi} The following operation (anti-involution) in the Clifford algebra $\cl_{p,q,r}$ $$\widetilde{U}:=U|_{e_{a_1\ldots a_k}\to e_{a_k}\ldots e_{a_1}},\qquad U\in\cl_{p,q,r}$$
is called reversion.
\end{defi}

We have
$$\widetilde{U}=\sum_{k=0}^n <\!\!\widetilde{U}\!\!>_k=\sum_{k=0}^n (-1)^{\frac{k(k-1)}{2}} <\!\!U\!\!>_k$$
and
$$
\widetilde{\widetilde{U}}=U,\quad \widetilde{UV}=\widetilde{V}\widetilde{U},\quad \widetilde{\lambda U+\mu V}=\lambda \widetilde{U}+\mu \widetilde{V},\quad U, V\in\cl_{p,q}, \lambda, \mu\in \R.
$$

\begin{defi} A superposition of reversion and grade involution is called Clifford conjugation.
\end{defi}

We do not use individual notation for Clifford conjugation and use notation $\widehat{\widetilde{U}}$. The operation of Clifford conjugation corresponds to the operation of complex conjugation of complex numbers in the case $\cl_{0,1}\cong\C$ and quaternion conjugation in the case $\cl_{0,2}\cong\H$.

We have
$$\widehat{\widetilde{U}}=\sum_{k=0}^n <\!\!\widehat{\widetilde{U}}\!\!>_k=\sum_{k=0}^n (-1)^{\frac{k(k+1)}{2}} <\!\!U\!\!>_k.$$

\begin{defi} The following operation in the complexified Clifford algebra $\C\otimes\cl_{p,q}$
$$
\overline{U}:=U|_{u_{a_1\ldots a_k}\to \bar{u}_{a_1 \ldots a_k}},\qquad U\in\C\otimes\cl_{p,q,r}
$$
where we take complex conjugation of complex numbers $u_{a_1\ldots a_k}$, is called complex conjugation in Clifford algebra.
\end{defi}

We have
$$
\overline{\overline{U}}=U,\quad \overline{UV}=\overline{U}\overline{V},\quad \overline{\lambda U+\mu V}=\bar{\lambda} \overline{U}+\overline{\mu}\overline{V},\quad  U, V\in\C\otimes\cl_{p,q},  \lambda, \mu \in \C.
$$

An important operation of Hermitian conjugation in $\C\otimes\cl_{p,q}$ will be considered in Section \ref{sec:3.4}.

\subsection{Quaternion Types of Clifford Algebra Elements}\label{sec:2.4}

The operation of grade involution uniquely determines two (even and odd) subspaces of the Clifford algebra:
$$
\cl^{(j)}_{p,q,r}:=\bigoplus_{k=j {\rm mod}2}\!\!\!\cl^k_{p,q,r}=\{U\in\cl_{p,q,r} \semicolon \widehat{U}=(-1)^j U\},\,\, j=0, 1.
$$

In a similar way, operations of grade involution and reversion uniquely determine the following four subspaces. This is symbolically shown in Table \ref{tab:1}. Instead of question marks, depending on the case, the signs ``plus'' or ``minus'' should be used.

\begin{defi} The following four subspaces of $\cl_{p,q,r}$
$$
\overline{\textbf{j}}:=\bigoplus_{k=j {\rm mod}4}\cl^k_{p,q,r}=\{U\in\cl_{p,q,r} \semicolon \widehat{U}=(-1)^j U, \widetilde{U}=(-1)^{\frac{j(j-1)}{2}} U\}
$$
are called subspaces of quaternion types $j=0, 1, 2, 3$.
\end{defi}

\begin{table}[ht]
\centering
\begin{tabular}{|l|l|l|l|l|}
\hline
$\cl_{p,q,r}$ & $\overline{\textbf{0}}$ & $\overline{\textbf{1}}$ & $\overline{\textbf{2}}$ & $\overline{\textbf{3}}$ \\ \hline
  $\widehat{U}= ? U$ & + & - & + & - \\ \hline
  $\widetilde{U}= ? U$ & + & + & - & - \\
\hline
\end{tabular}
\medskip
\caption{Subspaces of quaternion types in $\cl_{p,q,r}$}\label{tab:1}
\end{table}

We have
$$
\cl_{p,q,r}=\overline{\textbf{0}}\oplus\overline{\textbf{1}}\oplus\overline{\textbf{2}} \oplus\overline{\textbf{3}}, \qquad \cl^{(0)}_{p,q,r}=\overline{\textbf{0}}\oplus\overline{\textbf{2}},\qquad
\cl^{(1)}_{p,q,r}=\overline{\textbf{1}}\oplus\overline{\textbf{3}}.$$

Grade involution, reversion, and complex conjugation uniquely determine eight subspaces of the complexified Clifford algebra. This is symbolically shown in Table \ref{tab:2}. Instead of question marks, depending on the case, the signs ``plus'' or ``minus'' should be used.

We have
$$
\C\otimes\cl_{p,q,r}=\overline{\textbf{0}}\oplus\overline{\textbf{1}}\oplus\overline{\textbf{2}} \oplus\overline{\textbf{3}}\oplus \ii \overline{\textbf{0}}\oplus \ii\overline{\textbf{1}}\oplus \ii\overline{\textbf{2}}\oplus \ii\overline{\textbf{3}}.
$$

\begin{table}[ht]
\centering
\begin{tabular}{|l|l|l|l|l|l|l|l|l|}
\hline
  $\C\otimes\cl_{p,q,r}$ & $\overline{\textbf{0}}$ & $\overline{\textbf{1}}$ & $\overline{\textbf{2}}$ & $\overline{\textbf{3}}$ & $\ii\overline{\textbf{0}}$ & $\ii\overline{\textbf{1}}$ & $\ii\overline{\textbf{2}}$ & $\ii\overline{\textbf{3}}$ \\ \hline
  $\widehat{U}= ? U$ & + & - & + & - & + & - & + & - \\ \hline
  $\widetilde{U}= ? U$ & + & + & - & - & + & + & - & -\\ \hline
  $\bar{U}= ? U$ & + & + & + &+ & - & - & - & -\\
\hline
\end{tabular}
\medskip
\caption{Subspaces of quaternion types in $\C\otimes\cl_{p,q,r}$}\label{tab:2}
\end{table}

The subspaces of quaternion types have the following dimensions
\begin{eqnarray}
&&\dim \overline{\textbf{0}}=\sum_k C_n^{4k}=2^{n-2}+2^{\frac{n-2}{2}}\cos{\frac{\pi n}{4}}\nonumber\\
&&\dim \overline{\textbf{1}}=\sum_k C_n^{4k+1}=2^{n-2}+2^{\frac{n-2}{2}}\sin{\frac{\pi n}{4}}\label{dimquat}\\
&&\dim \overline{\textbf{2}}=\sum_k C_n^{4k+2}=2^{n-2}-2^{\frac{n-2}{2}}\cos{\frac{\pi n}{4}}\nonumber\\
&&\dim \overline{\textbf{3}}=\sum_k C_n^{4k+3}=2^{n-2}-2^{\frac{n-2}{2}}\sin{\frac{\pi n}{4}}.\nonumber
\end{eqnarray}

We denote the commutator of two Clifford algebra elements $U, V$ by $[U,V]:=UV-VU$ and the anticommutator by $\{U, V\}:=UV+VU$.

\begin{theorem}\label{qt}\cite{sh1}, \cite{sh9} We have the following properties:
\begin{eqnarray}
&&[\overline{\textbf{j}}, \overline{\textbf{j}}]\subset \overline{\textbf{2}},\quad [\overline{\textbf{j}}, \overline{\textbf{2}}]\subset \overline{\textbf{j}},\quad j=0, 1, 2, 3\nonumber\\
&&[\overline{\textbf{0}}, \overline{\textbf{1}}]\subset \overline{\textbf{3}},\quad [\overline{\textbf{0}}, \overline{\textbf{3}}]\subset \overline{\textbf{1}},\quad [\overline{\textbf{1}}, \overline{\textbf{3}}]\subset \overline{\textbf{0}}\nonumber\\
&&\{\overline{\textbf{j}}, \overline{\textbf{j}}\}\subset \overline{\textbf{0}},\quad \{\overline{\textbf{j}}, \overline{\textbf{0}}\}\subset \overline{\textbf{j}},\quad j=0, 1, 2, 3\nonumber\\
&&\{\overline{\textbf{1}}, \overline{\textbf{2}}\}\subset \overline{\textbf{3}},\quad \{\overline{\textbf{2}}, \overline{\textbf{3}}\}\subset \overline{\textbf{1}},\quad \{\overline{\textbf{3}}, \overline{\textbf{1}}\}\subset \overline{\textbf{2}}.\nonumber
\end{eqnarray}
\end{theorem}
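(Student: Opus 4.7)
The plan is to use the characterization
$$\overline{\textbf{k}} = \{W \in \cl_{p,q,r} : \widehat{W}=(-1)^k W,\ \widetilde{W}=(-1)^{k(k-1)/2} W\}$$
together with the multiplicative rules $\widehat{AB}=\widehat{A}\widehat{B}$ and $\widetilde{AB}=\widetilde{B}\widetilde{A}$ already established in Section~\ref{sec:2.3}. Since grade involution and reversion are commuting involutions, each of the four possible sign combinations $(\pm,\pm)$ picks out exactly one of the subspaces $\overline{\textbf{0}},\overline{\textbf{1}},\overline{\textbf{2}},\overline{\textbf{3}}$, which is precisely the content of Table~\ref{tab:1}. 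Consequently, to show that an element $W$ lies in a specific $\overline{\textbf{k}}$ it suffices to compute $\widehat{W}$ and $\widetilde{W}$ and verify they equal the expected scalar multiples of $W$.

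Applying this to arbitrary $U\in\overline{\textbf{i}}$ and $V\in\overline{\textbf{j}}$, I would compute
$$\widehat{[U,V]}=\widehat{U}\widehat{V}-\widehat{V}\widehat{U}=(-1)^{i+j}[U,V],$$
$$\widetilde{[U,V]}=\widetilde{V}\widetilde{U}-\widetilde{U}\widetilde{V}=(-1)^{i(i-1)/2+j(j-1)/2}(VU-UV)=-(-1)^{i(i-1)/2+j(j-1)/2}[U,V],$$
and analogously
$$\widehat{\{U,V\}}=(-1)^{i+j}\{U,V\},\qquad \widetilde{\{U,V\}}=(-1)^{i(i-1)/2+j(j-1)/2}\{U,V\}.$$
Thus each commutator and each anticommutator is automatically a joint eigenvector of $\widehat{\ }$ and $\widetilde{\ }$ with signs that depend only on $i$ and $j$, so it lies in a uniquely determined quaternion subspace $\overline{\textbf{k}}$.

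The remaining work is sign bookkeeping, facilitated by the observation that $i(i-1)/2 \bmod 2$ takes the values $0,0,1,1$ for $i=0,1,2,3$. For example, in $[\overline{\textbf{j}},\overline{\textbf{j}}]$ one obtains $\widehat{[U,V]}=+[U,V]$ and $\widetilde{[U,V]}=-[U,V]$ (since $j(j-1)$ is always even), which by Table~\ref{tab:1} identifies the target as $\overline{\textbf{2}}$; in $\{\overline{\textbf{j}},\overline{\textbf{j}}\}$ both signs are $+$, giving $\overline{\textbf{0}}$; the asymmetric cases such as $[\overline{\textbf{0}},\overline{\textbf{1}}]\subset\overline{\textbf{3}}$, $[\overline{\textbf{1}},\overline{\textbf{3}}]\subset\overline{\textbf{0}}$, and $\{\overline{\textbf{1}},\overline{\textbf{2}}\}\subset\overline{\textbf{3}}$ follow by the same mechanism. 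There is no substantive obstacle, only careful enumeration; the subtlety worth highlighting is the extra minus sign in $\widetilde{[U,V]}$ arising from the antihomomorphism property $\widetilde{UV}=\widetilde{V}\widetilde{U}$, which is exactly what makes the commutator rules differ from the anticommutator rules.
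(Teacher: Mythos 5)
Your proof is correct and complete. The paper itself does not supply a proof of this theorem (it simply cites the references \cite{sh1}, \cite{sh9}), so there is nothing internal to compare against; but the route you take -- characterizing each $\overline{\textbf{k}}$ as a simultaneous eigenspace of the commuting involutions $\widehat{\ }$ and $\widetilde{\ }$, then reading the eigenvalues of $[U,V]$ and $\{U,V\}$ off the homomorphism/antihomomorphism identities -- is exactly the natural argument suggested by Table~\ref{tab:1} and the sign formulas of Section~\ref{sec:2.3}, and your sign computations (including the crucial extra $-1$ from $\widetilde{UV}=\widetilde{V}\widetilde{U}$ in the commutator case) all check out.
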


By Theorem \ref{qt}, the Clifford algebra $\cl_{p,q,r}$ is a $Z_2\times Z_2$-graded algebra w.r.t. the operation of commutator and w.r.t. the operation of anticommutator.

The notion of quaternion type was introduced by the author in the brief report \cite{sh1} and the paper \cite{sh9}. Further development of this concept is given in \cite{sh5}, \cite{sh10}, \cite{sh14}, \cite{sh19}, see also books \cite{msh3}, \cite{sh8}.

Subspaces of quaternion types are useful in different calculations (see \cite{sh5}, \cite{sh10}, \cite{sh14}). Here and below we omit the sign of the direct sum to simplify notation: $\overline{\textbf{0}}\oplus\overline{\textbf{1}}=\overline{\textbf{01}}$, $\overline{\textbf{0}}\oplus\overline{\textbf{1}}\oplus\overline{\textbf{2}}\oplus\overline{\textbf{3}}= \overline{\textbf{0123}}=\cl_{p,q}$, etc.

For example, if $U\in \overline{\textbf{k}}$ for some $k=0, 1, 2, 3$, then (see \cite{sh10})
$$U^m \in \left\lbrace
\begin{array}{ll}
\overline{\textbf{k}}, & \mbox{ if $m$ is odd}\\
\overline{\textbf{0}}, & \mbox{ if $m$ is even,}
\end{array}\nonumber
\right.\qquad \sin U \in \overline{\textbf{k}},\qquad \cos U \in \overline{\textbf{0}}.
$$
For arbitrary element $U\in\cl_{p,q}$ we have (see \cite{sh5})
$$ U\widetilde{U},\, \widetilde{U}U \in \overline{\textbf{01}},\qquad U\widehat{\widetilde{U}},\, \widehat{\widetilde{U}}U \in  \overline{\textbf{03}}.$$

Using the classification of Clifford algebra elements based on the notion of quaternion type, we present a number of Lie algebras in $\C\otimes\cl_{p,q}$ (see Section \ref{sec:6.4} and \cite{sh19}).

\section{Matrix Representations of Clifford Algebras}\label{sec:3}

\subsection{Cartan's Periodicity of 8; Central and Simple Algebras}\label{sec:3.1}

\begin{lemma}\cite{Lounesto}\label{lemma} We have the following isomorphisms of associative algebras:
\begin{eqnarray}
&&1) \cl_{p+1, q+1}\cong \Mat(2, \cl_{p,q}),\qquad 2) \cl_{p+1, q+1}\cong \cl_{p,q}\otimes \cl_{1,1}\nonumber\\
&&3) \cl_{p, q}\cong \cl_{q+1, p-1},\quad p\geq 1,\qquad 4) \cl_{p,q}\cong\cl_{p-4, q+4},\quad p\geq 4.\nonumber
\end{eqnarray}
\end{lemma}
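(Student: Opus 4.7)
The plan is to handle each isomorphism by exhibiting an explicit set of generators of the right-hand algebra inside the left, verifying that they satisfy the correct Clifford relations \(e_a e_b + e_b e_a = 2\eta_{ab} e\), and then closing the argument with a dimension count: a homomorphism out of a Clifford algebra landing in an algebra of the same dimension $2^n$ is automatically an isomorphism once the images are shown to generate the target.

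For (2), inside $\cl_{p,q}\otimes\cl_{1,1}$, let $e_a$ be the generators of $\cl_{p,q}$ and $f_1,f_2$ those of $\cl_{1,1}$ with $f_1^2=e$, $f_2^2=-e$. I propose
\[
E_a:=e_a\otimes f_1 f_2\quad(a=1,\ldots,p+q),\qquad E_{p+q+1}:=e\otimes f_1,\qquad E_{p+q+2}:=e\otimes f_2.
\]
Since $(f_1 f_2)^2=e$, the square of each $E_a$ matches that of $e_a$, while the last two generators have squares $+e$ and $-e$; anticommutation reduces to short computations in the $\cl_{1,1}$ factor. This yields $p+1$ generators of square $+e$ and $q+1$ of square $-e$. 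Isomorphism (1) then follows by combining (2) with the well-known identification $\cl_{1,1}\cong\Mat(2,\R)$ (sending $f_1,f_2$ to the obvious real $2\times 2$ matrices of squares $+I,-I$) together with the canonical isomorphism $A\otimes\Mat(2,\R)\cong\Mat(2,A)$.

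For (3), I exploit $e_1$ (with $e_1^2=e$ since $p\ge 1$) as a twist: set $f_0:=e_1$ and $f_i:=e_1 e_i$ for $i=2,\ldots,p+q$. Then $f_0^2=e$, while for $i\ge 2$ one has $f_i^2=-e_i^2$, flipping the signs of the remaining $n-1$ squares. This produces $q+1$ generators of square $+e$ and $p-1$ of square $-e$, with pairwise anticommutation immediate from $e_1 e_j=-e_j e_1$ for $j\ge 2$. For (4), I use the ``volume'' element of the first four generators: with $p\ge 4$, set $\omega:=e_1 e_2 e_3 e_4$. A direct check using $e_i^2=e$ for $i\le 4$ gives $\omega^2=e$; moreover $\omega$ anticommutes with $e_1,\ldots,e_4$ and commutes with $e_5,\ldots,e_{p+q}$. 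Putting $f_i:=\omega e_i$ for $i\le 4$ and $f_j:=e_j$ for $j\ge 5$ yields four generators of square $-e$, leaves the remaining squares untouched, and preserves pairwise anticommutation, producing signature $(p-4,q+4)$.

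The main obstacle is purely bookkeeping: keeping the signature and sign conventions aligned across the four constructions so that the Clifford relations hold on the nose. Surjectivity is immediate in each case since the constructed generators recover the original $e_a$'s (for instance, in (3), $e_1=f_0$ and $e_i=f_0 f_i$ for $i\ge 2$; in (4), $\omega=f_1 f_2 f_3 f_4$ and then $e_i=\omega f_i$ for $i\le 4$), so by the coincidence of dimensions $2^{p+q}$ in (3) and (4), and $2^{p+q+2}$ in (1) and (2), the induced Clifford homomorphisms must be isomorphisms.
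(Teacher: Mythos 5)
Your proposal is correct and takes essentially the same route as the paper: for (3) and (4) you construct, inside $\cl_{p,q}$, a new generating set obtained by twisting with $e_1$ respectively with the volume element $e_1e_2e_3e_4$, exactly as the paper does (up to swapping left/right multiplication, which only changes a sign). The one genuine difference is in (1): the paper exhibits explicit $2\times 2$ block-matrix generators over $\cl_{p,q}$, whereas you deduce (1) from (2) via $\cl_{1,1}\cong\Mat(2,\R)$ and $A\otimes\Mat(2,\R)\cong\Mat(2,A)$; and in (2) you build $\cl_{p+1,q+1}$-type generators inside the tensor product rather than locating commuting copies of $\cl_{p,q}$ and $\cl_{1,1}$ inside $\cl_{p+1,q+1}$. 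Both directions are equally valid once surjectivity and the dimension count are noted, and your surjectivity argument (recovering the original $e_a$ from the new generators) is exactly the step the paper leaves implicit, so if anything your write-up is slightly more complete.
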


\begin{proof} Let $e_1, \ldots, e_n$ be the generators of $\cl_{p,q}$ and $(e_+)^2=e$, $(e_-)^2=-e$ such that all generators $e_1, \ldots, e_n, e_+, e_-$ anticommute with each other.
\begin{enumerate}
  \item We obtain generators of $\Mat(2, \cl_{p,q})$ in the following way:
$$e_i \to \left(
            \begin{array}{cc}
              e_i & 0 \\
              0 & -e_i \\
            \end{array}
          \right),\quad i=1, \ldots, n,\quad
e_+\to \left(
         \begin{array}{cc}
           0 & e \\
           e & 0 \\
         \end{array}
       \right),\quad
e_- \to \left(
          \begin{array}{cc}
            0 & -e \\
            e & 0 \\
          \end{array}
        \right).$$
  \item $e_i e_+ e_-$, $i=1, \ldots, n$ are generators of $\cl_{p,q}$ and $e_+$, $e_-$ are generators of $\cl_{1,1}$. Each generator of $\cl_{p,q}$ commutes with each generator of $\cl_{1,1}$.
  \item $e_1$, $e_i e_1$, $i=2, \ldots, n$ are generators of $\cl_{q+1, p-1}$.
  \item $e_i e_1 e_2 e_3 e_4$, $i=1, 2, 3, 4$ and $e_j$, $j=5, \ldots, n$ are generators of $\cl_{p-4, q+4}$.
\end{enumerate}
\end{proof}

We have the following well-known theorems about isomorphisms between Clifford algebras and matrix algebras.

\begin{theorem}[Cartan 1908]\label{Cartan} We have the following isomorphism of algebras
\begin{eqnarray}
\cl_{p,q}\cong\left\lbrace
\begin{array}{ll}
\Mat(2^{\frac{n}{2}},\R), & \parbox{.5\linewidth}{ if $p-q\equiv0; 2\!\!\mod 8$}\\
\Mat(2^{\frac{n-1}{2}},\R)\oplus \Mat(2^{\frac{n-1}{2}},\R), & \parbox{.5\linewidth}{ if $p-q\equiv1\!\!\mod 8$}\\
\Mat(2^{\frac{n-1}{2}},\C), & \parbox{.5\linewidth}{ if $p-q\equiv3; 7\!\!\mod 8$}\\
\Mat(2^{\frac{n-2}{2}},\H), & \parbox{.5\linewidth}{ if $p-q\equiv4; 6\!\!\mod 8$}\\
\Mat(2^{\frac{n-3}{2}},\H)\oplus \Mat(2^{\frac{n-3}{2}},\H), & \parbox{.5\linewidth}{ if $p-q\equiv5\!\!\mod 8$.}
\end{array}\nonumber
\right.
\end{eqnarray}
\end{theorem}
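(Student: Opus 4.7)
The plan is to combine three ingredients from Lemma~\ref{lemma}: part~(1), $\cl_{p+1,q+1}\cong\Mat(2,\cl_{p,q})$, which raises $n$ by $2$ without changing $p-q$; parts~(3) and~(4), which together imply $\cl_{p,q}\cong\cl_{p\pm 8,q\mp 8}$ whenever the indices remain nonnegative, so that $\cl_{p,q}$ depends on $(p,q)$ only through $n$ and $p-q\bmod 8$; and a direct base computation for one signature in each of the eight residue classes. Standard matrix identities $\Mat(l,\Mat(m,K))\cong\Mat(lm,K)$ and $\Mat(l,A\oplus B)\cong\Mat(l,A)\oplus\Mat(l,B)$ for $K\in\{\R,\C,\H\}$ then convert iterated applications of part~(1) into the canonical forms appearing in the theorem, and the dimension check $\dim_\R\cl_{p,q}=2^n$ confirms the exponent $(n-\epsilon)/2$ in each case.

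\textbf{Base cases.} Six of the eight residues are already handled by examples above: $\cl_{0,0}\cong\R$ (residue~$0$), $\cl_{1,0}\cong\R\oplus\R$ (residue~$1$), $\cl_{2,0}\cong\Mat(2,\R)$ (residue~$2$), $\cl_{3,0}\cong\Mat(2,\C)$ (residue~$3$), $\cl_{0,2}\cong\H$ (residue~$6$), and $\cl_{0,1}\cong\C$ (residue~$7$). For residue~$5$ I would compute $\cl_{0,3}$ directly: since $n=3$ is odd the volume element $\omega=e_{123}$ is central, and a short calculation gives $\omega^2=e$; hence $\epsilon_\pm=\tfrac12(e\pm\omega)$ are orthogonal central idempotents summing to~$e$, and each $\epsilon_\pm\cl_{0,3}$ is a $4$-dimensional subalgebra isomorphic to the even part $\cl^{(0)}_{0,3}\cong\cl_{0,2}\cong\H$, so $\cl_{0,3}\cong\H\oplus\H$. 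For residue~$4$ I would exhibit an isomorphism $\cl_{0,4}\cong\Mat(2,\H)$ by sending the four generators to ${\rm diag}(\ii,-\ii)$, ${\rm diag}(\j,-\j)$, ${\rm diag}(\k,-\k)$, and the skew off-diagonal $2\times 2$ matrix with entries $1,-1$; each image squares to $-I$, the six anticommutation relations are verified by direct multiplication, and a real dimension count (both sides have dimension~$16$) forces the injection to be an isomorphism.

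\textbf{Induction and closing.} For arbitrary $(p,q)$ with $p-q\equiv r\pmod 8$, I would use the periodicity derived from Lemma~\ref{lemma}(3)--(4) to replace the chosen base $(p_0,q_0)$ by an isomorphic $\cl_{p_0',q_0'}$ with $p_0'\le p$, $q_0'\le q$ and $p_0'-q_0'\equiv r\pmod 8$; iterating part~(1) exactly $p-p_0'=q-q_0'$ times then yields $\cl_{p,q}\cong\Mat(2^{p-p_0'},\cl_{p_0',q_0'})$, and the matrix identities above rewrite this in the form stated in the theorem. The main technical obstacle is the two new base cases, especially $\cl_{0,4}\cong\Mat(2,\H)$: the central-idempotent splitting for $\cl_{0,3}$ is routine once $\omega^2=e$ is checked, but the quaternionic matrix model for $\cl_{0,4}$ demands verifying all six anticommutation relations. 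A secondary bookkeeping point is ensuring that the periodicity reductions stay in the first quadrant $p\ge 0$, $q\ge 0$; this is the reason to keep both ``endpoint'' bases available in each residue class (e.g.\ $(4,0)$ and $(0,4)$ for residue~$4$), so that the required shift never forces a negative index.
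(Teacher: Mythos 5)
Your base cases and overall strategy match the paper's (establish $\cl_{p,q}$ for small signatures, then propagate with Lemma~\ref{lemma}); the quaternionic $2\times 2$ model for $\cl_{0,4}$ and the central-idempotent splitting of $\cl_{0,3}$ are sound, and since $\cl_{0,4}$ is simple ($n$ even) the dimension count does force an isomorphism. However, your closing reduction contains a genuine gap. You propose first to replace the base $(p_0,q_0)$ by an isomorphic $\cl_{p_0',q_0'}$ using parts~(3)--(4), with $p_0'-q_0'=p-q$, $p_0'\le p$, $q_0'\le q$, and only then to apply part~(1) $p-p_0'$ times. But parts~(3) and~(4) both preserve $n$, so $p_0'+q_0'=p_0+q_0\le 4$; on the other hand $p_0',q_0'\ge 0$ and $p_0'-q_0'=p-q$ force $p_0'+q_0'\ge |p-q|$. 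The scheme therefore cannot even start for any signature with $|p-q|\ge 5$ (already $\cl_{0,5}$, $\cl_{0,7}$, $\cl_{8,0}$ are unreachable), and keeping ``both endpoint bases'' per residue does not help, since the $n$-value such a base would need grows without bound with $|p-q|$.

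The fix, implicit in the paper's row-by-row tabulation, is to interleave the two steps rather than separate them: apply part~(1) to raise $n$ while $p-q$ is fixed (e.g.\ $\cl_{4,1}\cong\Mat(2,\cl_{3,0})$) \emph{before} invoking part~(4) to shift $p-q$ (so $\cl_{4,1}\cong\cl_{0,5}$). Equivalently, induct downward on $n$: if $p,q\ge 1$ strip off part~(1); if $p=0$ and $q\ge 4$ use $\cl_{0,q}\cong\cl_{4,q-4}$; if $p\ge 4$ and $q=0$ use $\cl_{p,0}\cong\cl_{p-4,4}$; iterate until a base case is reached. Incidentally, the hand computation of $\cl_{0,4}\cong\Mat(2,\H)$ is not needed --- by parts~(4),~(3),~(1) one has $\cl_{0,4}\cong\cl_{4,0}\cong\cl_{1,3}\cong\Mat(2,\cl_{0,2})\cong\Mat(2,\H)$, which is how the paper fills that cell from its shorter base list ($n\le 3$ plus $\cl_{1,1}$) without any new explicit model.
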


\begin{proof} Using Lemma \ref{lemma}, we obtain isomorphisms for all $\cl_{p,q}$ (see Table \ref{tab:3}).
We use notations $^2\R:=\R\oplus\R,\quad \R(2):=\Mat(2,\R),\ldots$

We know isomorphisms (see Section \ref{sec:1.3})
$$\cl_{0,0}\cong \R,\quad \cl_{0,1}\cong\C,\quad \cl_{1,0}\cong\R\oplus\R,\quad \cl_{0,2}\cong\H.$$

Using the substitution
$$e \to (1,1),\quad e_1 \to (\ii, -\ii),\quad e_2 \to (\j, -\j),\quad e_3 \to (\k, -\k)$$
we obtain the isomorphism
$$\cl_{0,3}\cong \H\oplus\H.$$

Using $\cl_{p+1, q+1}\cong \Mat(2, \cl_{p,q})$, we get $\cl_{1,1}\cong \Mat(2,\R)$. Using $\cl_{p+1, q+1}\cong \cl_{p,q}\otimes \cl_{1,1}$, we conclude that if we make a step down Table \ref{tab:3} ($n\to n+2$), then the size of corresponding matrix algebra is doubled ($\Mat(k, \ldots) \to \Mat(2k, \ldots)$). Using $\cl_{p, q}\cong \cl_{q+1, p-1}$, we conclude that Table \ref{tab:3} is symmetric w.r.t. the column ``$p-q=1$''. Using $\cl_{p,q}\cong\cl_{p-4, q+4}$, we obtain the symmetry $p-q \leftrightarrow p-q-8$ for each $n$.
\end{proof}

\begin{table}[ht]
\centering
\begin{tabular}{|p{0.09\linewidth}|p{0.045\linewidth}p{0.045\linewidth} p{0.045\linewidth}p{0.045\linewidth}p{0.045\linewidth}p{0.045\linewidth}p{0.045\linewidth} p{0.045\linewidth}p{0.045\linewidth}p{0.045\linewidth}p{0.07\linewidth}|}
\hline
$n \backslash p-q$ &  $-5$ & $-4$ & $-3$ & $-2$ & $-1$ & $0$ & $1$ & $2$ & $3$ & $4$ & $5$ \\ \hline
$0$ &  $-$ & $-$ & $-$ & $-$ & $-$ & $\R$ & $-$ & $-$ & $-$ & $-$ & $-$  \\
$1$ &  $-$ & $-$ & $-$ & $-$ & $\C$ & $-$ & $^2\R$ & $-$ & $-$ & $-$ & $-$ \\
$2$ &  $-$ & $-$ & $-$ & $\H$ & $-$ & $\R(2)$ & $-$ & $\R(2)$ & $-$ & $-$ & $-$ \\
$3$ &  $-$ & $-$ & $^2\H$ & $-$ & $\C(2)$ & $-$ & $^2\R(2)$ & $-$ & $\C(2)$ & $-$ & $-$ \\
$4$ &  $-$ & $\H(2)$ & $-$ & $\H(2)$ & $-$ & $\R(4)$ & $-$ & $\R(4)$ & $-$ & $\H(2)$ & $-$ \\
$5$ &  $\C(4)$ & $-$ & $^2\H(2)$ & $-$ & $\C(4)$ & $-$ & $^2\R(4)$ & $-$ & $\C(4)$ & $-$ & $^2\H(2)$ \\
\hline
\end{tabular}
\medskip
\caption{Isomorhisms between $\cl_{p,q}$ and matrix algebras}\label{tab:3}
\end{table}

Similarly, we can obtain the following isomorphisms for complex Clifford algebras and for even subalgebras of the Clifford algebra.

\begin{theorem}\cite{Lounesto} We have the following isomorphism of algebras
\begin{eqnarray}
\cl(\C^n)\cong\C\otimes\cl_{p,q}\cong\left\lbrace
\begin{array}{ll}
\Mat(2^{\frac{n}{2}}, \C), & \parbox{.5\linewidth}{if $n$ is even}\\
\Mat(2^{\frac{n-1}{2}}, \C)\oplus \Mat(2^{\frac{n-1}{2}}, \C), & \parbox{.5\linewidth}{if $n$ is odd.}
\end{array}\nonumber
\right.
\end{eqnarray}
\end{theorem}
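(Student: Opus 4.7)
My plan is to mimic the proof of Theorem \ref{Cartan}, exploiting the fact that over $\C$ the signature disappears and the periodicity becomes $2$ rather than $8$.

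First, I would establish the first isomorphism $\cl(\C^n)\cong\C\otimes\cl_{p,q}$ for every decomposition $p+q=n$. The point is that in $\C\otimes\cl_{p,q}$ the generators $e_1,\ldots,e_p$ square to $+e$ while $e_{p+1},\ldots,e_n$ square to $-e$, but the rescaled elements $f_a:=\ii e_a$ for $a>p$ square to $+e$ and still anticommute with all other generators. Hence the $\C$-linear span of $e_1,\ldots,e_p,f_{p+1},\ldots,f_n$ generates a copy of $\cl(\C^n)$ inside $\C\otimes\cl_{p,q}$, and a dimension count (both sides have complex dimension $2^n$) gives equality. So without loss of generality I may assume $q=0$ and work with $\C\otimes\cl_{n,0}$.

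Second, I would complexify Cartan's table case by case using the standard $\R$-algebra identities
\begin{equation*}
\C\otimes_\R \R\cong\C,\qquad \C\otimes_\R \C\cong\C\oplus\C,\qquad \C\otimes_\R \H\cong\Mat(2,\C),
\end{equation*}
together with $\C\otimes_\R \Mat(k,A)\cong\Mat(k,\C\otimes_\R A)$ and the distributivity of $\C\otimes_\R(-)$ over direct sums. Running through the five branches of Theorem \ref{Cartan}: the $\R$ and $\H$ branches (occurring for $n$ even) both yield $\Mat(2^{n/2},\C)$, since $\C\otimes\Mat(2^{(n-2)/2},\H)\cong\Mat(2^{(n-2)/2},\Mat(2,\C))\cong\Mat(2^{n/2},\C)$; the $\C$ branch (for $n$ odd) gives $\Mat(2^{(n-1)/2},\C)\oplus\Mat(2^{(n-1)/2},\C)$; and the two $^2\R$, $^2\H$ branches (also $n$ odd) yield the same double by the same combination of identities. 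So all even cases collapse to a single full matrix algebra and all odd cases collapse to a sum of two equal full matrix algebras, which is exactly the claim.

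Alternatively, and perhaps more self-contained, I would run an induction on $n$ using the complex analogue of Lemma \ref{lemma}, namely $\cl(\C^{n+2})\cong\Mat(2,\cl(\C^n))$, proved by the same block-matrix substitution as in part (1) of Lemma \ref{lemma} (all generators may be taken to square to $+e$ over $\C$). The base cases are $\cl(\C^0)\cong\C$, which is immediate, and $\cl(\C^1)\cong\C\oplus\C$, obtained from the central orthogonal idempotents $\frac{1}{2}(e\pm e_1)$ made available by $e_1^2=e$. The inductive step then doubles the matrix size each time $n$ increases by $2$ while preserving the even-or-odd dichotomy, giving the stated formula. The only delicate point is the odd base case: over $\R$ the analogous splitting requires $p-q\equiv 1\pmod 4$, but over $\C$ the squares can always be arranged to be $+e$, so the idempotent construction always works. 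This is the ``main obstacle'' that simply evaporates in the complex setting and is what allows the $8$-periodicity of the real case to collapse to the $2$-periodicity here.
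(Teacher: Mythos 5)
Your proof is correct, and your second (``self-contained'') route is essentially the argument the paper intends: the paper states the theorem immediately after ``Similarly, we can obtain the following isomorphisms$\ldots$'', i.e.\ it expects the reader to re-run the machinery of Lemma \ref{lemma} and Theorem \ref{Cartan} over $\C$, which is exactly what your inductive argument $\cl(\C^{n+2})\cong\Mat(2,\cl(\C^n))$ plus the base cases $\cl(\C^0)\cong\C$, $\cl(\C^1)\cong\C\oplus\C$ does. You also supply the first isomorphism $\cl(\C^n)\cong\C\otimes\cl_{p,q}$ via the rescaling $f_a=\ii e_a$ together with a dimension count, which the paper leaves tacit; that is a genuine and welcome addition, since otherwise the equality of $\cl(\C^n)$ and $\C\otimes\cl_{p,q}$ is just asserted.

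Your first route, complexifying each of the five branches of Theorem \ref{Cartan} using $\C\otimes_\R\R\cong\C$, $\C\otimes_\R\C\cong\C\oplus\C$, $\C\otimes_\R\H\cong\Mat(2,\C)$, and $\C\otimes_\R\Mat(k,A)\cong\Mat(k,\C\otimes_\R A)$, is a different but equally sound derivation. It buys you the theorem ``for free'' as a corollary of the real classification you have already proved, at the cost of invoking a few standard ground-field-extension identities; the inductive route is more self-contained and makes the collapse from $8$-periodicity to $2$-periodicity transparent at the level of generators (one never has to introduce a generator squaring to $-e$). Your remark about the odd base case -- that the central idempotent $\frac{1}{2}(e\pm e_1)$ is always available over $\C$ because $e_1^2$ can always be arranged to equal $e$, whereas over $\R$ this required $p-q\equiv 1\pmod 4$ -- pinpoints exactly where the real theory's case distinction evaporates, and is a sharper explanation than the paper offers.
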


\begin{theorem}\cite{Lounesto}\label{thEven} We have the following isomorphism of algebras
$$1) \cl^{(0)}_{p,q}\cong \cl_{p, q-1},\quad q\geq 1;\qquad 2) \cl^{(0)}_{p,q}\cong \cl_{q, p-1},\quad p\geq 1;\qquad 3)\cl^{(0)}_{p,q}\cong \cl^{(0)}_{q,p}.$$
\end{theorem}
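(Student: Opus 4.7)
The plan is to realize each target Clifford algebra as the image of an explicit algebra homomorphism sitting inside $\cl^{(0)}_{p,q}$, and then upgrade to an isomorphism by comparing dimensions. Throughout set $n=p+q$ and let $e_1,\ldots,e_n$ denote the standard generators of $\cl_{p,q}$. Note that $\dim \cl^{(0)}_{p,q}=2^{n-1}=\dim\cl_{p,q-1}=\dim\cl_{q,p-1}$, so dimension matching will be automatic.

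For (1), since $q\ge 1$ I may use the generator $e_n$ with $e_n^2=-e$, and set
$$f_i:=e_i e_n,\qquad i=1,\ldots,n-1.$$
Each $f_i$ lies in $\cl^{(0)}_{p,q}$. Using $e_i e_n=-e_n e_i$, a direct computation gives $f_i^2=-e_i^2 e_n^2=e_i^2$, so $f_i^2=+e$ for $i\le p$ and $f_i^2=-e$ for $p<i<n$; likewise $f_i f_j+f_j f_i=-(e_i e_j+e_j e_i)e_n^2=0$ for $i\ne j$. Thus the $f_i$ satisfy the defining relations of $\cl_{p,q-1}$ (with $p$ generators squaring to $+e$ and $q-1$ to $-e$), which by Definition \ref{def:2} supplies an algebra homomorphism $\varphi:\cl_{p,q-1}\to\cl^{(0)}_{p,q}$. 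For (2), since $p\ge 1$ I replace $e_n$ by $e_1$ (with $e_1^2=+e$) and set $g_i:=e_i e_1$ for $i=2,\ldots,n$; the analogous calculation yields $g_i^2=-e_i^2$, producing $q$ generators of square $+e$ and $p-1$ of square $-e$, hence a homomorphism $\psi:\cl_{q,p-1}\to\cl^{(0)}_{p,q}$.

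To upgrade $\varphi$ to an isomorphism, I would check surjectivity and invoke the dimension count. Any even basis element $e_A$ with $|A|=2k$ is recovered from the $f_i$'s by sign bookkeeping: if $n\notin A$ then pulling the $e_n$'s to the right in $f_{a_1}\cdots f_{a_{2k}}$ gives $\pm e_A\cdot e_n^{2k}=\pm e_A$, and if $n\in A$ with $A=(a_1,\ldots,a_{2k-1},n)$ the odd product $f_{a_1}\cdots f_{a_{2k-1}}=\pm e_{a_1}\cdots e_{a_{2k-1}}e_n^{2k-1}=\pm e_A$. Hence the $f_i$'s generate $\cl^{(0)}_{p,q}$ as an algebra and $\varphi$ is surjective, so by dimension it is an isomorphism. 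Part (2) is handled identically with $e_1$ playing the role of $e_n$.

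For (3), I would assemble the two previous isomorphisms. When $p\ge 1$, part (2) applied to $\cl^{(0)}_{p,q}$ and part (1) applied to $\cl^{(0)}_{q,p}$ (with the roles of $p,q$ interchanged) both yield $\cl_{q,p-1}$, so $\cl^{(0)}_{p,q}\cong\cl_{q,p-1}\cong\cl^{(0)}_{q,p}$; the case $p=0$, $q\ge 1$ is symmetric (apply (1) on the left and (2) on the right to land in $\cl_{0,q-1}$), and $p=q=0$ is trivial. The main thing that needs care is not a conceptual obstacle but just the sign bookkeeping in the surjectivity argument, where the value of $e_n^{2k}$ depends on $e_n^2=-e$; everything else reduces to matching Clifford relations and counting dimensions.
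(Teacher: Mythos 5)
Your proposal is correct and follows essentially the same route as the paper: both realize the even subalgebra via products $e_i e_k$ with a distinguished generator $e_k$ (the paper uses $e_p$ where you use $e_1$ for part (2), an inessential difference of ordering), and both obtain (3) by chaining (1) and (2). The surjectivity and dimension-count details you include merely fill in what the paper states without elaboration.
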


\begin{proof}
 Let $e_1, \ldots, e_{n}$ be the generators of $\cl_{p,q}$.
 \begin{enumerate}
   \item Then $e_i e_n$, $i=1, \ldots, n-1$ are generators of $\cl^{(0)}_{p, q}$.
   \item Then $\left\{
                 \begin{array}{ll}
                   e_{p+i}e_p, & \hbox{$i=1, \ldots, q$} \\
                   e_{j-q} e_p, & \hbox{$j=q+1, \ldots, n-1$}
                 \end{array}
               \right.$
are generators of $\cl^{(0)}_{p,q}$.
   \item Using 1) and 2), we get 3).
 \end{enumerate}
 \end{proof}

\begin{defi}
An algebra is simple if it contains no non-trivial two-sided ideals and the multiplication operation is not zero.
\end{defi}

\begin{defi}
A central simple algebra over a field $\FF$ is a finite-dimensional associative algebra, which is simple, and for which the center is exactly $\FF$.
\end{defi}

The following classification of Clifford algebras can be found in \cite{Chevalley}.
\begin{itemize}
  \item If $n$ is even, then $\cl(V,Q)$ is a central simple algebra.
  \item If $n$ is odd and $\FF=\C$, then $\cl(V,Q)$ is the direct sum of two isomorphic complex central simple algebras.
  \item If $n$ is odd, $\FF=\R$, and $(e_{1\ldots n})^2=e$, then $\cl(V,Q)$ is the direct sum of two isomorphic simple algebras.
  \item If $n$ is odd, $\FF=\R$, and $(e_{1\ldots n})^2=-e$, then $\cl(V,Q)$ is simple with center $\cong\C$.
\end{itemize}

Note that
$$
(e_{1\ldots n})^2=(-1)^{q+\frac{n(n-1)}{2}}e=\left\{
                                                 \begin{array}{ll}
                                                   e, & \hbox{if $p-q=0, 1\mod 4$} \\
                                                   -e, & \hbox{if $p-q=2, 3\mod 4$}
                                                 \end{array}
                                               \right.
$$
and these results agree with Theorem \ref{Cartan}.

\subsection{Clifford Trigonometry Circle and Exterior Signature of Clifford Algebra}\label{sec:3.2}

In the literature, the Cartan's periodicity of 8 is depicted in the form of a eight-hour clock (see Table \ref{tab:4}). The clockwise movement by one step corresponds to an increase of $p-q$ by 1. The clock shows that two Clifford algebra with the same $p-q\mod 8$ are Morita equivalent (see Theorem \ref{Cartan}).

\begin{table}[ht]
\centering
\begin{tabular}{|lllllllll|} \hline
& &   & & 0    &  &  & &\\
&7 &   & & $\R$ &  &  &  1 &\\
& & $^2\R$ & &  &  & $\C$ & &  \\
& &   & &  &  &  &  &\\
6&$\R$  &   & &  &  &  &$\H$ & 2  \\
 &  &   & &  &  &  & & \\
  &  & $\C$  & &  &  & $^2\H$  &  &\\
   & 5 &   & & $\H$ &  &  & 3 &\\
   &  &   & &  4   &  &  &  & \\ \hline
\end{tabular}
\medskip
\caption{Clifford clock or Clifford trigonometry circle}\label{tab:4}
\end{table}

In our opinion, it is more correct to call it not Clifford clock, but \emph{Clifford trigonometry circle}.
The algebras on the clock are in one-to-one correspondence with the values of the function $\sin\frac{\pi(p-q+1)}{4}$. To show this, we do the following calculations (see also \cite{msh3}).

Let $P$ be the number of basis elements $e_A$ of the Clifford algebra $\cl_{p,q}$ such that $(e_A)^2=e$ and $Q$ be the number of basis elements $e_A$ of $\cl_{p,q}$ such that $(e_A)^2=-e$. We call $(P,Q)$ \emph{exterior signature of Clifford algebra $\cl_{p,q}$}, $P+Q=2^n$.

\begin{theorem} We have
$$P=2^{\frac{n-1}{2}}(2^{\frac{n-1}{2}}+\sin\frac{\pi(p-q+1)}{4}),\qquad Q=2^{\frac{n-1}{2}}(2^{\frac{n-1}{2}}-\sin\frac{\pi(p-q+1)}{4}).$$
\end{theorem}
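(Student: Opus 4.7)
The plan is to express $P-Q$ as an alternating sum over basis elements and evaluate it by a generating-function trick using complex numbers, then combine with the obvious identity $P+Q=2^n$ to isolate $P$ and $Q$.

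First I would compute $(e_A)^2$ for a generic basis element $e_A = e_{a_1\ldots a_k}$. Writing $\widetilde{e_A} = (-1)^{k(k-1)/2} e_A$ and noting that $e_A \widetilde{e_A} = e_{a_1}^2 e_{a_2}^2 \cdots e_{a_k}^2 = \prod_{i=1}^k \eta_{a_i a_i}$, I get
\begin{equation*}
(e_A)^2 = (-1)^{k(k-1)/2}\prod_{i=1}^{k}\eta_{a_i a_i} = (-1)^{k(k-1)/2+k_2},
\end{equation*}
where $k_2$ is the number of indices of $A$ lying in $\{p+1,\ldots,p+q\}$. Splitting $k=k_1+k_2$ with $k_1$ indices drawn from the $p$ positive slots and $k_2$ from the $q$ negative slots, the number of such basis elements is $\binom{p}{k_1}\binom{q}{k_2}$. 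Therefore
\begin{equation*}
P-Q=\sum_{k=0}^{n}(-1)^{k(k-1)/2}\sum_{k_1+k_2=k}\binom{p}{k_1}\binom{q}{k_2}(-1)^{k_2}
=\sum_{k=0}^{n}(-1)^{k(k-1)/2}\,a_k,
\end{equation*}
where $a_k=[t^k](1+t)^p(1-t)^q$.

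The next step is to recognise the sign $s(k):=(-1)^{k(k-1)/2}$ as $\cos(k\pi/2)+\sin(k\pi/2)$, which one checks directly on the residues $k\bmod 4$. Equivalently, $s(k)=\mathrm{Re}(\ii^k)+\mathrm{Im}(\ii^k)$. This turns the alternating sum into the evaluation of the generating polynomial at $t=\ii$:
\begin{equation*}
P-Q=\mathrm{Re}\bigl((1+\ii)^p(1-\ii)^q\bigr)+\mathrm{Im}\bigl((1+\ii)^p(1-\ii)^q\bigr).
\end{equation*}
Using $1\pm\ii=\sqrt{2}\,e^{\pm\ii\pi/4}$ gives $(1+\ii)^p(1-\ii)^q=2^{n/2}e^{\ii\pi(p-q)/4}$, so
\begin{equation*}
P-Q=2^{n/2}\bigl(\cos\tfrac{\pi(p-q)}{4}+\sin\tfrac{\pi(p-q)}{4}\bigr)
=2^{(n+1)/2}\sin\tfrac{\pi(p-q+1)}{4}.
\end{equation*}

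Finally, combining this with $P+Q=2^n$ yields the stated formulas for $P$ and $Q$. I do not expect a serious obstacle here: the only non-routine step is spotting the identity $s(k)=\mathrm{Re}(\ii^k)+\mathrm{Im}(\ii^k)$, after which everything reduces to elementary algebra with $1\pm\ii$ in polar form. A sanity check in small cases (e.g.\ $\cl_{0,0}$, $\cl_{0,1}$, $\cl_{0,2}\cong\H$, where one expects $(P,Q)=(1,0),(1,1),(1,3)$) confirms the formula.
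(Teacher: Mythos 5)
The paper states this theorem without proof (only the remark $\sum_A (e_A)^2 = P-Q = 2^{\frac{n+1}{2}}\sin\frac{\pi(p-q+1)}{4}$ is given, with a pointer to \cite{msh3}), so there is no in-text argument to compare against. Your proposal is correct and self-contained: the computation $(e_A)^2 = (-1)^{k(k-1)/2 + k_2}e$ is right, the double sum over $(k_1,k_2)$ is exactly the coefficient extraction from $(1+t)^p(1-t)^q$, the sign identity $(-1)^{k(k-1)/2} = \mathrm{Re}(\ii^k)+\mathrm{Im}(\ii^k)$ checks out on $k\bmod 4$, and the polar evaluation $(1+\ii)^p(1-\ii)^q = 2^{n/2}e^{\ii\pi(p-q)/4}$ together with $\cos\theta+\sin\theta=\sqrt{2}\sin(\theta+\tfrac{\pi}{4})$ reproduces the paper's stated $P-Q$, after which $P+Q=2^n$ finishes the job. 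The sanity checks match.
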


Note that
$$\sum_{A}(e_A)^2=P-Q=2^{\frac{n+1}{2}}\sin\frac{\pi(p-q+1)}{4}.$$

Finally, we obtain the following theorem.
\begin{theorem} Two Clifford algebras $\cl_{p_1, q_1}$ and $\cl_{p_2, q_2}$ are isomorphic if and only if their exterior signatures coincide $(P_1, Q_1)=(P_2, Q_2)$ (or, equivalently, $P_1-Q_1=P_2-Q_2$).
\end{theorem}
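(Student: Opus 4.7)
The plan is to combine Theorem~\ref{Cartan} (Cartan's periodicity of 8) with the explicit formula $P - Q = 2^{(n+1)/2}\sin\frac{\pi(p-q+1)}{4}$ established in the preceding theorem. The strategy is to show that for fixed $n$, the quantity $P-Q$ is a complete invariant separating the five isomorphism classes in Cartan's theorem, while the full pair $(P,Q)$ also recovers $n$ via the identity $P+Q=2^n$.

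First I would observe that any algebra isomorphism $\cl_{p_1,q_1}\cong\cl_{p_2,q_2}$ forces $2^{n_1}=\dim_\R \cl_{p_1,q_1}=\dim_\R \cl_{p_2,q_2}=2^{n_2}$, so $n_1=n_2$; conversely, $(P_1,Q_1)=(P_2,Q_2)$ immediately yields $2^{n_1}=P_1+Q_1=P_2+Q_2=2^{n_2}$. In either direction one may assume a common dimension $n$, after which the two conditions $(P_1,Q_1)=(P_2,Q_2)$ and $P_1-Q_1=P_2-Q_2$ become equivalent via $P+Q=2^n$, justifying the ``or equivalently'' clause in the statement.

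Next I would tabulate $\sin\frac{\pi(k+1)}{4}$ for $k\equiv 0,1,\ldots,7 \mod 8$, obtaining the values $\frac{\sqrt{2}}{2},\,1,\,\frac{\sqrt{2}}{2},\,0,\,-\frac{\sqrt{2}}{2},\,-1,\,-\frac{\sqrt{2}}{2},\,0$. The crucial observation is that the partition of residues induced by equal sine values, namely $\{0,2\},\{1\},\{3,7\},\{4,6\},\{5\}$, matches \emph{exactly} the grouping of $p-q\mod 8$ appearing in Theorem~\ref{Cartan}, which distinguishes the five matrix-algebra types built over $\R$, $\R\oplus\R$, $\C$, $\H$, and $\H\oplus\H$. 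Hence, for fixed $n$, the five possible values of $P-Q$ sit in bijection with the five Cartan isomorphism classes.

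The theorem then follows quickly. If $\cl_{p_1,q_1}\cong\cl_{p_2,q_2}$, Theorem~\ref{Cartan} places $p_1-q_1$ and $p_2-q_2$ in the same residue group modulo $8$, and the sine table forces $P_1-Q_1=P_2-Q_2$; combined with the dimension argument this yields $(P_1,Q_1)=(P_2,Q_2)$. Conversely, $(P_1,Q_1)=(P_2,Q_2)$ fixes both $n$ and $P-Q$, which by the bijection identifies a unique Cartan class, so Theorem~\ref{Cartan} supplies the isomorphism. The only step requiring genuine attention is verifying that the sine-value partition matches Cartan's grouping; once that case-check is tabulated, the rest is routine bookkeeping.
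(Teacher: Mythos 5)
Your proposal is correct and matches the paper's intended argument: the paper gives no explicit proof for this theorem, but immediately precedes it with the formula $P-Q=2^{(n+1)/2}\sin\frac{\pi(p-q+1)}{4}$ and Theorem~\ref{Cartan}, and the intended reasoning is exactly your tabulation showing that the partition of residues $\{0,2\},\{1\},\{3,7\},\{4,6\},\{5\}$ induced by equal sine values coincides with Cartan's grouping, together with $P+Q=2^n$ to recover the dimension. The only point worth flagging is that the ``or equivalently'' parenthetical in the statement is literally true only once a common $n$ has been fixed (e.g., $\cl_{3,0}$ and $\cl_{4,1}$ both have $P-Q=0$ but are not isomorphic), and that the ``only if'' direction implicitly uses that the five Cartan matrix-algebra types are pairwise non-isomorphic for fixed $n$ — both of which you handle or acknowledge appropriately.
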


\subsection{Trace, Determinant and Inverse of Clifford Algebra Elements}\label{sec:3.3}

\begin{defi} The following projection operation onto subspace $\C\otimes\cl^0_{p,q}$
$$\Tr(U):=<\!U\!>|_{e \to 1}$$
is called a trace of Clifford algebra element.
\end{defi}

We have
$$\Tr(U)=u,\qquad U=ue+\sum_a u_a e_a+\cdots+u_{1\ldots n}e_{1\ldots n}.$$

\begin{theorem} \cite{sh7}, \cite{msh3} The operation of trace of Clifford algebra elements has the following properties
\begin{eqnarray}
&&\Tr(U+V)=\Tr(U)+\Tr(V),\quad \Tr(\lambda U)=\lambda \Tr(U),\quad \Tr(UV)=\Tr(VU)\nonumber\\
&&\Tr(UVW)=\Tr(VWU)=\Tr(WUV),\quad U,V,W\in\C\otimes\cl_{p,q}, \lambda\in\C\nonumber\\
&&\Tr(U^{-1}VU)=\Tr(V),\quad \Tr(U)=\Tr(\hat{U})=\Tr(\tilde{U})=\overline{\Tr{\bar{U}}}.\nonumber
\end{eqnarray}
\end{theorem}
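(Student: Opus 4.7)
The plan is to verify each property directly from the definition $\Tr(U) = \langle U\rangle_0|_{e\to 1}$, namely that $\Tr$ extracts the coefficient of the identity element.

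First, linearity ($\Tr(U+V)=\Tr(U)+\Tr(V)$ and $\Tr(\lambda U) = \lambda\Tr(U)$) is immediate because the projection $U\mapsto \langle U\rangle_0$ is a linear map and the substitution $e\to 1$ is linear. Next I would tackle the key property $\Tr(UV)=\Tr(VU)$. Writing $U=\sum_A u_A e_A$ and $V=\sum_B v_B e_B$, it suffices to analyze $\Tr(e_A e_B)$ for basis elements. By the Clifford multiplication rules, $e_A e_B = \pm e_C$ for some ordered multi-index $C$, and $C=\emptyset$ occurs precisely when $B=A$, in which case $e_A e_A = (e_A)^2 \in \{\pm e, 0\}$ (the $0$ case can arise only when some generator squares to $0$, i.e.\ in the degenerate setting). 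Crucially, $e_B e_A$ in the matching case $A=B$ yields the same $(e_A)^2$, so $\Tr(e_A e_B)=\Tr(e_B e_A)$ basis-term by basis-term. If $A\neq B$, both sides vanish because the product sits in a strictly positive grade. Summing over $A,B$ gives $\Tr(UV)=\Tr(VU)$.

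The three-element cyclic identity $\Tr(UVW)=\Tr(VWU)=\Tr(WUV)$ then follows immediately by grouping: $\Tr((UV)W)=\Tr(W(UV))=\Tr(WUV)$, and applying the same trick to $\Tr(U(VW))$. The conjugation invariance $\Tr(U^{-1}VU)=\Tr(V)$ follows from cyclicity, reading $\Tr(U^{-1}VU)=\Tr((U^{-1})(VU))=\Tr((VU)U^{-1})=\Tr(V)$, which requires only that $U$ be invertible so that $U^{-1}$ exists in $\C\otimes\cl_{p,q}$.

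For the involution identities, I use the grade-decomposition formulas recorded earlier: $\widehat{U}=\sum_k (-1)^k\langle U\rangle_k$ and $\widetilde{U}=\sum_k (-1)^{k(k-1)/2}\langle U\rangle_k$. In both cases the coefficient on $\langle U\rangle_0$ is $+1$, so $\langle \widehat U\rangle_0 = \langle \widetilde U\rangle_0 = \langle U\rangle_0$ and therefore $\Tr(U)=\Tr(\widehat U)=\Tr(\widetilde U)$. Finally, since $\overline{U}=\sum_A \bar u_A e_A$ by definition, its scalar coefficient is $\bar u$, so $\Tr(\bar U)=\bar u$ and $\overline{\Tr(\bar U)}=u=\Tr(U)$.

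I expect no serious obstacle: every item reduces to the single observation that $e_A e_B$ contributes to grade $0$ only when $A=B$, together with the behaviour of the three involutions on grade $0$. The only mild care is needed in the degenerate case $r>0$, where some $(e_A)^2$ may vanish, but this does not affect the identities $\Tr(UV)=\Tr(VU)$ and the rest, since those zeros appear symmetrically on both sides.
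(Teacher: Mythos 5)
Your proof is correct, and it uses the natural argument from the definition $\Tr(U)=\langle U\rangle_0|_{e\to1}$. The paper itself does not prove this theorem (it simply cites \cite{sh7} and \cite{msh3}), so there is no internal proof to compare against, but your key observation is exactly the right one: in $\cl_{p,q}$ one has $e_Ae_B=\pm e_{A\triangle B}$, so $\langle e_Ae_B\rangle_0\neq0$ only when $A=B$, in which case $e_Ae_B=e_Be_A=(e_A)^2$; linearity then gives $\Tr(UV)=\Tr(VU)$, and the cyclicity, conjugation invariance, and behaviour under $\widehat{\ }$, $\widetilde{\ }$, $\overline{\ }$ all follow as you say. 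One small remark: since the theorem is stated for $\C\otimes\cl_{p,q}$ (nondegenerate, $r=0$), the aside about the degenerate case $(e_A)^2=0$ is not needed here and would actually require extra care elsewhere (the last identity involves $U^{-1}$, and invertibility is more delicate when $r>0$), so it is cleaner to omit it.
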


We have the following relation with the trace of matrices.
\begin{theorem}\cite{sh7}, \cite{msh3} We have
$$\Tr(U)=\frac{1}{2^{[\frac{n+1}{2}]}}\tr(\gamma(U))$$
where
$$\gamma: \C\otimes\cl_{p,q}\to\left\{
                     \begin{array}{ll}
                       \Mat(2^{\frac{n}{2}},\C), & \hbox{if $n$ is even} \\
                       \Mat(2^{\frac{n-1}{2}},\C)\oplus\Mat(2^{\frac{n-1}{2}},\C), & \hbox{if $n$ is odd}
                     \end{array}
                   \right.$$
is faithful matrix representation of $\C\otimes\cl_{p,q}$ (of minimal dimension).
\end{theorem}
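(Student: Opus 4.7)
The plan is to exploit linearity: both sides of the claimed equality are $\C$-linear in $U$, so it suffices to verify the identity on the basis $\{e_A\}$ of $\C\otimes\cl_{p,q}$, indexed by subsets $A\subseteq\{1,\ldots,n\}$.

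First I would handle the scalar basis element. By definition $\Tr(e)=1$, while $\gamma(e)$ is the identity of the target algebra, whose matrix trace is $2^{n/2}$ for even $n$ and $2\cdot 2^{(n-1)/2}=2^{(n+1)/2}$ for odd $n$. Since $[\tfrac{n+1}{2}]=\tfrac{n}{2}$ for even $n$ and $\tfrac{n+1}{2}$ for odd $n$, in each case $\tr(\gamma(e))=2^{[(n+1)/2]}$, so the formula holds on $e$.

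Next, for a non-scalar basis element $e_A$ with $1\leq|A|\leq n$ we have $\Tr(e_A)=0$, and the remaining task is to show $\tr(\gamma(e_A))=0$. The standard device is conjugation invariance of the matrix trace: if some generator $e_k$ satisfies $e_k e_A e_k^{-1}=-e_A$, then
$$\tr(\gamma(e_A))=\tr\bigl(\gamma(e_k)\gamma(e_A)\gamma(e_k)^{-1}\bigr)=-\tr(\gamma(e_A)),$$
forcing $\tr(\gamma(e_A))=0$. A direct count of anticommutations shows that $e_k e_A=-e_A e_k$ whenever $k\notin A$ with $|A|$ odd, or $k\in A$ with $|A|$ even; such a $k$ exists in every case except $|A|=n$ with $n$ odd, since then $e_{1\ldots n}$ is central by Theorem \ref{thcenter}.

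The main obstacle is this exceptional case: $A=\{1,\ldots,n\}$ with $n$ odd. Here I would invoke the structure of the minimal faithful representation, which realises $\C\otimes\cl_{p,q}$ as $\Mat(2^{(n-1)/2},\C)\oplus\Mat(2^{(n-1)/2},\C)$. A central element of this direct sum must be scalar in each block, so $\gamma(e_{1\ldots n})=(\alpha I,\beta I)$ for some $\alpha,\beta\in\C$. From $(e_{1\ldots n})^2=\pm e$ we get $\alpha^2=\beta^2$; and because $\{e,e_{1\ldots n}\}$ is a basis of the centre (Theorem \ref{thcenter}), faithfulness of $\gamma$ forces $\gamma(e)=(I,I)$ and $\gamma(e_{1\ldots n})$ to be linearly independent, hence $\alpha\neq\beta$ and therefore $\alpha=-\beta$. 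Consequently
$$\tr(\gamma(e_{1\ldots n}))=(\alpha+\beta)\,2^{(n-1)/2}=0,$$
matching $\Tr(e_{1\ldots n})=0$ and completing the verification on all basis elements.
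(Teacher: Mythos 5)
Your proof is correct. Since the paper cites external references \cite{sh7}, \cite{msh3} rather than reproducing an argument, there is no internal proof to compare against, but your approach is the natural one: both sides are $\C$-linear, so it suffices to check basis elements $e_A$; the scalar $e$ works out by comparing the sizes of the minimal faithful representation in the even and odd cases; non-scalar $e_A$ of grade strictly between $0$ and $n$ (and also $e_{1\ldots n}$ for $n$ even) have traceless image by conjugating with a suitable anticommuting generator $e_k$ and invoking conjugation-invariance of the matrix trace; and the exceptional central volume element $e_{1\ldots n}$ for odd $n$ is handled by observing that the minimal faithful representation is an algebra isomorphism onto $\Mat(2^{\frac{n-1}{2}},\C)\oplus\Mat(2^{\frac{n-1}{2}},\C)$, so its image must be $(\alpha I,\beta I)$ with $\alpha^2=\beta^2$, and injectivity on the two-dimensional centre forces $\alpha=-\beta$, killing the trace. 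All the individual steps check out, including the parity count showing that the required $e_k$ exists in every case except $|A|=n$ with $n$ odd, and the use of Theorem \ref{thcenter} to justify centrality of $e_{1\ldots n}$.
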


\begin{defi}\label{defidet} The determinant of any faithful matrix representation (of minimal dimension) of element $U$ is called the determinant of Clifford algebra element $U\in\C\otimes\cl_{p,q}$.
\end{defi}

\begin{theorem}\cite{sh7}, \cite{msh3}
Definition \ref{defidet} is correct: the determinant does not depend on the choice of matrix representation.
\end{theorem}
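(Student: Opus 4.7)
The plan is to combine the complex Cartan isomorphism stated just before the theorem with the Skolem--Noether theorem (whose Clifford-algebra incarnation is the Pauli theorem discussed elsewhere in the paper). A minimal-dimensional faithful representation of $\C\otimes\cl_{p,q}$ is, for even $n$, an algebra isomorphism onto $\Mat(2^{n/2},\C)$; for odd $n$, a faithful embedding into $\Mat(2^{(n+1)/2},\C)$ whose image is isomorphic to $\Mat(2^{(n-1)/2},\C)\oplus\Mat(2^{(n-1)/2},\C)$ and, in a suitable basis, consists of block-diagonal pairs $(M_1,M_2)$ with determinant $\det(M_1)\det(M_2)$.

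For even $n$, given two such representations $\gamma_1,\gamma_2$, the composition $\gamma_2\circ\gamma_1^{-1}$ is an algebra automorphism of the central simple algebra $\Mat(2^{n/2},\C)$; by Skolem--Noether it is inner, $\gamma_2(U)=T\gamma_1(U)T^{-1}$ for some invertible $T$, and therefore $\det(\gamma_2(U))=\det(\gamma_1(U))$. For odd $n$, the algebra $\C\otimes\cl_{p,q}$ has two intrinsic minimal two-sided ideals $A_1,A_2$ (the images of the central idempotents built from $e_{1\ldots n}$, which lies in the center by Theorem~\ref{thcenter}). On each simple factor $A_k\cong\Mat(2^{(n-1)/2},\C)$ any two minimal faithful representations differ by an inner automorphism (Skolem--Noether again), which preserves the determinant of the block. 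The only remaining ambiguity is which summand is labelled first, but this is irrelevant because the product $\det(M_1)\det(M_2)$ is symmetric in its two factors.

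The main obstacle is a clean invocation of Skolem--Noether; after that, the verification that $\det(\gamma(U))=\det(M_1)\det(M_2)$ for any such minimal faithful representation reduces to the well-known facts that the matrix determinant is basis-independent and multiplicative on block-diagonal matrices. One could alternatively bypass Skolem--Noether altogether and appeal directly to the Pauli theorem proved later in the paper: it gives an intertwiner $T$ already at the level of the generators, extends multiplicatively to the whole algebra, and thereby equalises the determinants on the nose.
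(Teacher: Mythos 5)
The paper itself does not prove this theorem; it simply cites \cite{sh7} and \cite{msh3}, so there is no in-text argument to compare against. On the merits, your proof is correct and uses the natural tools. The even case via Skolem--Noether is clean: a minimal faithful representation is an isomorphism onto $\Mat(2^{n/2},\C)$, two of them differ by an automorphism of that central simple algebra, and inner automorphisms preserve the determinant. For odd $n$, what you really want to say is that any minimal faithful representation of $\C\otimes\cl_{p,q}\cong\Mat(m,\C)\oplus\Mat(m,\C)$ (with $m=2^{(n-1)/2}$) has composition factors consisting of each of the two inequivalent simple modules exactly once, so by semisimple representation theory all such representations are conjugate, and the determinant is preserved; your phrasing in terms of the central idempotents of $e_{1\ldots n}$, block-diagonalisation, Skolem--Noether on each factor, and the symmetry of $\det(M_1)\det(M_2)$ under swapping the two blocks unpacks to exactly this.

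One caveat worth flagging about your suggested shortcut through the Pauli theorem for odd $n$: Theorem~\ref{thPauli2} only gives an intertwiner up to a sign, i.e.\ either $\gamma'_a = T^{-1}\gamma_a T$ or $\gamma'_a = -T^{-1}\gamma_a T$. In the second case one obtains $\gamma'(U)=T^{-1}\gamma(\widehat{U})T$, and since the grade involution is \emph{not} inner when $n$ is odd, equality of determinants does not come ``on the nose''; you still need the observation that the grade involution swaps the two simple ideals (because $\widehat{e_{1\ldots n}}=-e_{1\ldots n}$) and therefore merely permutes the two blocks of the representation, which leaves the product of their determinants unchanged. So the Pauli route does not actually bypass the block-permutation argument — it relocates it. The Skolem--Noether route as you presented it, with the explicit appeal to the symmetry of $\det(M_1)\det(M_2)$, already handles this correctly.
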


\begin{theorem}\cite{sh7}, \cite{msh3} The determinant of Clifford algebra elements has the following properties:
\begin{itemize}
  \item We have
\begin{eqnarray}
&&\Det(UV)=\Det(U)\Det(V),\quad \Det(\lambda U)=\lambda^{2^{[\frac{n+1}{2}]}} \Det(U)\nonumber\\
&&\Det(U)=\Det(\widehat{U})=\Det(\widetilde{U})=\overline{\Det(\overline{U})},\quad U, V\in\C\otimes\cl_{p,q}, \lambda\in\C.\nonumber
\end{eqnarray}
  \item $U^{-1}\in \C\otimes\cl_{p,q}$ exists if and only if $\Det U\neq 0$.
  \item If $U^{-1}$ exists, then
  $$\Det (U^{-1}) =(\Det U)^{-1},\, \Det(U^{-1}VU)=\Det(V),\,  V\in\C\otimes\cl_{p,q}.$$
\end{itemize}
\end{theorem}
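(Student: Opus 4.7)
The plan is to transfer each assertion to standard facts about matrix determinants via a faithful matrix representation $\gamma$ of minimal dimension, invoking the preceding independence theorem whenever one needs to compare two such representations.

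Multiplicativity comes essentially for free: $\gamma(UV)=\gamma(U)\gamma(V)$, and the classical determinant is multiplicative on each block of $\Mat(2^{n/2},\C)$ (even $n$) or of $\Mat(2^{(n-1)/2},\C)\oplus\Mat(2^{(n-1)/2},\C)$ (odd $n$), with the convention $\det(A,B):=\det A\cdot\det B$ in the reducible case. The scaling formula then follows by counting the total linear size of the target: $2^{n/2}$ for even $n$ and $2\cdot 2^{(n-1)/2}=2^{(n+1)/2}$ for odd $n$, which in both cases equals $2^{[(n+1)/2]}$, matching the stated exponent of $\lambda$.

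The three identities $\Det(U)=\Det(\widehat U)=\Det(\widetilde U)=\overline{\Det(\overline U)}$ are the core of the statement, and I would handle each by exhibiting a new faithful matrix representation and appealing to the independence theorem. Since grade involution is an algebra automorphism, $\gamma_1(U):=\gamma(\widehat U)$ is again a homomorphism, faithful, and $\det\gamma_1(U)=\Det(\widehat U)$; by independence of representation this must equal $\Det(U)$. Since reversion is an anti-automorphism, I would compose with transpose and take $\gamma_2(U):=\gamma(\widetilde U)^{T}$, which restores the correct order of multiplication while leaving the determinant untouched. Since complex conjugation is a conjugate-linear automorphism, the map $\gamma_3(U):=\overline{\gamma(\overline U)}$ is a genuine $\C$-algebra homomorphism (the two scalar conjugations cancel), is faithful, and yields $\det\gamma_3(U)=\overline{\det\gamma(\overline U)}=\overline{\Det(\overline U)}$.

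The remaining claims are formal corollaries. Because $\gamma$ is a bijection onto its target algebra, $U\in\C\otimes\cl_{p,q}$ is invertible iff $\gamma(U)$ is invertible in the matrix algebra, equivalently iff the (product of the) block determinants is nonzero. Combining multiplicativity with $\Det(e)=1$ (since $\gamma(e)$ is the identity) delivers $\Det(U^{-1})=(\Det U)^{-1}$, and then $\Det(U^{-1}VU)=\Det(U^{-1})\Det(V)\Det(U)=\Det(V)$ follows by one further application. The one genuinely delicate point, and where routine bookkeeping is most likely to trip one up, is the odd-$n$ case: the target decomposes as a direct sum and automorphisms may permute the two factors, so one must fix once and for all the convention that the determinant on the direct sum is the product of block determinants. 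With that convention the determinant is symmetric under such a swap, and the identities together with the scaling exponent $2^{[(n+1)/2]}$ all come out consistently.
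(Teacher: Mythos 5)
Your proof is correct, and all the pieces check out: multiplicativity and the scaling exponent $2^{[\frac{n+1}{2}]}$ follow immediately from the block structure of the target algebra (with the product-of-blocks convention for odd $n$); the maps $\gamma_1(U)=\gamma(\widehat U)$, $\gamma_2(U)=\gamma(\widetilde U)^{\T}$, and $\gamma_3(U)=\overline{\gamma(\overline U)}$ are indeed faithful $\C$-algebra homomorphisms of the same minimal dimension (the transpose undoing the anti-homomorphism, the outer conjugation undoing the antilinearity), so the preceding well-definedness theorem delivers the three conjugation identities; and the invertibility criterion plus $\Det(U^{-1})=(\Det U)^{-1}$ and $\Det(U^{-1}VU)=\Det(V)$ are formal corollaries as you say. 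The paper itself gives no proof of this theorem in the text — it is cited to \cite{sh7} and \cite{msh3} — so no line-by-line comparison is possible, but since the determinant is defined through a faithful matrix representation of minimal dimension, any proof is forced through the matrix picture, and your route via the independence theorem is the standard one those references follow. Your remark on the odd-$n$ case is the right point of caution: $\widehat{\,\cdot\,}$ is not inner there (it negates the central element $e_{1\ldots n}$) and genuinely swaps the two matrix blocks, so the symmetry $\det(A,B)=\det(B,A)$ is what makes the convention, and hence the independence theorem, hold.
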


\begin{theorem} \cite{sh7}, \cite{msh3} We have the following explicit formulas for the determinant and the inverse of Clifford algebra element $U\in\C\otimes\cl_{p,q}$ in the cases $n=1, \ldots, 5$:
\begin{equation}
\Det\,U=\left\lbrace
\begin{array}{ll}
U|_{e\to 1}, & n=0\\
U \widehat{U}|_{e\to 1}, & n=1\\
U \widehat{\widetilde{U}}|_{e\to 1}, & n=2\\
U \widetilde{U} \widehat{U} \widehat{\widetilde{U}}|_{e\to 1}= U \widehat{\widetilde{U}} \widehat{U} \widetilde{U}|_{e\to 1}, & n=3\\
U \widetilde{U} (\widehat{U} \widehat{\widetilde{U}})^\bigtriangledown|_{e\to 1}= U \widehat{\widetilde{U}} (\widehat{U} \widetilde{U})^\bigtriangledown|_{e\to 1}, & n=4\\
U \widetilde{U} (\widehat{U} \widehat{\widetilde{U}})^\bigtriangledown (U \widetilde{U} (\widehat{U} \widehat{\widetilde{U}})^\bigtriangledown)^\bigtriangleup|_{e\to 1}, & n=5
\end{array}
\right.\nonumber
\end{equation}
\begin{equation}
(U)^{-1}=\frac{1}{\Det\,U}\left\lbrace
\begin{array}{ll}
e, & n=0\\
\widehat{U}, & n=1\\
\widehat{\widetilde{U}}, & n=2\\
\widetilde{U} \widehat{U} \widehat{\widetilde{U}} \quad(\mbox{or}\quad \widehat{\widetilde{U}} \widehat{U} \widetilde{U}), & n=3\\
\widetilde{U} (\widehat{U} \widehat{\widetilde{U}})^\bigtriangledown \quad(\mbox{or}\quad \widehat{\widetilde{U}}(\widehat{U} \widetilde{U})^\bigtriangledown), & n=4\\
\widetilde{U} (\widehat{U} \widehat{\widetilde{U}})^\bigtriangledown (U \widetilde{U} (\widehat{U} \widehat{\widetilde{U}})^\bigtriangledown)^\bigtriangleup, & n=5
\end{array}
\right.\nonumber
\end{equation}
where $U^\bigtriangledown=U|_{<U>_4\to -<U>_4, <U>_5 \to -<U>_5}$ and $U^\bigtriangleup=U|_{<U>_5 \to -<U>_5}$.
\end{theorem}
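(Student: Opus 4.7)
The plan is a direct case-by-case verification for $n = 0, 1, 2, 3, 4, 5$, coupled with the standard multiplicative property of $\Det$. For each $n$, the two claims---determinant formula and inverse formula---are two sides of the same identity: if I can show the right-hand side of the determinant formula lies in the scalar subspace $\C \otimes \cl^0_{p,q}$ (so $|_{e \to 1}$ extracts a genuine number $\lambda(U)$), and that the product has the form $U \cdot N(U) = \lambda(U) e$ with $N(U)$ as listed in the inverse formula, then $U^{-1} = N(U)/\lambda(U)$ whenever $\lambda(U) \ne 0$. The identification $\lambda(U) = \Det U$ is made afterwards using multiplicativity.

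For $n \le 2$ the scalar-ness is immediate: a direct expansion for $n = 1$ gives $U \widehat U = u^2 - u_1^2 e_1^2 \in \cl^0$, and for $n = 2$ the product $U \widehat{\widetilde U}$ already lies in $\overline{\textbf{03}}$ by a result recorded earlier in the paper; since $\overline{\textbf{3}}$ has no grade-$\le 2$ components, we land in $\overline{\textbf{0}} = \cl^0$. For $n = 3$, set $V := U\widetilde U \in \overline{\textbf{01}} = \cl^0 \oplus \cl^1$ (using again the stated result together with the fact that in dimension $3$ there are no grades $4, 5$). Write $V = a + v$ with $a$ scalar and $v$ a vector; then $\widehat V = a - v$, and since any vector squares to a scalar, $V \widehat V = a^2 - v^2 \in \cl^0$. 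Finally $V \widehat V = U \widetilde U \cdot \widehat{U\widetilde U} = U \widetilde U \, \widehat U \, \widehat{\widetilde U}$, which is exactly the formula.

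For $n = 4$ I set $V := U\widetilde U$ and $W := \widehat U \widehat{\widetilde U} = \widehat V$. Both lie in $\overline{\textbf{01}}$, which now equals $\cl^0 \oplus \cl^1 \oplus \cl^4$. Decomposing $V = V_0 + V_1 + V_4$, applying $\widehat{\ }$ and $\bigtriangledown$ gives $(\widehat V)^\bigtriangledown = V_0 - V_1 - V_4$. In dimension $4$ the pseudoscalar $e_{1234}$ anticommutes with all vectors, so $\{V_1, V_4\} = 0$; scalars commute with everything; and both a vector and a pseudoscalar multiple square to a scalar. Expanding, $V (\widehat V)^\bigtriangledown = V_0^2 - V_1^2 - V_4^2$ therefore lies in $\cl^0$. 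For $n = 5$ the argument has two layers: by a similar expansion, the inner block $A := U\widetilde U (\widehat U \widehat{\widetilde U})^\bigtriangledown$ lands in $\cl^0 \oplus \cl^5$ (not in $\cl^0$ directly, because the pseudoscalar $e_{12345}$ is central in odd dimension and produces a grade-$5$ residue); here I use that distinct grade-$4$ basis elements in $n=5$ share exactly $3$ indices, so their anticommutator vanishes, forcing $V_4^2 \in \cl^0$. Then applying the outer $\bigtriangleup$, which flips the grade-$5$ sign, and using centrality of $\cl^5$ in odd dimension, one gets $A \, A^\bigtriangleup = A_0^2 - A_5^2 \in \cl^0$.

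To identify the extracted scalar with $\Det U$, I use that $\Det$ is multiplicative and that $U \cdot N(U) = \lambda(U) e$ implies $\Det(U) \cdot \Det(N(U)) = \lambda(U)^{2^{[(n+1)/2]}}$; combining this with $\lambda(e) = \Det(e) = 1$ and the observation that $\lambda(U)$ and $\Det(U)$ are both polynomials of degree $2^{[(n+1)/2]}$ in the real components of $U$, one deduces $\lambda(U) = \Det(U)$ as polynomials. The main obstacle is the grade bookkeeping in cases $n = 4$ and $n = 5$, specifically verifying that the interaction of the grade-$1$ and grade-$4$ components via $\{\cdot,\cdot\}$ vanishes (for $n=4$) or factors entirely through the centre (for $n=5$); once these two facts are in place, the computation closes routinely.
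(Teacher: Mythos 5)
The paper states this theorem without proof (only citations), so I evaluate your argument on its own. Your structural computation is correct and well organized: for each $n$ you correctly identify the quaternion-type subspace where $U\widetilde U$ (respectively $U\widehat{\widetilde U}$) lives, and the grade bookkeeping for $n=4,5$ is right --- in particular you correctly note that $\{V_1,V_4\}$ vanishes outright when $n=4$ because the pseudoscalar anticommutes with vectors, whereas for $n=5$ it survives but lands in the central subspace $\cl^5_{p,q}$, so the inner block $A$ falls in $\cl^0_{p,q}\oplus\cl^5_{p,q}$ and the outer factor $A^{\bigtriangleup}$ closes the loop. This establishes that each displayed expression is a scalar $\lambda(U)e$ and that the right-hand side of the inverse formula satisfies $U\cdot N(U)=\lambda(U)e$, hence is $U^{-1}$ whenever $\lambda(U)\neq 0$.

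The gap is in the last step, the identification $\lambda(U)=\Det(U)$. Degree counting together with $\lambda(e)=\Det(e)=1$ does \emph{not} force two polynomials to coincide. The relation $\Det(U)\,\Det(N(U))=\lambda(U)^{d}$ (with $d=2^{[(n+1)/2]}$) only closes if you can independently evaluate $\Det(N(U))$; for $n\le 3$ you can, because $N(U)$ is a product of the conjugates $\widehat U,\widetilde U,\widehat{\widetilde U}$, each of which preserves $\Det$ by the properties listed earlier. But for $n=4,5$ the operations $\bigtriangledown$ and $\bigtriangleup$ enter, and the paper records no statement that $\Det$ is invariant under them, so $\Det(N(U))=\Det(U)^{d-1}$ is no longer automatic. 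Even when it does hold, $\Det(U)^{d}=\lambda(U)^{d}$ only gives $\lambda=\zeta\,\Det$ for a $d$-th root of unity $\zeta$, which needs an extra word (the quotient is a rational function whose $d$-th power is $1$, hence constant, then normalize). A clean fix that works uniformly: observe from $UN(U)=\lambda(U)e$ that $\lambda(U)=0$ exactly when $\Det(U)=0$ (one direction is immediate, the other uses that $N(U)\neq0$ whenever $U$ is invertible, which your formulas give since $\widehat U,\widetilde U,\widehat{\widetilde U}$ and hence $(\cdot)^{\bigtriangledown},(\cdot)^{\bigtriangleup}$ of invertible elements are invertible, resp.\ nonzero). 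Then invoke that on $\Mat(d,\C)$ (resp.\ $\Mat(d',\C)\oplus\Mat(d',\C)$) the polynomial $\Det$ is irreducible (resp.\ a product of two distinct irreducibles) of total degree $d$, so any degree-$d$ polynomial vanishing on the same locus is a constant multiple of it; normalization at $e$ gives the constant $1$. Without some argument of this type, or a direct verification through an explicit matrix representation, the proof does not close.
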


Note that we can introduce the notions of trace and determinant of elements of the real Clifford algebra $\cl_{p,q}$. These operations have similar properties (see \cite{sh7}, \cite{msh3}).

\subsection{Unitary Space on Clifford Algebra}\label{sec:3.4}

\begin{theorem}\cite{msh1}, \cite{msh3} The operation $U, V\in\C\otimes\cl_n \to(U,V):=\Tr(\bar{\widetilde{U}}V)$ is a Hermitian (or Euclidian) scalar product on $\C\otimes\cl_{n}$ (or $\cl_n$ respectively).
\end{theorem}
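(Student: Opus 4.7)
The plan is to verify the three defining axioms of a Hermitian scalar product in turn: sesquilinearity, conjugate symmetry, and positive definiteness. The first two are formal manipulations using only the properties of $\Tr$, reversion, and complex conjugation that have already been listed in Sections \ref{sec:2.3} and \ref{sec:3.3}. The third is the main calculation and is where the Euclidean signature is essential.

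For sesquilinearity, I would note that $V \mapsto \bar{\widetilde{U}} V$ is $\C$-linear by the distributive laws in $\C\otimes\cl_n$, and composing with the $\C$-linear operation $\Tr$ gives linearity of $(U,V)$ in $V$. For the first argument, since reversion is $\R$-linear with $\widetilde{\lambda U}=\lambda\widetilde{U}$ and complex conjugation satisfies $\overline{\lambda U}=\bar\lambda\,\overline{U}$, we get $\bar{\widetilde{\lambda U+\mu W}}=\bar\lambda\,\bar{\widetilde{U}}+\bar\mu\,\bar{\widetilde{W}}$, yielding conjugate linearity in $U$.

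For conjugate symmetry I would chain the identities from Section \ref{sec:3.3}: $\Tr(X)=\overline{\Tr(\bar X)}$, $\Tr(X)=\Tr(\widetilde X)$, and $\Tr(XY)=\Tr(YX)$, together with $\widetilde{XY}=\widetilde Y\widetilde X$ and the fact that reversion commutes with complex conjugation (they act on disjoint data, the basis elements and the scalar coefficients respectively). Then
$$\overline{(U,V)}=\overline{\Tr(\bar{\widetilde U}V)}=\Tr(\widetilde U\bar V)=\Tr(\widetilde{\widetilde U\bar V})=\Tr(\bar{\widetilde V}\widetilde{\widetilde U})=\Tr(\bar{\widetilde V}U)=(V,U).$$

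The substantive step, and the one I expect to be the main obstacle, is positive definiteness. I would expand $U=\sum_A u_A e_A$ in the standard basis, giving
$$\bar{\widetilde U}\,U=\sum_{A,B}\bar u_A u_B\,\widetilde{e_A}\,e_B.$$
Since $\Tr$ projects onto the $e$-component, only those pairs $(A,B)$ for which $\widetilde{e_A}\,e_B$ is a nonzero multiple of $e$ contribute. For $A\ne B$ the product $\widetilde{e_A}\,e_B$ is $\pm e_C$ with $C$ the symmetric difference of $A$ and $B$, hence nonempty, and contributes nothing. For $A=B$, I would write $\widetilde{e_A}=e_{a_k}\cdots e_{a_1}$ and cancel the central $e_{a_1}^2$, then $e_{a_2}^2$, and so on. This is precisely where the Euclidean hypothesis enters: in $\cl_n$ each $e_{a_i}^2=+e$, so every cancellation contributes $+1$ and $\widetilde{e_A}\,e_A=e$. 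Therefore
$$(U,U)=\Tr(\bar{\widetilde U}\,U)=\sum_A \bar u_A u_A=\sum_A |u_A|^2\geq 0,$$
with equality iff every $u_A=0$, i.e.\ $U=0$. The real case $U\in\cl_n$ is identical with complex conjugation trivial, giving $(U,U)=\sum_A u_A^2$ and the Euclidean property. The key point worth emphasizing is that the identity $\widetilde{e_A}\,e_A=e$ fails in indefinite or degenerate signatures: a factor $(-1)$ would appear for every $a_i$ with $e_{a_i}^2=-e$, and a factor $0$ for each null generator, so positive definiteness is genuinely specific to $\cl_n$.
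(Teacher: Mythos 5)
Your proof is correct and follows essentially the same approach as the paper: reduce positive definiteness to orthonormality of the standard basis $\{e_A\}$ under the form, using $\widetilde{e_A}\,e_A=e$ which holds because $e_a^2=+e$ in Euclidean signature. You spell out the sesquilinearity and conjugate-symmetry checks and the role of the Euclidean hypothesis more explicitly than the paper does, but the substance of the argument is the same.
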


\begin{proof}
We must verify
\begin{eqnarray}
&&(U,V)=\overline{(V,U)},\quad (U,\lambda V)=\lambda (U,V),\quad (U,V+W)=(U,V)+(U,W)\nonumber\\
&&(U,U)\geq 0,\qquad (U,U)=0 \Leftrightarrow U=0\label{uu}
\end{eqnarray}
for all $U, V, W\in\C\otimes\cl_{p,q}$, $\lambda\in \C$. To prove (\ref{uu}) it is sufficient to prove that the basis of $\C\otimes\cl_{p,q}$ is orthonormal:
$$(e_{i_1\ldots i_k}, e_{j_1 \ldots j_l})=\Tr(e_{i_k}\cdots e_{i_1}e_{j_1}\cdots e_{j_l})=
\left\{
                                            \begin{array}{ll}
                                              1, & \hbox{if $(i_1, \ldots, i_k)=(j_1, \ldots, j_l)$} \\
                                              0, & \hbox{if $(i_1, \ldots, i_k)\neq(j_1, \ldots, j_l)$.}
                                            \end{array}
                                          \right.
$$
We have $$(U,U)=\sum_A u_A \overline{u_A}=\sum_A |u_A|^2\geq 0.$$
The theorem is proved.
\end{proof}

\begin{defi}\label{defHerm} Let us consider the following operation of Hermitian conjugation in Clifford algebra:
\begin{eqnarray}U^\dagger:=U|_{e_{a_1\ldots a_k} \to e_{a_1\ldots a_k}^{-1},\, u_{a_1\ldots a_k}\to \bar{u}_{a_1 \ldots a_k}},\qquad U\in\C\otimes\cl_{p,q}.\nonumber
\end{eqnarray}
\end{defi}

This operation has the following properties:
\begin{eqnarray}
&&U^{\dagger\dagger}=U,\qquad (UV)^\dagger=V^\dagger U^\dagger,\qquad (\lambda U+\mu V)^\dagger=\bar{\lambda} U^\dagger+\bar{\mu} V^\dagger\nonumber\\
&& U, V\in\C\otimes\cl_{p,q},\qquad \lambda, \mu\in\C.\nonumber
\end{eqnarray}

\begin{theorem} \cite{msh1}, \cite{msh3} The operation $U, V\in\C\otimes\cl_n \to(U,V):=\Tr(U^\dagger V)$ is a Hermitian (or Euclidian) scalar product on $\C\otimes\cl_{p,q}$ (or $\cl_{p,q}$ respectively).
\end{theorem}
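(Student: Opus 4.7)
The plan is to mirror the structure of the previous theorem, verifying the four defining properties of a Hermitian scalar product: conjugate symmetry $(U,V)=\overline{(V,U)}$, linearity in the second slot, antilinearity in the first slot, and positive definiteness $(U,U)\geq 0$ with equality iff $U=0$. Each of the first three follows routinely from the already-listed algebraic properties of the Hermitian conjugation $\dagger$ and of the trace $\Tr$, so the real content lives in positive definiteness.

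For conjugate symmetry, I would first note the identity $\overline{\Tr(W)}=\Tr(W^\dagger)$, which holds because $\Tr$ extracts the coefficient of $e$ and $(\lambda e)^\dagger=\bar\lambda e$. Combining this with $(AB)^\dagger=B^\dagger A^\dagger$ and $U^{\dagger\dagger}=U$ gives
$$\overline{(U,V)}=\overline{\Tr(U^\dagger V)}=\Tr((U^\dagger V)^\dagger)=\Tr(V^\dagger U)=(V,U).$$
Linearity in $V$ is immediate from $\Tr$ being linear, and antilinearity in $U$ follows from $(\lambda U+\mu V)^\dagger=\bar\lambda U^\dagger+\bar\mu V^\dagger$.

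The heart of the proof is positive definiteness, and the main step is to show that the standard basis $\{e_A\}$ is orthonormal with respect to this pairing. By the definition of $\dagger$ we have $e_A^\dagger=e_A^{-1}$ (the coefficient is $1$ and its complex conjugate is $1$), so
$$(e_A,e_B)=\Tr(e_A^{-1}e_B).$$
When $A=B$ the product is $e$ and the trace equals $1$. When $A\neq B$, the product $e_A^{-1}e_B$ is of the form $\pm e_C$ for some ordered multi-index $C$, and I would argue $C\neq\emptyset$: indeed $e_A^{-1}e_B=\pm e$ would force $e_B=\pm e_A$, and distinct basis elements are never proportional. Hence $\Tr(e_A^{-1}e_B)=0$, so $(e_A,e_B)=\delta_{AB}$. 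Expanding $U=\sum_A u_A e_A$ and using sesquilinearity then gives
$$(U,U)=\sum_{A,B}\bar u_A u_B (e_A,e_B)=\sum_A|u_A|^2\geq 0,$$
with equality if and only if all coefficients vanish, i.e., $U=0$.

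The main potential obstacle is the orthonormality computation: unlike in the Euclidean case $\cl_n$, here $e_a^{-1}$ may equal $-e_a$, so one must be careful that the sign factors introduced by inverting $e_A$ do not disturb the argument. They do not, because the claim only requires $e_A^{-1}e_B$ to be a scalar multiple of some $e_C$, with $C=\emptyset$ precisely when $A=B$; the sign is irrelevant since the trace only tests whether the coefficient of $e$ is nonzero. Once this is in hand, the rest of the verification is formal.
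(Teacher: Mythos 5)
Your proposal is correct and follows essentially the same approach as the paper: the paper refers back to the previous theorem's verification of the scalar-product axioms and records only the diagonal orthonormality calculation $(e_{i_1\cdots i_k},e_{i_1\cdots i_k})=\Tr(e)=1$, while you additionally spell out the off-diagonal case ($e_A^{-1}e_B=\pm e_C$ with $C\neq\emptyset$) and the conjugate-symmetry check $\overline{\Tr(W)}=\Tr(W^\dagger)$, which the paper leaves implicit. These are worthwhile details to make explicit, but the argument is the same.
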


\begin{proof}
The proof is similar to the proof of the previous theorem. Now we have
$(e_{i_1}\cdots e_{i_k}, e_{i_1}\cdots e_{i_k})=\Tr(e_{i_k}^{-1}\ldots e_{i_1}^{-1}e_{i_1}\cdots e_{i_k})=\Tr(e)=1$.
\end{proof}

Note that the Hermitian conjugation in the case of the real Clifford algebra $\cl_{p,q}$ is called transposition anti-involution. It is considered in \cite{Abl1}, \cite{Abl2}, \cite{Abl3} in more details.

We have the following relation between the Hermitian conjugation of Clifford algebra elements and the Hermitian conjugation of matrices.

\begin{theorem} \cite{msh1}, \cite{msh3}
We have \,$\gamma(U^\dagger)=(\gamma(U))^\dagger$,\, where
$$\gamma: \C\otimes\cl_{p,q}\to\left\{
                     \begin{array}{ll}
                       \Mat(2^{\frac{n}{2}},\C), & \hbox{if $n$ is even} \\
                       \Mat(2^{\frac{n-1}{2}},\C)\oplus\Mat(2^{\frac{n-1}{2}},\C), & \hbox{if $n$ is odd}
                     \end{array}
                   \right.$$
is faithful matrix representation of $\C\otimes\cl_{p,q}$ such that  $(\gamma(e_a))^{-1}=(\gamma(e_a))^\dagger.$
\end{theorem}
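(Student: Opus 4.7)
The plan is to reduce the identity $\gamma(U^\dagger)=\gamma(U)^\dagger$ to the generator level by exploiting $\C$-antilinearity, and then use the hypothesis $\gamma(e_a)^{-1}=\gamma(e_a)^\dagger$ together with the fact that both inversion and matrix Hermitian conjugation reverse the order of a product.

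First I would observe that the map $U\mapsto U^\dagger$ on $\C\otimes\cl_{p,q}$ is complex antilinear, since by Definition \ref{defHerm} the scalar coefficients are complex-conjugated while the basis monomials are sent to their inverses. The map $A\mapsto A^\dagger$ on matrices is also complex antilinear. Since $\gamma$ is a $\C$-linear algebra homomorphism, both sides of $\gamma(U^\dagger)=\gamma(U)^\dagger$ are $\C$-antilinear in $U$, so it suffices to verify the identity on the real basis elements $e_{a_1\ldots a_k}$.

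Next I would lift the hypothesis from the generators $e_a$ to an arbitrary basis monomial $e_{a_1\ldots a_k}=e_{a_1}\cdots e_{a_k}$. By definition $e_{a_1\ldots a_k}^\dagger=e_{a_1\ldots a_k}^{-1}$, so applying $\gamma$ and using that $\gamma$ is multiplicative,
\[
\gamma(e_{a_1\ldots a_k}^\dagger)=\bigl(\gamma(e_{a_1})\cdots\gamma(e_{a_k})\bigr)^{-1}=\gamma(e_{a_k})^{-1}\cdots\gamma(e_{a_1})^{-1}.
\]
Now substituting $\gamma(e_{a_i})^{-1}=\gamma(e_{a_i})^\dagger$ on each factor and using the order-reversing property of the matrix dagger,
\[
\gamma(e_{a_k})^\dagger\cdots\gamma(e_{a_1})^\dagger=\bigl(\gamma(e_{a_1})\cdots\gamma(e_{a_k})\bigr)^\dagger=\gamma(e_{a_1\ldots a_k})^\dagger,
\]
which is exactly the required identity on basis elements. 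Combining this with antilinearity, for $U=\sum_A u_A e_A$ one gets
\[
\gamma(U^\dagger)=\sum_A\bar u_A\,\gamma(e_A^{-1})=\sum_A\bar u_A\,\gamma(e_A)^\dagger=\Bigl(\sum_A u_A\gamma(e_A)\Bigr)^\dagger=\gamma(U)^\dagger.
\]

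The only real subtlety is the bookkeeping in the middle step: the identity works because two reversals (inversion of a Clifford product and Hermitian conjugation of a matrix product) cancel. Everything else is formal algebra. No appeal to the particular construction of $\gamma$ is required beyond its being a faithful $\C$-algebra homomorphism satisfying $\gamma(e_a)^{-1}=\gamma(e_a)^\dagger$ on generators, which is exactly the hypothesis of the theorem.
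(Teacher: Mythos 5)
Your proof is correct. The paper itself cites this theorem from \cite{msh1}, \cite{msh3} without giving a proof here, so there is no in-text argument to match against; the closest thing is the lemma in Section \ref{sec:3.5} asserting $\gamma(U^\dagger)=(\gamma(U))^\dagger$ for the representation built from a minimal left ideal, and that lemma is proved by a quite different route: one expresses the matrix entries as $\gamma(U)^k_l=(\tau^k,U\tau_l)$ with respect to the Hermitian scalar product $\Tr(A^\dagger B)$, uses the adjoint property $(A,UB)=(AU^\dagger,B)$, and conjugates/transposes to land on $\gamma(U^\dagger)$. Your argument is more elementary and more general. It never invokes the scalar product or the left-ideal construction of $\gamma$; it works for any faithful $\C$-linear representation satisfying the generator condition $\gamma(e_a)^{-1}=\gamma(e_a)^\dagger$, which is precisely the stated hypothesis. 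The two observations doing the work are exactly the right ones: both $U\mapsto U^\dagger$ and $A\mapsto A^\dagger$ are antilinear and both reverse products, so once the identity holds on generators it propagates to basis monomials (the two order reversals cancel) and then by antilinearity to all of $\C\otimes\cl_{p,q}$. One small point you could make explicit: the step $\gamma(e_A^{-1})=\gamma(e_A)^{-1}$ uses that $\gamma(e)$ is the identity matrix, which is part of $\gamma$ being a unital algebra homomorphism. The paper's scalar-product argument, by contrast, is tied to the specific construction of $\gamma$ via a primitive idempotent, but it simultaneously establishes that the generator condition holds for that particular $\gamma$, which your argument takes as given.
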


Let us consider the following Lie group in Clifford algebra
\begin{eqnarray}
\U\cl_{p,q}:=\{U\in \C\otimes\cl_{p,q} \semicolon U^\dagger U=e\}\cong \left\{
\begin{array}{ll} \U(2^{\frac{n}{2}}), & \hbox{if $n$ is even} \\
\U(2^{\frac{n-1}{2}})\oplus\U(2^{\frac{n-1}{2}}), & \hbox{if $n$ is odd.}
\end{array}\right.\nonumber
\end{eqnarray}
We call it \emph{unitary group in Clifford algebra} \cite{msh1}, \cite{msh3}. All basis elements of Clifford algebra lie in this group by definition $e_{a_1 \ldots a_k}\in\U\cl_{p,q}$. The corresponding Lie algebra is
$$
\u\cl_{p,q}:=\{U\in \C\otimes\cl_{p,q} \semicolon U^\dagger=-U\}.
$$

\begin{theorem}\label{thHerm} \cite{msh1}, \cite{msh3} We have the following formulas which can be considered as another (equivalent to Definition \ref{defHerm}) definition of Hermitian conjugation:
\begin{eqnarray}
\!\!\!U^\dagger=\left\{
            \begin{array}{ll}
             \!\! (e_{1\ldots p})^{-1} \overline{\widetilde{U}} e_{1\ldots p}, & \hbox{if $p$ is odd} \\
             \!\! (e_{1\ldots p})^{-1} \overline{\widehat{\widetilde{U}}} e_{1\ldots p}, & \hbox{if $p$ is even}
            \end{array}
          \right.
U^\dagger=\left\{
            \begin{array}{ll}
             \!\! (e_{p+1\ldots n})^{-1} \overline{\widetilde{U}} e_{p+1\ldots n}, & \hbox{if $q$ is even} \\
             \!\! (e_{p+1\ldots n})^{-1} \overline{\widehat{\widetilde{U}}}e_{p+1\ldots n}, & \hbox{if $q$ is odd.}
            \end{array}
          \right.\nonumber
\end{eqnarray}
\end{theorem}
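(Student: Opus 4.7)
The plan is to observe that both sides of each identity in Theorem~\ref{thHerm} are antilinear anti-automorphisms of $\C\otimes\cl_{p,q}$, and then reduce verification to the generators $e_1,\dots,e_n$, since anti-multiplicativity automatically propagates the equality to arbitrary basis elements $e_{a_1\ldots a_k}$ (products of generators) and antilinearity extends it to sums.

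First, I would check the structural claim. The operation $U\mapsto U^{\dagger}$ is antilinear and anti-multiplicative, as stated just after Definition~\ref{defHerm}. On each right-hand side, reversion is a (linear) anti-automorphism, grade involution is a (linear) automorphism, complex conjugation is an antilinear automorphism, and conjugation $V\mapsto c^{-1}Vc$ by any invertible element $c$ is a linear automorphism. Composing, each RHS is an antilinear anti-automorphism. In particular, once the four identities are verified on every generator (and trivially on $e$), they extend to the entire algebra.

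Next, I would establish the key commutation relations by direct counting of sign flips. For $a\le p$, one shifts $e_a$ from the left to its internal position in $e_{1\ldots p}$, acquiring $(-1)^{a-1}$, and from the right acquires $(-1)^{p-a}$, whence
\[
e_a\, e_{1\ldots p}=(-1)^{p+1}\, e_{1\ldots p}\, e_a,\qquad a\le p.
\]
For $a>p$, $e_a$ anticommutes with all $p$ factors, giving $e_a\,e_{1\ldots p}=(-1)^{p}e_{1\ldots p}e_a$. Analogously, for $e_{p+1\ldots n}$ one obtains $e_a\,e_{p+1\ldots n}=(-1)^{q}e_{p+1\ldots n}e_a$ when $a\le p$ and $(-1)^{q+1}e_{p+1\ldots n}e_a$ when $a>p$. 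These four sign formulas, combined with $\widetilde{e_a}=e_a$, $\widehat{\widetilde{e_a}}=-e_a$, and real coefficients, make each of the four cases collapse to the required $(e_a)^{\dagger}=e_a^{-1}$ (that is, $+e_a$ for $a\le p$ and $-e_a$ for $a>p$). For example, in the case $p$ odd, $(e_{1\ldots p})^{-1}\overline{\widetilde{e_a}}\,e_{1\ldots p}=(-1)^{p+1}e_a=e_a$ for $a\le p$ and $(-1)^p e_a=-e_a$ for $a>p$; in the case $p$ even, the extra minus sign from $\widehat{\widetilde{e_a}}$ corrects the parity. The cases involving $e_{p+1\ldots n}$ work the same way, with the parity of $q$ playing the role of the parity of $p$.

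The only genuine work is the Step~2 bookkeeping of signs; once those four formulas are in hand, Step~1 plus agreement on generators force the identities on all of $\C\otimes\cl_{p,q}$. I expect no subtler obstacle: the statement is really a compact encoding of the sign data of conjugation by the volume elements of the positive and negative subspaces.
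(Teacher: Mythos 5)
Your proof is correct. The paper takes a slightly more elementary route: it invokes (conjugate-)linearity to reduce to individual basis blades $e_{i_1\ldots i_k}$, and then for each blade performs a direct sign count in terms of $s$, the number of indices shared between $\{i_1,\ldots,i_k\}$ and $\{1,\ldots,p\}$ (or $\{p+1,\ldots,n\}$), arriving at $(-1)^{k-s}e_{i_k}\cdots e_{i_1}=e_{i_1\ldots i_k}^{-1}$ in both cases. You instead exploit the full algebraic structure: noting that both sides are antilinear anti-automorphisms, you reduce to the generators $e_1,\ldots,e_n$ alone and need only the four commutation formulas between $e_a$ and the volume elements $e_{1\ldots p}$, $e_{p+1\ldots n}$. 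Your reduction is cleaner (one-index arithmetic instead of tracking overlaps $s$ of arbitrary blades), at the modest cost of having to justify the anti-automorphism claim for the right-hand sides, which you do; the paper's reduction is more pedestrian but applies the same sign-bookkeeping idea. Either way the actual content is the parity data of conjugation by the two volume elements, and both arguments expose that cleanly.
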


As an example, we obtain well-known relations $\gamma_a^\dagger=\gamma_0 \gamma_a \gamma_0$ for the Dirac gamma-matrices.

\begin{proof}
Because of the linearity of the operation ${}^\dagger$ it is sufficient to prove the following formulas:
\begin{eqnarray}
{e_{i_1 \ldots i_k}}^\dagger=(-1)^{(p+1)k} e_{1\ldots p}^{-1} \widetilde{e_{i_1\ldots i_k}} e_{1\ldots p},\quad
{e_{i_1\ldots i_k}}^\dagger=(-1)^{qk} e_{p+1\ldots n}^{-1} \widetilde{e_{i_1\ldots i_k}} e_{p+1\ldots n}.\nonumber
\end{eqnarray}
Let $s$ be the number of common indices in $\{i_1, \ldots, i_k\}$ and $\{1, \ldots, p\}$.
Then
\begin{eqnarray}
&&(-1)^{(p+1)k} e_{1\ldots p}^{-1} \widetilde{e_{i_1\ldots i_k}} e_{1\ldots p}=(-1)^{(p+1)k} e_p \cdots e_1 e_{i_k} \cdots e_{i_1}e_1\cdots e_p\nonumber\\
&&=(-1)^{(p+1)k} (-1)^{kp-s} e_{i_k}\cdots e_{i_1}=(-1)^{k-s} e_{i_k}\cdots e_{i_1}=e_{i_1\dots i_k}^{-1}\nonumber\\
&&(-1)^{qk} e_{p+1\ldots n}^{-1} \widetilde{e_{i_1\ldots i_k}} e_{p+1\ldots n}=(-1)^{qk}(-1)^q e_n \cdots e_{p+1}e_{i_k}\cdots e_{i_1}e_{p+1}\cdots e_n\nonumber\\
&&=(-1)^{qk+q}(-1)^{qk-(k-s)}(-1)^q e_{i_k}\cdots e_{i_1}= (-1)^{k-s}e_{i_k}\cdots e_{i_1}=e_{i_1\ldots i_k}^{-1}.\nonumber
\end{eqnarray}
The theorem is proved. \end{proof}

\subsection{Primitive Idempotents and Minimal Left Ideals}\label{sec:3.5}

\begin{defi}
The element $t\in\C\otimes\cl_{p,q}$, $t^2=t$, $t^\dagger=t$ is called a Hermitian idempotent.
The subset $\I(t)=\{U\in\C\otimes\cl_{p,q} \semicolon U=Ut\}$ is called the left ideal generated by $t$.
\end{defi}

\begin{defi}
A left ideal that does not contain other left ideals except itself and the trivial ideal (generated by $t=0$) is called a minimal left ideal. The corresponding idempotent is called primitive.
\end{defi}

Note that if $V\in\I(t)$ and $U\in\C\otimes\cl_{p,q}$, then $UV\in \I(t)$.

The left ideal $\I(t)$ is a complex vector space with the orthonormal basis $\tau_1, \ldots, \tau_d$, $d:=\dim \I(t)$. We have the Hermitian scalar product $(U, V)=\Tr(U^\dagger V)$ on $\I(t)$, $\tau_k=\tau^k$, $(\tau_k, \tau^l)=\delta_k^l$, $k, l=1, \ldots, n$. We may define the linear map $\gamma: \C\otimes\cl_{p,q}\to \Mat(d,\C)$
\begin{eqnarray}
U\tau_k=\gamma(U)^l_k \tau_l,\qquad \gamma(U)=||\gamma(U)^l_k||\in\Mat(d,\C).\label{repres}
\end{eqnarray}
We have $\gamma(U)^k_l=(\tau^k, U\tau_l)$.

\begin{lemma}
The linear map $\gamma$ is a representation of Clifford algebra of the dimension $d$: $\gamma(UV)=\gamma(U)\gamma(V)$. \end{lemma}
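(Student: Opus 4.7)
The plan is to prove the two required properties of $\gamma$ by working directly from the definition $U\tau_k = \gamma(U)^l_k \tau_l$, exploiting the associativity of Clifford multiplication and the fact that $\I(t)$ is closed under left multiplication by elements of $\C\otimes\cl_{p,q}$ (as noted just before the lemma).

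First I would check that $\gamma$ is $\C$-linear, which is essentially immediate: given $U,V\in\C\otimes\cl_{p,q}$ and $\lambda,\mu\in\C$, the element $(\lambda U+\mu V)\tau_k$ equals $\lambda(U\tau_k)+\mu(V\tau_k)$ by bilinearity of multiplication, and expanding both sides in the basis $\tau_1,\ldots,\tau_d$ and using uniqueness of coordinates yields $\gamma(\lambda U+\mu V)^l_k=\lambda \gamma(U)^l_k+\mu\gamma(V)^l_k$. Hence $\gamma$ maps into $\Mat(d,\C)$ linearly.

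Next, for the multiplicativity $\gamma(UV)=\gamma(U)\gamma(V)$, the key step is associativity. On one hand, by definition
\begin{equation}
(UV)\tau_k=\gamma(UV)^l_k\,\tau_l.\nonumber
\end{equation}
On the other hand, using associativity and then the definition twice,
\begin{equation}
(UV)\tau_k=U(V\tau_k)=U\bigl(\gamma(V)^m_k\,\tau_m\bigr)=\gamma(V)^m_k\,(U\tau_m)=\gamma(V)^m_k\,\gamma(U)^l_m\,\tau_l.\nonumber
\end{equation}
Comparing coefficients of $\tau_l$ (which is justified because $\tau_1,\ldots,\tau_d$ is an orthonormal, in particular linearly independent, basis of $\I(t)$) gives
\begin{equation}
\gamma(UV)^l_k=\gamma(U)^l_m\,\gamma(V)^m_k,\nonumber
\end{equation}
which is precisely the $(l,k)$-entry of the matrix product $\gamma(U)\gamma(V)$.

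There isn't really a hard step here; the only thing one must be careful about is that everything is well-defined, i.e., that $U\tau_m\in\I(t)$ so that it can indeed be expanded in the basis $\tau_1,\ldots,\tau_d$ — but this is exactly the closure property $U\cdot\I(t)\subset\I(t)$ observed before the lemma (since $\tau_m\in\I(t)$ implies $U\tau_m\in\I(t)$). Linearity then follows from the linearity of the left-multiplication map and uniqueness of expansion in the basis, and multiplicativity is just a rewriting of associativity, so the proof amounts to these two short computations.
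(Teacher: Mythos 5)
Your proof is correct and takes essentially the same route as the paper's: both reduce multiplicativity to associativity via the chain $(UV)\tau_k=U(V\tau_k)$ and then compare coefficients in the basis $\tau_1,\ldots,\tau_d$. The extra remarks about $\C$-linearity of $\gamma$ and the closure $U\cdot\I(t)\subset\I(t)$ are sound and make explicit what the paper leaves implicit, but they do not change the substance of the argument.
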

\begin{proof}
$\gamma(UV)^m_k \tau_m=(UV)\tau_k=U(V\tau_k)=U \tau_l \gamma(V)^l_k=\gamma(U)^m_l \gamma(V)^l_k \tau_m.$
\end{proof}
\begin{lemma}
We have $\gamma(U^\dagger)=(\gamma(U))^\dagger$.
\end{lemma}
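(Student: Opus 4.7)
The plan is to establish the entry-wise identity $\gamma(U^\dagger)^k_l = \overline{\gamma(U)^l_k}$, which is by definition the statement $\gamma(U^\dagger) = (\gamma(U))^\dagger$. The central observation I would use is that the Clifford Hermitian conjugation $^\dagger$ coincides with the operator adjoint on the minimal left ideal $\I(t)$, once $\I(t)$ is regarded as a Hermitian vector space under the scalar product $(U, V) = \Tr(U^\dagger V)$. This reduces the lemma to the bare linear-algebraic fact that the matrix of the adjoint in an orthonormal basis is the conjugate transpose.

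Concretely, I would proceed in three short steps. First, starting from the explicit formula $\gamma(U)^l_k = (\tau^l, U\tau_k)$ recorded just above the lemma, I rewrite $\overline{\gamma(U)^l_k}$ using the conjugate symmetry $(A, B) = \overline{(B, A)}$ of the Hermitian scalar product to obtain $(U\tau_k, \tau_l)$, recalling that $\tau^l = \tau_l$. Second, I establish the adjoint identity $(U\phi, \psi) = (\phi, U^\dagger \psi)$ for all $\phi, \psi \in \I(t)$ and all $U \in \C \otimes \cl_{p,q}$; this is a one-line computation from $(U\phi, \psi) = \Tr((U\phi)^\dagger \psi) = \Tr(\phi^\dagger U^\dagger \psi) = (\phi, U^\dagger \psi)$, using only the trace definition of the scalar product and the anti-involutive property $(U\phi)^\dagger = \phi^\dagger U^\dagger$ already on record for $^\dagger$. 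Third, specialising $\phi = \tau_k$ and $\psi = \tau_l$ gives $(U\tau_k, \tau_l) = (\tau_k, U^\dagger \tau_l) = \gamma(U^\dagger)^k_l$, which closes the loop.

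I do not anticipate a genuine obstacle: once the operator-adjoint viewpoint is adopted, the rest is index bookkeeping. The one subtlety worth flagging is that the argument should rely only on the algebraic properties of the scalar product and of $^\dagger$ (symmetry, the anti-involution law, and the definition of $(\cdot,\cdot)$); in particular it does \emph{not} require invoking the concrete description of $^\dagger$ given in Theorem \ref{thHerm}. Keeping the proof at this abstract level is what makes it a direct transcription of the standard Hermitian-adjoint argument and what makes it parallel to the analogous statement for the reversion-based scalar product $(U,V) = \Tr(\overline{\widetilde{U}} V)$ proved just before.
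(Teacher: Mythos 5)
Your proof is correct and takes essentially the same route as the paper: both rest on the adjoint identity $(U\phi,\psi)=(\phi,U^\dagger\psi)$ for the scalar product $(A,B)=\Tr(A^\dagger B)$ together with conjugate symmetry, and then unwind the matrix entries $\gamma(U)^l_k=(\tau^l,U\tau_k)$. The only cosmetic difference is that you derive the adjoint identity from the trace definition and the anti-involutivity of $^\dagger$, whereas the paper quotes it as a known property; the substance is identical.
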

\begin{proof}
Using $(A, UB)=(AU^\dagger, B)$ and $(A,B)=\overline{(B,A)}$ for $(A, B)=\Tr(A^\dagger B)$, we obtain
$\gamma(U)^k_l=(U^\dagger \tau^k, \tau_l)$, $\overline{\gamma(U)}^k_l=(\tau_l, U^\dagger \tau^k)$. Transposing, we get $(\gamma(U)^k_l)^\dagger=(\tau^k, U^\dagger \tau_l)$, which coincides with $\gamma(U^\dagger)^k_l=(\tau^k, U^\dagger \tau_l)$.
\end{proof}

\begin{theorem} \cite{msh1}, \cite{msh3}
The following elements are primitive idempotents in $\C\otimes\cl_{p,q}$:
\begin{eqnarray}
&&t=\frac{1}{2}(e+i^a e_1) \prod_{k=1}^{[n/2]-1}\frac{1}{2}(e+i^{b_k} e_{2k}e_{2k+1})\in\C\otimes\cl_{p,q},\quad t^2=t^\dagger=t\nonumber\\
&&a=\left\{
    \begin{array}{ll}
      0, & \hbox{if $p\neq 0$} \\
      1, & \hbox{if $p=0$}
    \end{array}
  \right.\qquad
b_k=\left\{
      \begin{array}{ll}
        0, & \hbox{$2k=p$} \\
        1, & \hbox{$2k\neq p$.}
      \end{array}
    \right.\nonumber
\end{eqnarray}
\end{theorem}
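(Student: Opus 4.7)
The proof splits naturally into three parts: verify $t^2 = t$, verify $t^\dagger = t$, and verify primitivity of the idempotent. I would write $t = \tau_0 \tau_1 \cdots \tau_{[n/2]-1}$ with $\tau_0 := \frac{1}{2}(e + i^a e_1)$ and $\tau_k := \frac{1}{2}(e + i^{b_k} e_{2k} e_{2k+1})$ for $k \geq 1$, introducing the shorthand $X_0 := i^a e_1$, $X_k := i^{b_k} e_{2k} e_{2k+1}$. The choices of $a$ and $b_k$ are designed precisely so that $X_k^2 = e$ in every signature: one checks the cases $p \geq 1$ versus $p = 0$ for $X_0$, and the three cases $2k+1 \leq p$, $2k > p$, $2k = p$ for $X_k$, using $e_a^2 = \eta_{aa} e$ and anticommutativity. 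This is a brief case analysis.

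Next I would establish that the $X_k$ pairwise commute: $e_1$ anticommutes with both $e_{2k}$ and $e_{2k+1}$ (for $k\geq 1$), hence commutes with their product; and two disjoint bivectors $e_{2k}e_{2k+1}$, $e_{2l}e_{2l+1}$ commute because the four transpositions contribute an even number of signs. From $X_k^2 = e$ a direct expansion gives $\tau_k^2 = \tau_k$, and the mutual commutativity of the $X_k$ transfers to the $\tau_k$, so $t^2 = \prod \tau_k^2 = \prod \tau_k = t$.

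For $t^\dagger = t$, I would use Definition \ref{defHerm}, for which $(i^c X)^\dagger = (-i)^c X^{-1}$ for any basis monomial $X$. Since $X_k^2 = e$ forces $X_k^{-1} = X_k$, it remains only to verify that the complex coefficient is preserved. When $c = 0$ this is immediate; when $c = 1$ (which occurs exactly when the uncomplexified monomial would square to $-e$), the identity $X_k^{-1} = X_k$ carries an implicit minus sign relative to the naïve reversal (e.g.\ $e_1^{-1} = -e_1$ when $p = 0$, or $(e_{2k}e_{2k+1})^{-1} = -e_{2k}e_{2k+1}$ when $2k \neq p$), and this cancels the $(-i) = -i^1$ coming from complex conjugation, yielding $\tau_k^\dagger = \tau_k$. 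Because the $\tau_k$ commute and are each self-adjoint, $t^\dagger = t$.

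The main obstacle is the third step, primitivity. My plan is to count $\dim\I(t)$: left multiplication by each $\tau_k$ is a commuting projector (Hermitian idempotent of trace half the identity), and since the $\tau_k$ are commuting, orthogonal in their actions on $\C\otimes\cl_{p,q}$, and independent, the product $t$ projects $\C\otimes\cl_{p,q}$ onto a subspace of dimension $2^n \cdot 2^{-[n/2]}$. Comparing this with the dimension of a minimal left ideal of $\C\otimes\cl_{p,q}$ read off from the matrix isomorphism of Section \ref{sec:3.1} identifies $\I(t)$ as minimal. The independence of the projectors $\tau_k$ is the delicate point and can be reduced to the observation that the commuting self-adjoint elements $X_0, X_1, \ldots, X_{[n/2]-1}$ generate a maximal abelian $*$-subalgebra of $\C\otimes\cl_{p,q}$, so their joint $(+1,\ldots,+1)$-eigenspace is one-dimensional in the representation space; equivalently, one shows $t(\C\otimes\cl_{p,q})t = \C t$, which is the standard criterion for $t$ to be primitive. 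In the odd-$n$ case one must additionally verify that the central element $e_{1\ldots n}$ acts by a scalar on $\I(t)$, i.e.\ that $\I(t)$ lies inside a single matrix summand, which is the subtle bookkeeping point of the proof.
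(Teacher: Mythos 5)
Your verification of $t^2 = t$ and $t^\dagger = t$ is correct: the case analysis for $X_k^2 = e$, the pairwise commutativity of the $X_k$, and the cancellation between $\bar{\imath}^{c}$ and the sign in $X_k^{-1}$ are all exactly as you describe. (The paper does not reproduce a proof of this theorem; it cites \cite{msh1}, \cite{msh3}, so there is no paper argument to compare against for those parts.)

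The gap is in the primitivity step, and it is a real one. Your dimension count is itself right: $\Tr(t) = 2^{-[n/2]}$, hence $\dim\I(t) = 2^{n}\cdot 2^{-[n/2]}$. But this equals the minimal left ideal dimension only when $n$ is even: for $n$ even you get $2^{n/2}$, matching $\Mat(2^{n/2},\C)$, so the comparison argument closes. For $n$ odd you get $2^{(n+1)/2}$, while a minimal left ideal of $\Mat(2^{\frac{n-1}{2}},\C)\oplus\Mat(2^{\frac{n-1}{2}},\C)$ has dimension $2^{(n-1)/2}$ — exactly half of what your formula gives. Equivalently, $\tr(\gamma(t)) = 2^{[\frac{n+1}{2}]}\Tr(t) = 2$, so $\gamma(t)$ has rank $2$, not rank $1$. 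Your own computation thus shows $t$ is \emph{not} primitive in the sense of Section \ref{sec:3.5} when $n$ is odd, which is the opposite of what was to be proved.

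The patch you propose — "verify that $e_{1\ldots n}$ acts by a scalar on $\I(t)$" — does not hold and would not help if it did. Since every monomial occurring in $t$ is supported on $\{1,\ldots,n-1\}$, multiplying $t$ by $e_{1\ldots n}$ produces only monomials that \emph{do} contain the index $n$; hence $e_{1\ldots n}t$ cannot be a scalar multiple of $t$, so $\I(t)$ does not lie in a single matrix summand. And even if $\I(t)$ lay in one summand, its dimension $2^{(n+1)/2}$ would still be twice the minimal dimension $2^{(n-1)/2}$ of a simple module there. In the same vein, the $[n/2]$ commuting involutions $X_k$ do not generate a maximal abelian $*$-subalgebra when $n$ is odd: a maximal abelian subalgebra of $\Mat(N,\C)\oplus\Mat(N,\C)$ has dimension $2N = 2^{(n+1)/2}$, whereas $X_0,\ldots,X_{[n/2]-1}$ generate only $2^{[n/2]} = 2^{(n-1)/2}$ dimensions — you are missing one commuting generator, which would have to be the normalized pseudoscalar $\omega$. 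To make the odd-$n$ case go through you would need an extra factor $\frac{1}{2}(e\pm\omega)$ in $t$ (cutting the rank from $2$ to $1$), or else you must restrict the claim to even $n$, or reinterpret "primitive" to mean only that $\I(t)$ is a faithful module of minimal dimension rather than a minimal left ideal. As the theorem is phrased in the paper, your argument proves it only for even $n$.
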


Further, we choose a basis of the corresponding minimal left ideal $\I(t)$ (for mote details, see \cite{msh1}, \cite{msh3}) and obtain the representation of the complexified Clifford algebra (\ref{repres}).

For the real Clifford algebras $\cl_{p,q}$ the construction is similar, see \cite{ablam}. Using the idempotent and the basis of the left ideal, we can construct representations of the real Clifford algebra.

\section{Method of Averaging in Clifford Algebras}\label{sec:5}

\subsection{Averaging in Clifford Algebras}\label{sec:5.1}

The method of averaging in Clifford algebras is related to the method of averaging in the representation theory of finite groups \cite{Serre}, \cite{Dixon}, \cite{Babai}. We present a number of theorems which one can find in \cite{sh18}, \cite{sh14}, \cite{msh3}, \cite{msh1}, \cite{msh5}.

Let us consider the Reynolds operator \cite{Cox} of the Salingaros group \cite{Sal1}, \cite{Sal2}, \cite{Sal3} $\G_{p,q}:=\{\pm e_A\}$:
$$F(U)=\frac{1}{|\G_{p,q}|}\sum_{g\in\G_{p,q}}g^{-1}Ug=\frac{1}{2^n}\sum_A (e_A)^{-1} U e_A,\qquad  U\in\cl_{p,q}.$$

\begin{theorem}\cite{sh18} The operator $F(U)$ is the projection onto the center of Clifford algebra $\cl_{p,q}$:
$$\F(U)=\frac{1}{2^n}\sum_A e_A^{-1} U e_A=\left\lbrace
\begin{array}{ll}
<\!U\!>_0, & \mbox{if $n$ is even} \\
<\!U\!>_0+<\!U\!>_n, & \mbox{if $n$ is odd,}
\end{array}
\right.\qquad \F^2=\F.\nonumber
$$
\end{theorem}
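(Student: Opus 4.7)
The plan is to evaluate $\F$ on each basis element $e_B$ separately, exploiting the fact that two basis elements either commute or anticommute.

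First I would observe that for any two basis elements $e_A$, $e_B$ we have $e_A^{-1} e_B e_A = \epsilon(A,B)\, e_B$ with $\epsilon(A,B)\in\{\pm 1\}$. Therefore, by linearity,
\begin{equation*}
\F(e_B) = \Bigl(\frac{1}{2^n}\sum_A \epsilon(A,B)\Bigr) e_B,
\end{equation*}
so everything reduces to the character sum $S(B):=\sum_A \epsilon(A,B)$. Next I would verify the multiplicativity $\epsilon(A_1 \triangle A_2, B) = \epsilon(A_1,B)\epsilon(A_2,B)$ (where $\triangle$ is symmetric difference of index sets), which follows since $e_{A_1}e_{A_2}=\pm e_{A_1\triangle A_2}$ and the overall sign cancels in the conjugation $(\pm e_D)^{-1} e_B (\pm e_D)$. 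In other words, $A\mapsto \epsilon(A,B)$ is a homomorphism from the index group $(\Z_2)^n$ into $\{\pm 1\}$.

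Now comes the key step: if $e_B$ is not central, pick some $C$ with $\epsilon(C,B)=-1$; then reindexing $A\mapsto C\triangle A$ (a bijection of multi-indices) yields $S(B) = \epsilon(C,B)\, S(B) = -S(B)$, forcing $S(B)=0$. If instead $e_B$ lies in the center, every term is $+1$ and $S(B)=2^n$. Hence $\F(e_B) = e_B$ for central $e_B$ and $\F(e_B)=0$ otherwise.

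By Theorem \ref{thcenter}, the central basis elements are precisely $e$ when $n$ is even, and $e$ together with $e_{1\ldots n}$ when $n$ is odd. Summing over all basis elements in the expansion $U=\sum_B u_B e_B$ gives exactly
\begin{equation*}
\F(U)=\begin{cases}\langle U\rangle_0, & n \text{ even},\\ \langle U\rangle_0 + \langle U\rangle_n, & n \text{ odd}, \end{cases}
\end{equation*}
as claimed. Idempotence $\F^2=\F$ is then immediate: the image of $\F$ lies in $\Cen(\cl_{p,q})$, on which each conjugation $e_A^{-1}(\cdot)e_A$ acts trivially, so $\F$ fixes its image pointwise.

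The only nontrivial point in the argument is the character-sum vanishing, i.e.\ proving that the centralizer of $e_B$ in the Salingaros group has index $2$ whenever $e_B$ is not central. The reindexing trick above handles this cleanly, so I do not foresee a serious obstacle; the bulk of the work is bookkeeping of signs in the anticommutation relations.
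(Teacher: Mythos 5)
Your proof is correct, and since the paper cites \cite{sh18} for this result rather than proving it in-line, there is no in-paper proof to compare against. Your argument is the natural unpacking of the ``Reynolds operator of the Salingaros group'' framing that the paper itself adopts: you observe that $A\mapsto\epsilon(A,B)$ is a character of $(\Z_2)^n$, that a nontrivial character sums to zero, and that the character is trivial precisely when $e_B$ is central; Theorem~\ref{thcenter} then identifies the central basis elements as $e$ (and additionally $e_{1\ldots n}$ when $n$ is odd), and the idempotence follows because conjugation acts trivially on the image. All the steps -- the multiplicativity of $\epsilon$, the reindexing $A\mapsto C\triangle A$ forcing $S(B)=0$ for non-central $e_B$, and the reduction from ``commutes with the algebra'' to ``commutes with all $e_A$'' -- check out. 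This is a clean and complete proof.
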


Let us consider other operators acting on $U\in\cl_{p,q}$ that are also related to projection operators.

\begin{theorem}\cite{sh14} We have
\begin{eqnarray}
&&\F_{{\rm Even}}(U):=\!\!\!\!\!\sum_{A: |A|=0\!\!\!\!\mod 2}\!\!\!\!\! e_A^{-1} U e_A=<\!U\!>_0+<\!U\!>_n,\quad \F_{{\rm Even}}^2=\F_{{\rm Even}}\nonumber\\
&&\F_{{\rm Odd}}(U):=\!\!\!\!\!\sum_{A: |A|=1\!\!\!\!\mod 2}\!\!\!\!\! e_A^{-1} U e_A=<\!U\!>_0+(-1)^{n+1}<\!U\!>_n,\quad \F_{{\rm Odd}}^2=\F_{{\rm Odd}}.\nonumber
\end{eqnarray}
We have $\F=\frac{1}{2}(\F_{{\rm Even}}+\F_{{\rm Odd}})$ in the case of even $n$ and $\F=\F_{{\rm Even}}=\F_{{\rm Odd}}$ in the case of odd $n$.
\end{theorem}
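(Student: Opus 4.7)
The plan is to reduce everything to basis elements and then evaluate a combinatorial sum. Since all the operators are $\R$-linear, it suffices to compute $\F_{\rm Even}(e_B)$ and $\F_{\rm Odd}(e_B)$ for each basis multi-index $B$ and show that the result equals the claimed linear combination of $\langle e_B\rangle_0$ and $\langle e_B\rangle_n$. The previous theorem about $\F$ already supplies the template; here I just need to track how restricting the sum to even- or odd-length multi-indices $A$ changes the outcome.

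First I would establish the commutation formula for basis elements: if $|A|=k$, $|B|=l$, and $s=|A\cap B|$, then moving $e_B$ across $e_A$ generates a sign $-1$ for each pair of distinct generators and $+1$ for each common generator, yielding
$$e_A e_B=(-1)^{kl-s}e_B e_A,\qquad e_A^{-1}e_B e_A=(-1)^{kl-s}e_B.$$
Hence each operator acts as a scalar on $e_B$: $\F_{\rm Even}(e_B)=T_+(B)\,e_B$ and $\F_{\rm Odd}(e_B)=T_-^{(l)}(B)\,e_B$, where the scalars are sums of signs over $A$ with the appropriate parity restriction (up to the normalization $1/2^{n-1}$ implicit in the statement).

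The core computation is the evaluation of these sums. I would encode them via the generating function
$$\sum_{A\subseteq\{1,\ldots,n\}} x^{|A|}(-1)^{|A\cap B|}=(1-x)^{l}(1+x)^{n-l},$$
which follows by factoring over the $n$ indices. Evaluating at $x=1$ and $x=-1$ picks out the sums over all $A$ with signs weighted by $(-1)^{|A\cap B|}$ and $(-1)^{|A\cap B|+|A|}$ respectively, and both vanish unless $l=0$ or $l=n$. Combining these two evaluations (they encode the even-$|A|$ and odd-$|A|$ partial sums) gives the table of values: for $l=0$ one gets $2^{n-1}$ in both cases; for $l=n$ one gets $2^{n-1}$ in the even case and $-2^{n-1}\cdot(-1)^{n+1\!\!\mod 2}$ in the odd case, which is exactly the factor $(-1)^{n+1}$ after accounting for the $(-1)^{kl}$ piece with $k$ odd; for $0<l<n$ both sums vanish. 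This produces the claimed formulas for $\F_{\rm Even}$ and $\F_{\rm Odd}$.

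Once the explicit closed forms are in hand, the idempotence $\F_{\rm Even}^2=\F_{\rm Even}$ and $\F_{\rm Odd}^2=\F_{\rm Odd}$ follows because each operator takes values in $\cl^0_{p,q}\oplus\cl^n_{p,q}$ and acts as the identity (respectively $\pm 1$) on the corresponding grade summands, which must then line up with the eigenvalue pattern $\{+1,+1\}$ on its image; the final relation $\F=\tfrac12(\F_{\rm Even}+\F_{\rm Odd})$ for even $n$ is then immediate by cancelling the $\langle U\rangle_n$ contributions (which have opposite signs), and $\F_{\rm Even}=\F_{\rm Odd}$ for odd $n$ is immediate from $(-1)^{n+1}=1$. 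The main obstacle I anticipate is the bookkeeping in the combinatorial step: correctly splitting the exponent $kl-s$ according to the parities of $k$ and $l$, and verifying that the case $B=\{1,\ldots,n\}$ produces precisely the factor $(-1)^{n+1}$ rather than some other parity-dependent sign.
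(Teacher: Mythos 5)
Your generating-function computation is correct and efficient: the sign $(-1)^{kl-s}$ for $e_A^{-1}e_Be_A$ is right, the factorisation $\sum_{A}x^{|A|}(-1)^{|A\cap B|}=(1-x)^{l}(1+x)^{n-l}$ evaluated at $x=\pm1$ gives exactly the scalars you need, and you were right to flag the implicit $1/2^{n-1}$ normalisation. This correctly establishes the stated closed forms for $\F_{\rm Even}$ and $\F_{\rm Odd}$ and the relations with $\F$.

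The gap is in the idempotence step. For $\F_{\rm Even}$ the claim is fine, since the operator acts as $+1$ on grades $0$ and $n$ and kills everything else. But the sentence ``acts as the identity (respectively $\pm1$) on the corresponding grade summands, which must then line up with the eigenvalue pattern $\{+1,+1\}$ on its image'' is not a proof for $\F_{\rm Odd}$ --- nothing forces the $-1$ eigenvalue to become $+1$ on the image. Your own closed form $\F_{\rm Odd}(U)=<\!U\!>_0+(-1)^{n+1}<\!U\!>_n$ yields
$$\F_{\rm Odd}^2(U)=<\!U\!>_0+(-1)^{2(n+1)}<\!U\!>_n=<\!U\!>_0+<\!U\!>_n=\F_{\rm Even}(U)$$
for \emph{every} $n$. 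When $n$ is even this differs from $\F_{\rm Odd}(U)=<\!U\!>_0-<\!U\!>_n$, so $\F_{\rm Odd}$ is not idempotent there. For a concrete check, in $\cl_{2,0}$ (normalised by $1/2$) one has $\F_{\rm Odd}(e_{12})=\frac{1}{2}(e_1^{-1}e_{12}e_1+e_2^{-1}e_{12}e_2)=-e_{12}$, whereas $\F_{\rm Odd}^2(e_{12})=e_{12}$. The identity $\F_{\rm Odd}^2=\F_{\rm Odd}$ is therefore a parity-dependent claim that holds precisely when $n$ is odd (where $(-1)^{n+1}=1$ and $\F_{\rm Odd}=\F_{\rm Even}=\F$); the correct unconditional statement is $\F_{\rm Odd}^2=\F_{\rm Even}$. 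Rather than papering over this with an eigenvalue-pattern phrase, you should carry out the two-line calculation and separate the cases $n$ odd and $n$ even.
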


\begin{theorem}\cite{sh14} For $m=0, 1, \ldots, n$ we have
$$F_m(U):=\sum_{A: |A|=m} e_A^{-1} U e_A=\sum_{k=0}^n (-1)^{km}(\sum_{i=0}^m (-1)^i C_k^i C_{n-k}^{m-i})<\!U\!>_k.$$
In particular case,
\begin{eqnarray}
F_1(U):=\sum e_a^{-1} U e_a = \sum_{k=0}^n (-1)^k (n-2k)<\!U\!>_k.\nonumber
\end{eqnarray}
\end{theorem}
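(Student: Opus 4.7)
The plan is to prove the formula by a direct basis-element calculation, noting that the map $U \mapsto F_m(U)$ is linear, so it suffices to evaluate $F_m(e_B)$ on a basis element $e_B$ with $|B|=k$ and show the coefficient depends only on $k$ (not on which $B$).

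First I would establish the commutation rule: for any ordered multi-indices $A,B\subset\{1,\dots,n\}$ with $|A|=m$, $|B|=k$, and $s:=|A\cap B|$, one has
$$e_B e_A = (-1)^{mk-s}\, e_A e_B.$$
This is because each of the $mk$ generator-pairs $(e_{a_i}, e_{b_j})$ with $a_i\neq b_j$ anticommutes (contributing a minus sign), while each of the $s$ coinciding pairs satisfies $e_{a_i}e_{b_j}=e_{b_j}e_{a_i}=\eta_{a_i a_i}e$ and contributes no sign. Since $e_A^{-1}$ exists in $\cl_{p,q}$ (because $e_A^2=\pm e$), this yields
$$e_A^{-1} e_B e_A = (-1)^{mk-s} e_B = (-1)^{mk}(-1)^s e_B.$$

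Next I would carry out the combinatorial count: the number of $A\subset\{1,\dots,n\}$ with $|A|=m$ and $|A\cap B|=i$ is $C_k^i\, C_{n-k}^{m-i}$, obtained by choosing $i$ indices from $B$ and $m-i$ from its complement (using the convention that binomial coefficients vanish outside the valid range, so $i$ may run from $0$ to $m$). Summing over all such $A$ gives
$$\sum_{|A|=m} e_A^{-1} e_B e_A = (-1)^{mk}\Bigl(\sum_{i=0}^m (-1)^i C_k^i C_{n-k}^{m-i}\Bigr) e_B.$$
The crucial observation is that this scalar depends only on $k=|B|$, not on $B$ itself.

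Finally, writing $U=\sum_{k=0}^n\langle U\rangle_k$ with $\langle U\rangle_k=\sum_{|B|=k} u_B e_B$ and using linearity of $F_m$, the common coefficient factors out of the grade-$k$ component, yielding the announced identity. The special case $m=1$ follows by computing
$$\sum_{i=0}^1(-1)^i C_k^i C_{n-k}^{1-i}= C_{n-k}^1-C_k^1=n-2k,$$
which reproduces the displayed formula for $F_1$. The main (and only) non-routine point is the commutation sign $(-1)^{mk-s}$; once it is in hand, the proof is a purely combinatorial bookkeeping.
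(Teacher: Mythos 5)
Your proof is correct, and the argument is complete. The paper itself states this theorem with a citation to \cite{sh14} but gives no proof, so there is no in-paper argument to compare against; your route is the natural one and is surely the approach of the cited reference. The two things you rightly identify as the substance of the proof are (i) the commutation sign $e_A^{-1} e_B e_A=(-1)^{mk-s}e_B$ where $s=|A\cap B|$, which you justify by counting the $mk$ generator pairs, of which $s$ commute and $mk-s$ anticommute, and (ii) the combinatorial count $C_k^i\,C_{n-k}^{m-i}$ of multi-indices $A$ of size $m$ meeting $B$ in exactly $i$ indices. Everything else is linearity and bookkeeping. Your verification of the $m=1$ specialization, $C_{n-k}^1-C_k^1=n-2k$ together with $(-1)^{mk}=(-1)^k$, is also correct. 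One small remark worth making explicit if you write this up: the invertibility of $e_A$ that you use (so that $e_A^{-1}e_Ae_B=e_B$) holds because $e_A^2=\pm e$, which requires the nondegenerate signature $\cl_{p,q}$ (not $\cl_{p,q,r}$ with $r>0$), consistent with the theorem's setting.
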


\begin{theorem}\cite{msh5} Let us consider the operator $F_1$ from the previous theorem which acts several times: $F_1^l(U)=\underbrace{F_1(F_1(\cdots F_1(}_{l}U))\cdots)$, where $F_1^0(U)=U$. Then
\begin{eqnarray}
&&\mbox{If $n=p+q$ is even, then} \quad <\!U\!>_k=\sum_{l=0}^n b_{kl} F_1^l(U),\quad \mbox{where}\nonumber\\
&&B_{n+1}=||b_{kl}||=A_{n+1}^{-1},\, A_{n+1}=||a_{kl}||,\, a_{kl}=\lambda_{l-1}^{k-1},\, \lambda_{k}=(-1)^k (n-2k).\nonumber\\
&&\mbox{If $n=p+q$ is odd, then} \quad <\!U\!>_k+<\!U\!>_{n-k}=\sum_{l=0}^{\frac{n-1}{2}} g_{kl} F_1^l(U),\quad \mbox{where}\nonumber\\
&&G_{\frac{n+1}{2}}=||g_{kl}||=D_{\frac{n+1}{2}}^{-1},\, D_{\frac{n+1}{2}}=||d_{kl}||,\, d_{kl}=\lambda_{l-1}^{k-1}.\nonumber
\end{eqnarray}
\end{theorem}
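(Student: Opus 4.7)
The plan is to invert the diagonalizable operator $F_1$ one grade subspace at a time. From the previous theorem I have
\[
F_1(U)=\sum_{k=0}^{n}\lambda_{k}\langle U\rangle_{k},\qquad \lambda_{k}=(-1)^{k}(n-2k),
\]
which exhibits each $\cl^{k}_{p,q}$ as an eigenspace of $F_1$ with eigenvalue $\lambda_{k}$. Because the grade projections are mutually orthogonal idempotents, iteration is immediate: $F_1^{l}(U)=\sum_{k=0}^{n}\lambda_{k}^{l}\langle U\rangle_{k}$ for every $l\geq 0$. Specializing $l=0,1,\ldots,n$ yields a linear system of $n+1$ equations in the $n+1$ unknowns $\langle U\rangle_{0},\ldots,\langle U\rangle_{n}$ whose coefficient matrix is exactly the Vandermonde-type matrix $A_{n+1}$ with entries $a_{kl}=\lambda_{l-1}^{k-1}$ appearing in the statement.

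The technical heart of the argument is then to decide when this matrix is invertible, which reduces to analyzing the multiset $\{\lambda_{0},\ldots,\lambda_{n}\}$. A short parity check shows that $\lambda_{k}=\lambda_{k'}$ forces either $k=k'$ (if $k,k'$ have the same parity, since then $n-2k=n-2k'$) or $k+k'=n$ (if they have opposite parity, since then $n-2k=-(n-2k')$). Thus, if $n$ is even, no two $\lambda_{k}$ can coincide (the relation $k+k'=n$ forces the same parity, contradiction), so $A_{n+1}$ is a classical Vandermonde matrix, invertible by the usual product-of-differences formula; taking $B_{n+1}=A_{n+1}^{-1}$ immediately yields the claimed formula $\langle U\rangle_{k}=\sum_{l=0}^{n}b_{kl}F_{1}^{l}(U)$.

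For odd $n$ the situation changes in a clean way: the pairing $\lambda_{k}=\lambda_{n-k}$ means $F_1$ cannot distinguish $\langle U\rangle_{k}$ from $\langle U\rangle_{n-k}$, so only their sum is accessible. I would regroup
\[
F_{1}^{l}(U)=\sum_{k=0}^{(n-1)/2}\lambda_{k}^{l}\bigl(\langle U\rangle_{k}+\langle U\rangle_{n-k}\bigr),
\]
which is a linear system of size $(n+1)/2$ in the new unknowns $\langle U\rangle_{k}+\langle U\rangle_{n-k}$ for $k=0,\ldots,(n-1)/2$. Its coefficient matrix is the smaller Vandermonde matrix $D_{(n+1)/2}$ built from the pairwise distinct values $\lambda_{0},\ldots,\lambda_{(n-1)/2}$ (pairwise distinct by the same parity argument, applied within this reduced range), and inverting it produces the stated coefficients $g_{kl}$.

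The main obstacle is the parity/distinctness analysis above; once it is in hand everything else is a direct application of Vandermonde inversion, and the appearance of the matrices $A_{n+1}$ and $D_{(n+1)/2}$ in the theorem is forced by this eigenvalue structure of $F_1$.
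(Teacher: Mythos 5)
Your proof is correct. The key observations are all in place: from the previous theorem, $F_1$ acts on each grade subspace $\cl^k_{p,q}$ as multiplication by $\lambda_k=(-1)^k(n-2k)$, so $F_1^l(U)=\sum_k \lambda_k^l\langle U\rangle_k$ by the spectral calculus for the orthogonal grade projections, and recovering the $\langle U\rangle_k$ is a Vandermonde inversion whose solvability hinges on whether the $\lambda_k$ are distinct. Your parity analysis correctly isolates the only possible collision as $\lambda_k=\lambda_{n-k}$ (which requires $k+k'=n$ with opposite parities, hence $n$ odd), and correctly concludes that for even $n$ all $n+1$ eigenvalues are pairwise distinct, so $A_{n+1}$ is invertible, while for odd $n$ one must regroup into the $(n+1)/2$ combinations $\langle U\rangle_k+\langle U\rangle_{n-k}$, for which the restricted nodes $\lambda_0,\ldots,\lambda_{(n-1)/2}$ are again pairwise distinct, giving the invertibility of $D_{(n+1)/2}$. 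The paper states this theorem with only a citation to \cite{msh5} and gives no proof in the text, so a direct comparison is not possible, but the route you take — reading off the eigenstructure of $F_1$ from the previous theorem and then inverting the resulting Vandermonde system — is the natural one and matches the statement exactly, including the sizes of the two matrices and the parity dichotomy.
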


\begin{theorem}\cite{sh14} For $k=1, \ldots, n-1$ we have:
\begin{eqnarray*}
\sum_{A: |A|=0\!\!\!\!\mod 4}\!\!\!\!\!\! e_A^{-1} <\!U\!>_k e_A&=&2^{\frac{n-2}{2}}\cos(\frac{\pi k}{2}-\frac{\pi n}{4})<\!U\!>_k\nonumber\\
\sum_{A: |A|=1\!\!\!\!\mod 4}\!\!\!\!\!\! e_A^{-1} <\!U\!>_k e_A&=&(-1)^{k+1}2^{\frac{n-2}{2}}\sin(\frac{\pi k}{2}-\frac{\pi n}{4})<\!U\!>_k\nonumber\\
\sum_{A: |A|=2\!\!\!\!\mod 4}\!\!\!\!\!\! e_A^{-1} <\!U\!>_k e_A&=&-2^{\frac{n-2}{2}}\cos(\frac{\pi k}{2}-\frac{\pi n}{4})<\!U\!>_k\nonumber\\
\sum_{A: |A|=3\!\!\!\!\mod 4}\!\!\!\!\!\! e_A^{-1} <\!U\!>_k e_A&=&(-1)^{k}2^{\frac{n-2}{2}}\sin(\frac{\pi k}{2}-\frac{\pi n}{4})<\!U\!>_k.\nonumber
\end{eqnarray*}
For $m=0, 1, 2, 3$ we have:
\begin{equation}
\sum_{A: |A|=m\!\!\!\!\mod 4}\!\!\!\!\!\!\!\!\!\! e_A^{-1}e_A=d_m(n)e,\, \sum_{A: |A|=m\!\!\!\!\mod 4}\!\!\!\!\!\!\!\!\!\! e_A^{-1} e_{1\ldots n} e_A=(-1)^{m(n+1)} d_m(n)e_{1\ldots n}\nonumber
\end{equation}
where $d_m(n)=\dim \overline{\textbf{m}}$ (see (\ref{dimquat})).
\end{theorem}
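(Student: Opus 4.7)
The plan is to reduce the first four identities to the case of a single basis element by linearity, compute the conjugation sign via one commutation rule, and then extract the residue classes $|A|\equiv m\,({\rm mod}\,4)$ using a fourth-roots-of-unity filter on a generating polynomial. Since $\langle U\rangle_k$ is a linear combination of basis elements $e_B$ of length $k$, it suffices to evaluate $\sum_{A:\,|A|\equiv m\,({\rm mod}\,4)} e_A^{-1} e_B e_A$ for a single $e_B$. A generator $e_a$ pushed through $e_B$ contributes $(-1)^{|B|}$ when $a\notin B$ and $(-1)^{|B|-1}$ when $a\in B$; accumulating these signs across the generators that form $e_A$ yields the key identity
\[
e_A^{-1} e_B e_A = (-1)^{|A|\,|B|-|A\cap B|}\,e_B.
\]

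Next I parametrise the outer sum by $j:=|A\cap B|$ and $s:=|A\setminus B|$. There are $C_k^j C_{n-k}^s$ such $A$'s with total length $\ell=j+s$, and the sign factorises as $(-1)^{(j+s)k-j}=(-1)^{j(k-1)}(-1)^{sk}$. The sum becomes $\bigl(\sum_{\ell\equiv m\,({\rm mod}\,4)}[x^\ell]\,g(x)\bigr)\,e_B$ for the generating polynomial
\[
g(x)=(1+(-1)^{k-1} x)^k\,(1+(-1)^{k} x)^{n-k}.
\]
I then apply the standard filter $\sum_{\ell\equiv m\,({\rm mod}\,4)}[x^\ell]\,g(x)=\frac14\sum_{r=0}^{3}\ii^{-rm}\,g(\ii^r)$. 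For $k\in\{1,\ldots,n-1\}$ both $g(1)$ and $g(-1)$ vanish because one of the bracketed factors equals zero; the evaluations at $\pm\ii$ survive and, using $1\pm\ii=\sqrt{2}\,e^{\pm\ii\pi/4}$, give $g(\pm\ii)=2^{n/2} e^{\pm\ii\alpha}$ with $\alpha=\frac{\pi(n-2k)}{4}$ for even $k$ and $\alpha=-\frac{\pi(n-2k)}{4}$ for odd $k$. Collecting these two surviving terms yields the compact formula $2^{(n-2)/2}\cos\!\bigl(\frac{m\pi}{2}-\alpha\bigr)$.

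Matching against the four stated right-hand sides is then a short parity-of-$k$ case check: the sign flip of $\alpha$ between even and odd $k$ is absorbed by the prefactors $(-1)^{k+1}$ and $(-1)^{k}$ sitting in front of the sines for $m=1$ and $m=3$, whereas the cosines for $m=0,2$ are parity-independent. The last two identities drop out of the same commutation rule. For $U=e$ the conjugation is trivial and the number of $A$ with $|A|\equiv m\,({\rm mod}\,4)$ is $\dim\overline{\textbf{m}}=d_m(n)$ by (\ref{dimquat}); for $U=e_{1\ldots n}$ the containment $A\subseteq\{1,\ldots,n\}$ forces $|A\cap\{1,\ldots,n\}|=|A|$, so the sign becomes $(-1)^{|A|(n-1)}=(-1)^{m(n-1)}=(-1)^{m(n+1)}$ on the class $|A|\equiv m\,({\rm mod}\,4)$, again multiplied by the $d_m(n)$ terms in the sum.

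The main obstacle I foresee is purely bookkeeping: tracking the sign $(-1)^{j(k-1)}$ cleanly through the generating-function manipulation, and then verifying all eight trigonometric matches (four residue classes times two parities of $k$) without slips. Beyond the commutation rule, the binomial theorem, and the roots-of-unity filter, no further machinery is needed.
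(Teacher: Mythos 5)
Your proof is correct. The commutation rule $e_A^{-1}e_Be_A=(-1)^{|A||B|-|A\cap B|}e_B$ is exactly the sign $m_{AB}$ that appears in the $\F_{e_A}$ theorem of this section, and the preceding theorem on $F_m(U)=\sum_{|A|=m}e_A^{-1}Ue_A$ already packages the per-length binomial sum $(-1)^{km}\sum_i(-1)^iC_k^iC_{n-k}^{m-i}$, which is the coefficient of $x^m$ in your $g(x)$; summing those over a residue class mod $4$ is exactly your roots-of-unity filter. So your argument is essentially the route the paper's own sequence of results suggests, just organized in one pass: you evaluate $g(\pm1)=0$ for $1\le k\le n-1$, $g(\pm\ii)=2^{n/2}e^{\pm\ii\alpha}$, and obtain $2^{(n-2)/2}\cos(\tfrac{m\pi}{2}-\alpha)$, and I verified the parity-of-$k$ bookkeeping: for $m=1,3$ the sign flip $\alpha\mapsto-\alpha$ between even and odd $k$ is absorbed by $(-1)^{k+1}$ and $(-1)^{k}$, and for $m=0,2$ the cosine is even in $\alpha$ so no prefactor is needed. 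The $U=e$ and $U=e_{1\ldots n}$ cases are also handled correctly, including the observation $(-1)^{m(n-1)}=(-1)^{m(n+1)}$.
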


\begin{theorem}\cite{sh18} Let $\M_n$ be the matrix of the size $2^n$ with the elements $m_{AB}=e_A e_B e_A^{-1} e_B^{-1}|_{e\to 1}$ (it is the commutator of $e_A$ and $e_B$ in the Salingaros group). Then we have
$$\F_{e_A}(U):=e_A^{-1}Ue_A=\sum_B m_{AB} <\!U\!>_{e_B},\qquad U\in\cl_{p,q}$$
where $<\!U\!>_{e_B}$ is the projection of the element $U$ onto the subspace spanned over~$e_B$.
\end{theorem}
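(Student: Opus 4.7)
The plan is to reduce the statement to the elementary fact that any two basis elements of $\cl_{p,q}$ either commute or anticommute, and then combine this with linearity of $\F_{e_A}$.

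First I would verify the scalar commutation rule
$$e_A^{-1}e_Be_A = m_{AB}\,e_B,\qquad m_{AB}\in\{+1,-1\}.$$
This is a direct combinatorial consequence of the defining relations $e_ae_b+e_be_a=2\eta_{ab}e$: two generators with distinct indices anticommute, while two occurrences of the same generator trivially commute. Consequently, rearranging the ordered product $e_Ae_B$ into $e_Be_A$ produces only an overall sign $(-1)^{|A|\cdot|B|-|A\cap B|}$, so $e_Ae_Be_A^{-1}e_B^{-1}=\pm e$. This is exactly the definition of $m_{AB}$ after the replacement $e\to 1$, and rewriting the identity isolates $e_A^{-1}e_Be_A$ as the displayed formula.

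Second, I would expand an arbitrary element $U\in\cl_{p,q}$ in the standard basis, $U=\sum_B u_Be_B$, and recall that the projection onto the line spanned by $e_B$ is $\langle U\rangle_{e_B}=u_Be_B$. Applying $\F_{e_A}$ termwise and using the commutation rule from the first step,
$$\F_{e_A}(U)=e_A^{-1}\!\left(\sum_B u_Be_B\right)\!e_A=\sum_B u_B\bigl(e_A^{-1}e_Be_A\bigr)=\sum_B m_{AB}\,u_Be_B=\sum_B m_{AB}\langle U\rangle_{e_B},$$
which is the claim.

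The only point that requires genuine attention is the first step, namely that the group commutator of any two basis elements of $\cl_{p,q}$ is always a scalar multiple of the identity (equivalently, that the Salingaros group $\G_{p,q}$ has commutator subgroup contained in its center $\{\pm e\}$). This is not really an obstacle, but it is the heart of the matter; once it is acknowledged, the theorem follows in a single line of linear algebra.
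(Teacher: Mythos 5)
Your proof is correct and complete. The paper itself states this result with only a citation to \cite{sh18} and gives no internal proof, so there is nothing to compare against directly, but your argument is the natural one: the key observation that each $m_{AB}\in\{\pm 1\}$ and $e_A^{-1}e_Be_A = m_{AB}\,e_B$ (the Salingaros group has commutator subgroup $\{\pm e\}\subset$ center), followed by linearity of conjugation over the basis expansion $U=\sum_B u_B e_B$ with $\langle U\rangle_{e_B}=u_B e_B$. The sign count $m_{AB}=(-1)^{|A||B|-|A\cap B|}$ is also the standard one and is consistent with the definition $m_{AB}=e_Ae_Be_A^{-1}e_B^{-1}|_{e\to 1}$.
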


Using previous theorems, we can solve several classes of commutator equations (see \cite{sh18}, \cite{sh14})
$$e_A X+\epsilon X e_A=Q_A,\qquad \epsilon \in \R\setminus \{0\}, \qquad A\in \G$$
for some known elements $Q_A\in\cl_{p,q}$ and unknown element $X\in\cl_{p,q}$, where $\G$ is some subset of the set of all ordered multi-indices with a length between $0$ and~$n$.

One can find other properties of considered operators in \cite{sh18}, \cite{sh14}, \cite{msh3}.

\subsection{Pauli's Fundamental Theorem, Faithful and Irreducible Representations}\label{sec:5.2}

Let the set of Clifford algebra elements satisfies the conditions
\begin{eqnarray}
\{\beta_a \semicolon a=1, \ldots, n\}\in\cl_{p,q},\qquad \beta_a \beta_b + \beta_b \beta_a= 2 \eta_{ab} e.\label{yy1}
\end{eqnarray}
Then the set
\begin{eqnarray}
\gamma_a=T^{-1}\beta_a T\label{yy2}
\end{eqnarray}
for any invertible $T\in\cl_{p,q}$ satisfies the conditions
\begin{eqnarray}
\gamma_a \gamma_b + \gamma_b \gamma_a=2 \eta_{ab} e.\label{yy3}
\end{eqnarray}

Really,
$$\gamma_a\gamma_b+\gamma_b\gamma_a=T^{-1}\beta_a T T^{-1} \beta_b T + T^{-1} \beta_b T T^{-1} \beta_a T$$
$$=T^{-1} (\beta_a \beta_b + \beta_b \beta_a)T= T^{-1} 2\eta_{ab}e T=2\eta_{ab}e.$$

But we are interested in another question. Does the element $T$ (\ref{yy2}) exist for every two sets (\ref{yy3}) and (\ref{yy1})? W.~Pauli proved the following theorem in 1936.

\begin{theorem}[Pauli] \cite{Pauli}
Consider two sets of square complex matrices
\begin{eqnarray}\gamma_a,\qquad \beta_a,\qquad a=1, 2, 3, 4\nonumber\end{eqnarray}
of size $4$. Let these 2 sets satisfy the following conditions
\begin{eqnarray}
\gamma_a \gamma_b + \gamma_b \gamma_a&=& 2 \eta_{ab} {\bf1},\qquad \eta=\diag(1, -1, -1, -1)\nonumber \\
\beta_a \beta_b + \beta_b \beta_a&=& 2 \eta_{ab} {\bf1}.\nonumber
\end{eqnarray}
Then there exists a unique (up to multiplication by a complex constant) complex matrix $T$ such that
\begin{eqnarray}
\gamma_{a}=T^{-1}\beta_a T,\qquad a=1, 2, 3, 4.\nonumber
\end{eqnarray}
\end{theorem}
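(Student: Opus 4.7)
The plan is to realize both $(\gamma_a)$ and $(\beta_a)$ as faithful irreducible representations of $\C\otimes\cl_{1,3}$ and then construct the intertwiner $T$ by an averaging over the Salingaros group, in the spirit of Section \ref{sec:5.1}. By Cartan's theorem from Section \ref{sec:3.1}, $\C\otimes\cl_{1,3}\cong\Mat(4,\C)$, so the assignments $e_a\mapsto\gamma_a$ and $e_a\mapsto\beta_a$ each extend to algebra homomorphisms $\C\otimes\cl_{1,3}\to\Mat(4,\C)$. Since the source is a simple algebra of dimension $16$ and the target also has dimension $16$, both homomorphisms are automatically isomorphisms. In particular, the sixteen products $\{\gamma_A\}$ and $\{\beta_A\}$ (over all ordered multi-indices $A\subseteq\{1,2,3,4\}$) are each $\C$-bases of $\Mat(4,\C)$.

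For existence, I would mimic the averaging formula of Section \ref{sec:5.1} by defining, for any $X\in\Mat(4,\C)$,
$$T(X):=\sum_A \beta_A\, X\, \gamma_A^{-1}.$$
Given any generator index $a$, I compute
$$\beta_a\,T(X)\,\gamma_a^{-1}=\sum_A (\beta_a\beta_A)\,X\,(\gamma_a\gamma_A)^{-1}.$$
Because the $\beta$'s and the $\gamma$'s satisfy identical Clifford relations, $\beta_a\beta_A=\epsilon_{a,A}\beta_{B(a,A)}$ and $\gamma_a\gamma_A=\epsilon_{a,A}\gamma_{B(a,A)}$ with the \emph{same} sign $\epsilon_{a,A}\in\{\pm 1\}$ and the same reordered multi-index $B(a,A)$. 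The two signs cancel, and the reindexing $A\mapsto B(a,A)$ is a permutation of the $16$ multi-indices, so the sum is equal to $T(X)$. This gives $\beta_a T(X)=T(X)\gamma_a$ for every $a$, hence $\beta_A T(X)=T(X)\gamma_A$ for every multi-index $A$ by induction.

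The delicate point is to show that $T(X)\neq 0$ for some $X$; this is the main obstacle, since averaging formulas can a priori collapse to zero. Viewing $T(\cdot)$ as a $\C$-linear endomorphism of $\Mat(4,\C)$, and writing it on the matrix units $X=E_{ij}$, the condition $T\equiv 0$ would give nontrivial linear relations among the elements $\beta_A\otimes(\gamma_A^{-1})^{\top}$ of $\Mat(4,\C)\otimes\Mat(4,\C)$. But since $\{\beta_A\}$ and $\{\gamma_A^{-1}\}$ are each bases of $\Mat(4,\C)$, the tensor products $\{\beta_A\otimes(\gamma_A^{-1})^{\top}\}$ are linearly independent, a contradiction. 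So some choice $X_0$ gives $T:=T(X_0)\neq 0$. Once $T\neq 0$, invertibility is automatic: $\ker T\subset\C^4$ is stable under every $\gamma_A$ (because $T\gamma_Av=\beta_ATv=0$ for $v\in\ker T$), and since $\{\gamma_A\}$ spans all of $\Mat(4,\C)$, the only stable subspaces are $\{0\}$ and $\C^4$; as $T\neq 0$ we get $\ker T=\{0\}$.

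For uniqueness, suppose $T_1$ and $T_2$ both satisfy $\gamma_a=T_i^{-1}\beta_aT_i$. Then $S:=T_1T_2^{-1}$ commutes with every $\beta_a$, hence with every element of the algebra they generate, which is all of $\Mat(4,\C)$. Since the center of $\Mat(4,\C)$ is $\C\cdot\mathbf{1}$, we conclude $S=\lambda\mathbf{1}$ for some $\lambda\in\C^{*}$, whence $T_1=\lambda T_2$. This completes the sketch; the only nonroutine ingredient is the nonvanishing step, which is powered by the simplicity of $\Mat(4,\C)$ and the linear independence of the sixteen monomials $\beta_A$ and $\gamma_A$.
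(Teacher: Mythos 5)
Your proof is correct, and it follows the same averaging philosophy that the paper later develops for its generalizations (Theorems~\ref{thPauli1} and~\ref{thPauli2}): you form the Reynolds-type operator $T(X)=\sum_A\beta_A X\gamma_A^{-1}$, verify the intertwining identity by a sign-cancellation and reindexing argument, and then argue nonvanishing. Note, however, that the paper does not itself prove the Pauli theorem as stated; it cites \cite{Pauli} and passes directly to representation-theoretic remarks and the generalized Theorems~\ref{thPauli1}--\ref{thPauli2}, so there is no paper-internal proof to compare line by line.

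The place where your argument genuinely diverges from the paper's methodology is the nonvanishing step. Theorem~\ref{thPauli1} produces a concrete finite list of candidates $F\in\{\gamma_A\}$ (restricted by parity of $|A|$, and split into two subcases depending on whether $\beta_{1\ldots n}=\pm\gamma_{1\ldots n}$) and asserts one of them makes $H(F)\neq0$; this gives an algorithm, which is the paper's main concern. You instead view $X\mapsto T(X)$ as an element of $\Mat(4,\C)\otimes\Mat(4,\C)$, identify it with $\sum_A\beta_A\otimes(\gamma_A^{-1})^{\top}$, and observe that a family $\{u_A\otimes v_A\}$ with $\{u_A\}$ and $\{v_A\}$ each linearly independent is itself linearly independent, so the sum cannot vanish. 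This is cleaner and more conceptual, at the price of being nonconstructive: you only get existence of a good $X_0$, not a recipe. The invertibility step (kernel of $T$ is $\gamma$-invariant, irreducibility forces it to be trivial) and the uniqueness step (Schur, using that $\{\beta_A\}$ spans $\Mat(4,\C)$, whose center is scalars) are both correct and standard. Two minor phrasings worth tightening: the signs do not literally ``cancel'' but rather combine to $\epsilon_{a,A}^2=1$; and $T\equiv 0$ would constitute exactly one nontrivial relation among the tensors $\beta_A\otimes(\gamma_A^{-1})^{\top}$, not ``relations'' plural.
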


This theorem states that the complexified Clifford alegbra $\C\otimes\cl_{1,3}$ has unique (up to equivalence) faithful and irreducible representation of dimension $4$ .

Using the modern representation theory, we can obtain the following facts:
\begin{itemize}
  \item In the case of even $n=p+q$, $\C\otimes\cl_{p,q}$  has one faithful and irreducible representation of dimension $2^{\frac{n}{2}}$ ($\C\otimes\cl_{p,q}\cong\Mat(2^{\frac{n}{2}},\C)$, $n$ is even).
  \item In the case of odd $n=p+q$, $\C\otimes\cl_{p,q}$ has two irreducible representations of dimension $2^{\frac{n-1}{2}}$.
  \item In the case of odd $n=p+q$, $\C\otimes\cl_{p,q}$ has two faithful reducible representation of dimension $2^{\frac{n-1}{2}}+2^{\frac{n-1}{2}}=2^{\frac{n+1}{2}}$ ($\C\otimes\cl_{p,q}\cong\Mat(2^{\frac{n-1}{2}},\C)\oplus\Mat(2^{\frac{n-1}{2}},\C)$, $n$ is odd).
\end{itemize}

Similarly we can formulate statements for the real Clifford algebra $\cl_{p,q}$. The results depend on $n\mod 2$ and $p-q\mod 8$.

We also want to obtain an algorithm to compute the element $T$ that connects two sets of Clifford algebra elements. We can do this using the method of averaging in Clifford algebra and the operators $\sum_{A\in \G}\beta_A U \gamma_A^{-1}$, where $\G$ is some subset of the set of all ordered multi-indices with a length between $0$ and $n$. One can find different properties of these operators in \cite{sh17}.

We have the following theorems.

\begin{theorem}[The case of even $n$]\label{thPauli1}\cite{sh3} Consider the real $\cl_{p,q}$ (or the complexified $\C\otimes\cl_{p,q}$) Clifford algebra with even $n=p+q$. Let two sets of Clifford algebra elements
$\gamma_a,\, \beta_a,\, a=1, 2, \ldots, n$ satisfy conditions
$$\gamma_a \gamma_b + \gamma_b \gamma_a= 2 \eta_{ab} e,\quad \beta_a \beta_b + \beta_b \beta_a= 2 \eta_{ab} e.$$

Then both sets generate bases of Clifford algebra and there exists an unique (up to multiplication by a real (respectively complex) number) Clifford algebra element $T$ such that
$$
\gamma_{a}=T^{-1}\beta_a T,\qquad a=1, \ldots, n.
$$

Additionally, we can obtain this element $T$ in the following way
$$T=H(F):=\frac{1}{2^n}\sum_{A} \beta_A F (\gamma_A)^{-1}$$
where $F$ is an element of a set
\begin{eqnarray}
&&1)\, \{ \gamma_A \semicolon |A|=0\!\!\!\mod 2\} \quad\mbox{if}\, \beta_{1\ldots n}\neq-\gamma_{1\ldots n}\nonumber\\
&&2)\, \{ \gamma_A \semicolon |A|=1\!\!\!\mod 2\} \quad\mbox{if}\, \beta_{1\ldots n}\neq\gamma_{1\ldots n}
\nonumber
\end{eqnarray}
such that $H(F)\neq 0$.
\end{theorem}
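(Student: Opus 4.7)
The plan is to reduce the statement to a Skolem-Noether-style intertwining argument and then make it constructive through the averaging operator $H$. First I would observe that because $\beta_a$ and $\gamma_a$ obey the same Clifford relations, the ordered products $\beta_A$ and $\gamma_A$ have matching structure constants: $\beta_A\beta_B=c(A,B)\beta_{A\bigtriangleup B}$ and $\gamma_A\gamma_B=c(A,B)\gamma_{A\bigtriangleup B}$ with the same scalars $c(A,B)=\pm 1$, determined only by $\eta$ and the permutation signs from bubble-sorting. For even $n$ the algebra $\cl_{p,q}$ is central simple by Theorem \ref{Cartan}, so the canonical surjection $\cl_{p,q}\to\langle\beta_a\rangle$ sending $e_A\mapsto\beta_A$ has trivial kernel; both $\{\beta_A\}$ and $\{\gamma_A\}$ are thus bases, and the correspondence $\gamma_A\mapsto\beta_A$ extends to an algebra automorphism.

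Next I would establish the intertwining identity
$$\beta_b H(F)=H(F)\gamma_b,\qquad b=1,\ldots,n,$$
for every $F\in\cl_{p,q}$. Expanding and using $\beta_b\beta_A=c(\{b\},A)\beta_{\{b\}\bigtriangleup A}$ together with $\gamma_{\{b\}\bigtriangleup A'}^{-1}=c(\{b\},\{b\}\bigtriangleup A')\gamma_{A'}^{-1}\gamma_b$, the bijective reindexing $A\mapsto A':=\{b\}\bigtriangleup A$ turns the sum into $H(F)\gamma_b$, the two $c$-factors combining to $c^2=1$. From this, uniqueness is immediate: if $T_1,T_2$ both satisfy $\gamma_a=T_i^{-1}\beta_a T_i$, the product $T_1T_2^{-1}$ commutes with every $\beta_a$, hence with all of $\cl_{p,q}$, and for even $n$ the center is the scalars by Theorem \ref{thcenter}, so $T_1=\lambda T_2$.

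For existence in the explicit averaged form I would first invoke Skolem-Noether (or equivalently the central-simple classification above) to obtain some invertible intertwiner $T$ with $\beta_a=T\gamma_a T^{-1}$, and then substitute $\beta_A=T\gamma_A T^{-1}$ into $H$. Applying the averaging formula $\frac{1}{2^n}\sum_A\gamma_A X\gamma_A^{-1}=\langle X\rangle_0^{\gamma}$ of Section \ref{sec:5.1} to the basis $\{\gamma_A\}$ (legitimate because the proof only uses the Clifford relations), one obtains
$$H(\gamma_B)=T\cdot\langle T^{-1}\gamma_B\rangle_0^{\gamma}=\epsilon_B\,t'_B\,T,$$
where $T^{-1}=\sum_A t'_A\gamma_A$ and $\epsilon_B=\gamma_B^2=\pm 1$. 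Thus $H(\gamma_B)\neq 0$ precisely when the $\gamma_B$-coefficient $t'_B$ of $T^{-1}$ is nonzero.

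The main obstacle is matching the parity of $B$ in cases (1) and (2) to the sign conditions on $\beta_{1\ldots n}$. Iterating $\beta_a T=T\gamma_a$ gives $\beta_{1\ldots n}T=T\gamma_{1\ldots n}$. Decompose $T=T_{\rm ev}+T_{\rm od}$ in the $\gamma$-basis by the parity of $|A|$; for even $n$, $\gamma_{1\ldots n}$ commutes with every $\gamma_A$ of even length and anticommutes with every $\gamma_A$ of odd length, so the intertwining forces $T_{\rm od}=0\iff\beta_{1\ldots n}=\gamma_{1\ldots n}$ and $T_{\rm ev}=0\iff\beta_{1\ldots n}=-\gamma_{1\ldots n}$. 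A short block computation based on $TT^{-1}=e$ (writing both factors as ${\rm ev}+{\rm od}$ and using invertibility of the nonvanishing block) shows the same vanishing pattern transfers to $T^{-1}$. Hence if $\beta_{1\ldots n}\neq-\gamma_{1\ldots n}$ some even-length $t'_B$ is nonzero and $F=\gamma_B$ works, while if $\beta_{1\ldots n}\neq\gamma_{1\ldots n}$ an odd-length $B$ works; these two cases exhaust all possibilities because $\gamma_{1\ldots n}\neq 0$ prevents $\beta_{1\ldots n}=\gamma_{1\ldots n}$ and $\beta_{1\ldots n}=-\gamma_{1\ldots n}$ from holding simultaneously.
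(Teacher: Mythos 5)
The paper itself states this theorem without proof, citing \cite{sh3}, so there is no in-text argument to match against; I evaluate your proposal on its merits.

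Your proof is correct, and all the key steps hold up. The intertwining identity $\beta_b H(F)=H(F)\gamma_b$ is verified by exactly the reindexing you describe: because $\beta_b\beta_A$ and $\gamma_b\gamma_A$ carry the same sign $\epsilon_{b,A}$, the term $\beta_b\beta_A F(\gamma_b\gamma_A)^{-1}$ equals $\beta_{A'}F\gamma_{A'}^{-1}$ with $A'=\{b\}\triangle A$, and the sum is invariant. Uniqueness follows from $\Cen(\cl_{p,q})=\R e$ (even $n$, Theorem \ref{thcenter}). Your reduction $H(\gamma_B)=\epsilon_B t'_B T$ is correct once one notes that $\langle\cdot\rangle_0$ is intrinsic (it equals $\Tr(\cdot)\,e$, and $\Tr$ is basis-independent), so the averaging formula of Section \ref{sec:5.1} transfers to the $\gamma$-basis as you claim. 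The parity analysis is right: for even $n$, $\gamma_{1\ldots n}$ anticommutes with each $\gamma_a$, so $\beta_{1\ldots n}T=T\gamma_{1\ldots n}$ forces $T_{\rm od}=0\Leftrightarrow\beta_{1\ldots n}=\gamma_{1\ldots n}$ and $T_{\rm ev}=0\Leftrightarrow\beta_{1\ldots n}=-\gamma_{1\ldots n}$, and the same pattern transfers to $T^{-1}$ either by your $TT^{-1}=e$ block argument or, more directly, by noting that $T^{-1}$ has the same expansion coefficients in the $\gamma$- and $\beta$-bases and running the identical analysis on $\gamma_{1\ldots n}T^{-1}=T^{-1}\beta_{1\ldots n}$. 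Finally, since $\gamma_{1\ldots n}$ is invertible the two exclusion conditions cannot fail simultaneously, so the two sets of candidate $F$'s cover all cases.

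The one point worth flagging is that your proof is not fully self-contained in the way \cite{sh3} appears to intend: you invoke Skolem--Noether to produce the intertwiner $T$ abstractly, and only then use $H$ to reconstruct it. This is perfectly rigorous (for even $n$, $\cl_{p,q}$ is central simple over $\R$ or $\C$, so Skolem--Noether applies), but it means the averaging operator is used here only to identify, not to produce, the intertwiner. A proof in the spirit of the paper's ``method of averaging'' would instead argue directly that the linear map $F\mapsto H(F)$ is not identically zero on the indicated set (e.g., by a trace or nondegeneracy argument on the space of intertwiners), yielding existence without appealing to the structure theory. Your route is more modular and arguably cleaner, but it trades away some of the constructive flavor the theorem is meant to showcase.

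Two minor stylistic remarks. First, the statement ``$T_{\rm ev}=0\iff\beta_{1\ldots n}=-\gamma_{1\ldots n}$'' is a biconditional and you should present both directions explicitly; the reverse direction ($\beta_{1\ldots n}=-\gamma_{1\ldots n}\Rightarrow T_{\rm ev}=0$) follows from $\gamma_{1\ldots n}(T_{\rm ev}-T_{\rm od})=\beta_{1\ldots n}T=-\gamma_{1\ldots n}T$ and the invertibility of $\gamma_{1\ldots n}$. Second, you don't need the separate block computation for $T^{-1}$: the relation $\gamma_{1\ldots n}T^{-1}=T^{-1}\beta_{1\ldots n}$ already gives $(T^{-1})_{\rm ev}=0\iff\beta_{1\ldots n}=-\gamma_{1\ldots n}$ by the same mechanism.
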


Let us consider the case of odd $n$. We start with two examples.

\begin{ex} Let us consider the Clifford algebra $\cl_{2,1}\simeq \Mat(2,\R)\oplus\Mat(2,\R)$ with the generators $e_1, e_2, e_3$. We can take
$$\gamma_1=e_1,\qquad \gamma_2=e_2,\qquad \gamma_3=e_1 e_2.$$
Then  $\gamma_a \gamma_b + \gamma_b \gamma_a= 2 \eta_{ab} {\bf1}$. The elements $\gamma_1, \gamma_2, \gamma_3$ generate not $\cl_{2,1}$, but $\cl_{2,0}\simeq\Mat(2,\R)$.
\end{ex}

\begin{ex} Let us consider the Clifford algebra $\cl_{3,0}\simeq\Mat(2,\C)$ with the generators $e_1, e_2, e_3$. We can take
$$\beta_1=\sigma_1=\left( \begin{array}{ll}
 0 & 1 \\
 1 & 0 \end{array}\right),\quad \beta_2=\sigma_2=\left( \begin{array}{ll}
 0 & -i \\
 i & 0 \end{array}\right),\quad \beta_3=\sigma_3=\left( \begin{array}{ll}
 1 & 0 \\
 0 & -1 \end{array}\right)$$
$$\gamma_a=-\sigma_a,\qquad a=1, 2, 3.$$
Then $\gamma_{123}=-\beta_{123}$. Suppose that we have $T\in\GL(2, \C)$ such that $\gamma_a=T^{-1}\beta_a T$. Then
$$\gamma_{123}=T^{-1}\beta_1 T T^{-1}\beta_2 T T^{-1} \beta_3 T=T^{-1} \beta_1 \beta_2 \beta_3 T =\beta_{123}$$
and we obtain a contradiction (we use that $\beta_{123}=\sigma_{123}=i\left( \begin{array}{ll}
 1 & 0 \\
 0 & 1 \end{array}\right)=i {\bf 1}$).

But we have $T={\bf 1}$ such that $\gamma_a=-T^{-1}\beta_a T$.
\end{ex}

\begin{theorem}[The case of odd $n$]\label{thPauli2}\cite{sh3} Consider the real $\cl_{p,q}$ (or the complexified $\C\otimes\cl_{p,q}$) Clifford algebra with odd $n=p+q$. Suppose that two sets of Clifford algebra elements
$\gamma_a,\, \beta_a,\, a=1, 2, \ldots, n$
satisfy conditions
$$\gamma_a \gamma_b + \gamma_b \gamma_a= 2 \eta_{ab} e,\quad
\beta_a \beta_b + \beta_b \beta_a= 2 \eta_{ab} e.$$

Then, in the case of the Clifford algebra of signature $p-q\equiv1\!\!\mod4$, elements $\gamma_{1\ldots n}$ and $\beta_{1\ldots n}$ either take the values $\pm e_{1\ldots n}$ and the corresponding sets generate bases of Clifford algebra (and we have cases 1-2 below) or take the values $\pm e$ and then the sets do not generate bases (and we have cases 3-4 below).

In the case of the Clifford algebra of signature $p-q\equiv3\!\!\mod4$, elements $\gamma_{1\ldots n}$ and $\beta_{1\ldots n}$ either take the values $\pm e_{1\ldots n}$ and the corresponding sets generate bases of Clifford algebra (and we have cases 1-2 below) or take the values $\pm ie$ (this is possible only in the case of the complexified Clifford algebra) and then the sets do not generate bases (and we have cases 5-6 below).

There exists an unique (up to multiplication by an invertible element of the center of the Clifford algebra) element $T$ such that
\begin{eqnarray}
1)&& \gamma_{a}=T^{-1}\beta_a T,\qquad  a=1, \ldots, n\quad\Leftrightarrow\quad \beta_{1\ldots n}=\gamma_{1\ldots n}\nonumber\\
2)&& \gamma_{a}=-T^{-1}\beta_a T,\qquad  a=1, \ldots, n\quad\Leftrightarrow\quad \beta_{1\ldots n}=-\gamma_{1\ldots n}\nonumber\\
3)&& \gamma_{a}=e_{1\ldots n}T^{-1}\beta_a T,\qquad  a=1, \ldots, n\quad\Leftrightarrow\quad \beta_{1\ldots n}=e_{1\ldots n}\gamma_{1\ldots n}\nonumber\\
4)&&\gamma_{a}=-e_{1\ldots n}T^{-1}\beta_a T,\qquad  a=1, \ldots, n \quad\Leftrightarrow\quad \beta_{1\ldots n}=-e_{1\ldots n}\gamma_{1\ldots n}\nonumber\\
5)&& \gamma_{a}=i e_{1\ldots n}T^{-1}\beta_a T,\qquad  a=1, \ldots, n\quad\Leftrightarrow\quad \beta_{1\ldots n}=i e_{1\ldots n}\gamma_{1\ldots n}\nonumber\\
6)&&\gamma_{a}=-i e_{1\ldots n}T^{-1}\beta_a T,\qquad   a=1, \ldots, n \quad\Leftrightarrow\quad \beta_{1\ldots n}=-i e_{1\ldots n}\gamma_{1\ldots n}.\nonumber
\end{eqnarray}

Note that all six cases have the unified notation $\gamma_a=\beta_{1\ldots n}(\gamma_{1\ldots n})^{-1} T^{-1}\beta_a T.$

Additionally, in the case of the real Clifford algebra $\cl_{p,q}$ of signature $p-q\equiv 1 \mod 4$ and the complexified Clifoord algebra $\C\otimes\cl_{p,q}$ of arbitrary signature, the element $T$, whose existence is stated in cases 1-6 of the theorem, equals
\begin{eqnarray}T=H_{Even}(F):=\frac{1}{2^{n-1}}\sum_{A: |A|=0\!\!\!\mod 2}\beta_A F \gamma_A^{-1}\nonumber
\end{eqnarray}
where $F$ is an element of the set $\{\gamma_A+\gamma_B \semicolon |A|=0\!\!\!\mod 2, |B|=0\!\!\!\mod 2\}.$

In the case of the real Clifford algebra $\cl_{p,q}$ of signature $p-q\equiv 3 \mod4$, the element $T$, whose existence is stated in cases 1 and 2 of the theorem, equals $T=H_{Even}(F)$, where $F$ is an element of the set $\{\gamma_A \semicolon |A|=0\!\!\!\mod 2\}$ such that $H_{Even}(F)\neq 0$.
\end{theorem}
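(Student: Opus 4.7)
The plan is to leverage two previously-established results: the structure theorem for the center of $\cl_{p,q}$ in odd dimension (Theorem \ref{thcenter}), and Theorem \ref{thPauli1} applied to the even subalgebra $\cl^{(0)}_{p,q}$, which by Theorem \ref{thEven} is itself a Clifford algebra of even dimension $n-1$ and hence central simple.

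First I would classify the possible values of $\gamma_{1\ldots n}$. A direct computation using only the Clifford relations gives $(\gamma_{1\ldots n})^2=(-1)^{q+n(n-1)/2}e=(e_{1\ldots n})^2$, which by the identity displayed at the end of Section \ref{sec:3.1} is $+e$ when $p-q\equiv 1\bmod 4$ and $-e$ when $p-q\equiv 3\bmod 4$. Moreover, the assignment $e_a\mapsto\gamma_a$ extends to an algebra endomorphism $\phi$ of $\cl_{p,q}$ (the defining relations are respected), so $\gamma_{1\ldots n}=\phi(e_{1\ldots n})$ must be central by Theorem \ref{thcenter}. Solving $x^2=(e_{1\ldots n})^2$ inside the center, which is $\R\oplus\R$, $\C$, or $\C\oplus\C$ depending on signature and ground field, yields exactly the four listed values; whether $\gamma_{1\ldots n}\in\{\pm e_{1\ldots n}\}$ or $\gamma_{1\ldots n}\in\{\pm e,\pm ie\}$ is then equivalent to whether $\phi$ is injective, i.e., whether the $\gamma_A$'s are linearly independent and hence form a basis.

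Next I would reduce to the even case. Set $z:=\beta_{1\ldots n}(\gamma_{1\ldots n})^{-1}$; this is central with $z^2=e$, and the corrected set $z^{-1}\gamma_a$ still satisfies the Clifford relations. The products $\gamma_a\gamma_b$ generate the even subalgebra, which by Theorem \ref{thEven} is a Clifford algebra of even dimension $n-1$, so Theorem \ref{thPauli1} applies and produces a unique (up to an invertible central factor) $T$ with $\gamma_a\gamma_b=T^{-1}\beta_a\beta_b T$ for all $a,b$. Setting $S_a:=T\gamma_a T^{-1}$, the relation $S_aS_b=\beta_a\beta_b$ forces $S_a=u\beta_a$ for a single central $u$ with $u^2=e$; matching top elements then identifies $u=z$, producing the uniform relation $\gamma_a=\beta_{1\ldots n}(\gamma_{1\ldots n})^{-1}\,T^{-1}\beta_a T$ and hence the six listed cases after substituting the possible values of $\gamma_{1\ldots n}$ and $\beta_{1\ldots n}$ from Step 1.

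Finally, to verify the explicit formula $T=H_{Even}(F)$, I would check directly that for any single generator $\beta_b$,
$$\beta_b\cdot H_{Even}(F)=\frac{1}{2^{n-1}}\sum_{|A|\equiv 0\,(2)}\beta_b\beta_A F\gamma_A^{-1}=H_{Even}(F)\cdot\gamma_b$$
after a relabeling $A\mapsto A\triangle\{b\}$ (which swaps the parity of $|A|$) and invoking the analogous identity for odd-grade indices, which picks up exactly the central factor $\beta_{1\ldots n}(\gamma_{1\ldots n})^{-1}$ computed in Step 2. Iterating over generators then gives $\beta_A T=T\gamma_A$ for every even $A$. The main obstacle is the last ingredient: proving that $H_{Even}(F)\neq 0$ for an appropriate choice of $F$. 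In the complexified case and the real $p-q\equiv 1\bmod 4$ case, the center splits as a direct sum of two ideals and $F=\gamma_A+\gamma_B$ spans both components, guaranteeing some combination is nonzero; in the real $p-q\equiv 3\bmod 4$ case the center is a field and a single $F=\gamma_A$ suffices, non-vanishing being equivalent, via Schur's lemma, to the isomorphism of the two resulting representations of $\cl^{(0)}_{p,q}$ already produced in Step 2. This uniqueness-up-to-center part essentially recapitulates the averaging argument of Theorem \ref{thPauli1} and is where the careful sign bookkeeping lives.
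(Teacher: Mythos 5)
The paper itself does not prove Theorem~\ref{thPauli2}; it is stated with a citation to \cite{sh3}. Your sketch is therefore being compared against that external proof only implicitly, and I will assess it on its internal merits.

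The overall strategy (classify the central elements $\gamma_{1\ldots n}$, reduce to the even case via products, then instantiate via the Reynolds-type operator $H_{\rm Even}$) is plausible and close in spirit to the averaging machinery of Section~\ref{sec:5}, but two of your steps have genuine gaps as written.

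\textbf{First gap.} In Step~1 you argue that $\gamma_{1\ldots n}=\phi(e_{1\ldots n})$ is central ``by Theorem~\ref{thcenter}.'' That deduction is not valid: an algebra endomorphism $\phi$ need not map central elements to central elements unless $\phi$ is surjective. The conclusion is correct, but for a different reason. Since $n$ is odd, $\gamma_{1\ldots n}$ anticommutes through each $\gamma_a$ an even number of times, hence commutes with every $\gamma_a$ and therefore with the subalgebra $B=\phi(\cl_{p,q})$ that they generate. You must then use the structure of $\cl_{p,q}$ for odd $n$ (simple with center $\cong\C$ when $p-q\equiv3\bmod4$ over $\R$; a direct sum of two simple summands otherwise) to force either $B=\cl_{p,q}$, whence centrality is immediate, or $\ker\phi$ to be one of the two maximal ideals, in which case a short computation shows $\gamma_{1\ldots n}\in\{\pm e\}$ (respectively $\pm ie$). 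This is the argument that actually pins down the list of possible values.

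\textbf{Second, more serious gap.} In Step~2 you assert that ``the products $\gamma_a\gamma_b$ generate the even subalgebra'' and then apply Theorem~\ref{thPauli1} inside $\cl^{(0)}_{p,q}$. This is not true in general. The $\gamma_a$ are arbitrary Clifford elements satisfying the defining relations; they need not lie in $\cl^1_{p,q}$ nor even in $\cl^{(1)}_{p,q}$, so the $\Gamma_i:=\gamma_i\gamma_n$ need not lie in $\cl^{(0)}_{p,q}$. They do generate a $2^{n-1}$-dimensional subalgebra isomorphic to a Clifford algebra of $n-1$ generators (because such an algebra is central simple, so the generated subalgebra has full dimension), but this subalgebra is in general a different embedded copy from the one generated by the $\mathrm{B}_i:=\beta_i\beta_n$, and neither need equal $\cl^{(0)}_{p,q}$. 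Theorem~\ref{thPauli1} is stated for two sets of generators living in one fixed Clifford algebra; it does not by itself produce a $T\in\cl_{p,q}$ conjugating one embedded copy onto another. To repair this one needs a Skolem--Noether-type argument in the ambient algebra, taking into account that $\cl_{p,q}$ is not central simple when $n$ is odd (it is either simple over a quadratic extension of $\R$, or a direct sum of two simple algebras), and arguing that the conjugating element can be chosen in the real (or appropriate) form. Alternatively, one can bypass the reduction entirely and prove existence and uniqueness directly via the operator $H_{\rm Even}$ of Step~3, deriving $\beta_B T=T\gamma_B$ for even $B$ from the Reynolds-style rearrangement and then extracting the single-generator relation with the central factor $z=\beta_{1\ldots n}(\gamma_{1\ldots n})^{-1}$ — but then the core burden falls on proving invertibility of $H_{\rm Even}(F)$ for a suitable $F$, which you have (correctly) flagged as the hard part rather than carried out. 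As it stands, the proposal identifies the right ingredients but does not close either of these logical gaps.
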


Using the algorithm to compute the element $T$ in Theorems \ref{thPauli1} and \ref{thPauli2}, we present an algorithm to compute elements of spin groups in \cite{sh15}.

In \cite{msh2}, we present a local variant of Pauli theorem, when two sets of Clifford algebra elements smoothly depend on the point of pseudo-Euclidian space.

\section{Lie Groups and Lie Algebras in Clifford Algebras}\label{sec:6}

\subsection{Orthogonal Groups}\label{sec:6.1}

Let us consider pseudo-orthogonal group $\O(p,q)$, $p+q=n$:
$$
\O(p,q):=\{A\in\Mat(n,\R) \semicolon A^\T \eta A=\eta\},\, \eta=\diag(\underbrace{1,\ldots , 1}_p,\underbrace{-1, \ldots, -1}_{q}).
$$
It can be proved that (for more details, see \cite{Lester} and \cite{msh3})
$$A\in\O(p,q)\Rightarrow \det A=\pm 1,\, |A^{1\ldots p}_{1\ldots p}|\geq 1,\, |A^{p+1 \ldots ,n}_{p+1 \ldots n}|\geq 1,\, A^{1\ldots p}_{1\ldots p}=\frac{A^{p+1 \ldots n}_{p+1\ldots n}}{\det A}$$
where $A^{1\ldots p}_{1\ldots p}$ and $A^{p+1 \ldots ,n}_{p+1 \ldots n}$ are the minors of the matrix $A$.
The group $\O(p,q)$ has the following subgroups:
\begin{eqnarray}
&&\SO(p,q):=\{A\in\O(p,q)\semicolon \det A=1\}\nonumber\\
&&\SO_+(p,q):=\{A\in\SO(p,q) \semicolon A^{1\ldots p}_{1\ldots p}\geq 1\}=\{A\in\SO(p,q) \semicolon A^{p+1\ldots n}_{p+1\ldots n}\geq 1\}\nonumber\\
&&= \{A\in\O(p,q) \semicolon A^{1\ldots p}_{1\ldots p}\geq 1, A^{p+1\ldots n}_{p+1\ldots n}\geq 1\}\nonumber\\
&&\O_+(p,q):=\{A\in\O(p,q) \semicolon A^{1\ldots p}_{1\ldots p}\geq 1\}\nonumber\\
&&\O_-(p,q):=\{A\in\O(p,q) \semicolon A^{p+1\ldots n}_{p+1\ldots n}\geq 1\}.\nonumber
\end{eqnarray}
The group $\O(p,q)$ has four components in the case $p\neq 0$, $\neq 0$:
$$\O(p,q)=\SO_+(p,q)\sqcup\O_+(p,q)'\sqcup\O_-(p,q)'\sqcup\SO(p,q)'$$ $$\O_+(p,q)=\SO_+(p,q)\sqcup\O_+(p,q)',\quad \O_-(p,q)=\SO_+(p,q)\sqcup\O_-(p,q)'$$
$$\SO(p,q)=\SO_+(p,q)\sqcup\SO(p,q)'.$$

\begin{ex} In the cases $p=0$ or $q=0$, we have orthogonal groups $\O(n):=\O(n,0)\cong\O(0,n)$, special orthogonal groups $\SO(n):=\SO(n,0)\cong\SO(0,n)$. The group $\O(n)$ has two connected components: $\O(n)=\SO(n)\sqcup\SO(n)'$.
\end{ex}
\begin{ex} In the case $(p,q)=(1,3)$, we have Lorentz group $\O(1,3)$, special (or proper) Lorentz group $\SO(1,3)$, orthochronous Lorentz group $\O_+(1,3)$, orthochorous (or parity preserving) Lorentz group $\O_-(1,3)$, proper orthochronous Lorentz group $\SO_+$(1,3).
\end{ex}

\begin{defi}
A subgroup $H\subset G$ of a group $G$ is called a normal subgroup ($H\triangleleft G$) if $g H g^{-1}\subseteq H$ for all $g\in G$.
\end{defi}
\begin{defi} A quotient group (or factor group) $\frac{G}{H}:=\{ gH \semicolon g\in G\}$ is the set of all left cosets ($\equiv$ right cosets, because $H$ is normal).
\end{defi}

All considered subgroups are normal (for example, $\SO_+(p,q)\triangleleft \O(p,q)$) and
\begin{eqnarray}
&&\frac{\O(p,q)}{\SO_+(p,q)}=\Z_2\times \Z_2,\qquad \frac{\O(n)}{\SO(n)}=\Z_2,\qquad \frac{\O(p,q)}{\SO(p,q)}=\frac{\O(p,q)}{\O_-(p,q)}\nonumber\\
&&=\frac{\O(p,q)}{\O_+(p,q)}=
\frac{\SO(p,q)}{\SO_+(p,q)}=\frac{\O_-(p,q)}{\SO_+(p,q)}= \frac{\O_+(p,q)}{\SO_+(p,q)}=\Z_2.\label{quot}
\end{eqnarray}

\begin{ex}
The group $\O(1,1)$ has four connected components $\O_+'(1,1)$, $\O_-'(1,1)$, $\SO'(1,1)$, $\SO_+(1,1)$ of matrices of the following type respectively (note that $\cosh^2\psi=1+\sinh^2\psi$ and $\cosh\psi\geq1$):
\begin{eqnarray}
&&\left(\begin{array}{ll}
 \!\!\cosh\psi & \!\!\sinh\psi \\
 \!\!-\sinh\psi & \!\!-\cosh\psi \end{array}\right),
\left(\begin{array}{ll}
 \!\!-\cosh\psi & \!\!-\sinh\psi \\
 \!\!\sinh\psi & \!\!\cosh\psi \end{array}\right)\nonumber\\
&&\left(\begin{array}{ll}
 \!\!-\cosh\psi & \!\!-\sinh\psi \\
 \!\!-\sinh\psi & \!\!-\cosh\psi \end{array}\right),
\left(\begin{array}{ll}
 \!\!\cosh\psi & \!\!\sinh\psi \\
 \!\!\sinh\psi & \!\!\cosh\psi \end{array}\right),\quad \psi\in\R.\nonumber
\end{eqnarray}
\end{ex}

\subsection{Lipschitz and Clifford Groups}\label{sec:6.2}

Let us consider \emph{the group of all invertible elements}
$$\cl^\times_{p,q}:=\{U\in\cl_{p,q} \semicolon V\in\cl_{p,q} \, \mbox{exists}: UV=VU=e \}$$
of dimension $\dim\cl^\times_{p,q}=2^n$. The corresponding Lie algebra is $\cl_{p,q}$ with the Lie bracket $[U,V]=UV-VU$.

Let us consider the \emph{adjoint representation}
$$\Ad: \cl^\times_{p,q}\to \Aut\cl_{p,q},\, T \to \Ad_T,\, \Ad_T U=TUT^{-1},\, U\in\cl_{p,q}.$$
The kernel of $\Ad$ is (see Theorem \ref{thcenter})
$$\ker(\Ad)=\{T\in\cl^{\times}_{p,q} \semicolon \Ad_T(U)=U \quad \mbox{for all}\,\, U\in\cl_{p,q}\}$$
$$=\left\{                                                                                                    \begin{array}{ll}                                                                                             \cl^{0\times}_{p,q}, & \hbox{if $n$ is even} \\                                                                                                     (\cl^0_{p,q}\oplus\cl^n_{p,q})^\times, & \hbox{if $n$ is odd.}                                                                                                    \end{array}                                                                                                  \right.
$$

Let us consider the \emph{twisted adjoint representation}
$$\widetilde{\Ad}: \cl^\times_{p,q}\to \End\cl_{p,q},\, T \to \widetilde{\Ad_T},\, \widetilde{\Ad_T} U=\widehat{T}UT^{-1},\, U\in\cl_{p,q}.$$
The kernel of $\widetilde{\Ad}$ is
$$\ker(\widetilde{\Ad})=\{T\in\cl^{\times}_{p,q} \semicolon \widetilde{\Ad}_T(U)=U \quad \mbox{for all}\,\, U\in\cl_{p,q}\}=\cl^{0 \times}_{p,q}.$$

In the Clifford algebra $\cl_{p,q}$, we have a vector subspace $V=\cl^1_{p,q}$ with a quadratic form $Q(x)$ or a symmetric bilinear form $g(x,x)$:
$$g(x,y)=\frac{1}{2}(Q(x+y)-Q(x)-Q(y))=\frac{1}{2}(xy+yx)|_{e\to 1},\, x, y\in\cl^1_{p,q}.$$

\begin{lemma}
$\widetilde{\Ad}: \cl^{1\times}_{p,q} \to \O(p,q)$ on $V$.
\end{lemma}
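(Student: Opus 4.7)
The plan is to take an invertible vector $x\in\cl^{1\times}_{p,q}$ (so $x\in V$ with $x^2=g(x,x)e\neq 0$, hence $x^{-1}=x/g(x,x)$), apply $\widetilde{\Ad}_x$ to an arbitrary $y\in V$, and show the result lies in $V$ and acts as a reflection.

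First I would exploit that $x$ has grade one, so $\widehat{x}=-x$, which gives $\widetilde{\Ad}_x(y)=\widehat{x}y x^{-1}=-xyx^{-1}$ for any $y\in V$. Then I would use the defining Clifford relation applied to vectors:
\begin{equation*}
xy+yx=2g(x,y)\,e,\qquad x,y\in V,
\end{equation*}
to rewrite $-xy=yx-2g(x,y)e$. Multiplying by $x^{-1}=x/g(x,x)$ on the right produces the explicit formula
\begin{equation*}
\widetilde{\Ad}_x(y)=y-\frac{2g(x,y)}{g(x,x)}\,x,
\end{equation*}
which is manifestly an element of $V$, so $\widetilde{\Ad}_x$ restricts to a linear endomorphism of the $n$-dimensional subspace $V\subset\cl_{p,q}$.

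Next I would verify that this endomorphism preserves the form $g$, i.e.\ lies in $\O(p,q)$. The most economical way is to recognise the displayed formula as the standard reflection across the hyperplane $x^{\perp}=\{y\in V \semicolon g(x,y)=0\}$ along the non-isotropic direction $x$; such reflections are classical isometries of $(V,g)$. Alternatively, one can compute $g(\widetilde{\Ad}_x(y),\widetilde{\Ad}_x(z))$ directly: the cross terms produced by the subtracted $x$-component cancel because $g(x,x)$ appears symmetrically, leaving $g(y,z)$. Either way, after fixing the basis $e_1,\ldots,e_n$ of $V$, the matrix of $\widetilde{\Ad}_x$ with respect to $\eta=\mathrm{diag}(1,\ldots,1,-1,\ldots,-1)$ satisfies $A^{\T}\eta A=\eta$, so $\widetilde{\Ad}_x\in\O(p,q)$.

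There is no real obstacle: the whole argument is a one-line manipulation of the Clifford relation, with the mild subtlety that the twisted (rather than the plain) adjoint representation is exactly what is needed so that the sign produced by $\widehat{x}=-x$ converts $-xyx^{-1}$ into the reflection $y\mapsto y-2\frac{g(x,y)}{g(x,x)}x$ instead of the negative reflection one would get from $\Ad_x$. This sign is the whole point of introducing $\widetilde{\Ad}$ and is what makes the image land in $\O(p,q)$ rather than in a coset of it.
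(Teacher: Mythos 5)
Your proof is correct. It differs slightly in organization from the paper's: you first derive the explicit reflection formula
$\widetilde{\Ad}_x(y)=y-\frac{2g(x,y)}{g(x,x)}x$
from the anticommutation relation, which simultaneously shows that the image lands in $V$ and (being a standard hyperplane reflection) preserves $g$. The paper instead establishes preservation of the quadratic form by squaring: it computes $(\hat{v}xv^{-1})^2=\hat{v}xv^{-1}\hat{v}xv^{-1}=x^2$, using that $x^2$ is a scalar in $\cl^0_{p,q}$ and that $v^{-1}\hat{v}=-e$ for a vector $v$; the reflection formula is then displayed immediately after the lemma as a separate remark. Your order of presentation is arguably a touch cleaner, since the paper's square computation tacitly presupposes $\widetilde{\Ad}_v x\in V$ in order to invoke $Q$, whereas your derivation of the reflection formula makes the containment in $V$ manifest before any appeal to orthogonality. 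Both arguments are one-line applications of the Clifford relation and the grade-one identity $\hat{v}=-v$, so the mathematical content is essentially identical.
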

\begin{proof}
For $v\in\cl^{1\times}_{p,q}$ and $x\in\cl^1_{p,q}$ we have
$$Q(\widetilde{\Ad}_v x)=(\hat{v}xv^{-1})^2=\hat{v} x v^{-1} \hat{v} x v^{-1}=x^2=Q(x)$$
because $x^2\in\cl^0_{p,q}$.
\end{proof}

$\widetilde{\Ad}_v$ acts on $V$ as a reflection along $v$ (in the hyperplane orthogonal to $v$):
$$\widetilde{\Ad}_v x=\hat{v} x v^{-1}= x-(xv+vx)v^{-1}=x-2 \frac{g(x,v)}{g(v,v)}v,\quad v\in \cl^{1\times}_{p,q},\quad x\in\cl^1_{p,q}.$$

\begin{theorem}[Cartan-Diedonn\'{e}]\label{thCD} Every orthogonal transformation on a nongenerate space $(V, g)$ is a product of reflections (the number $\leq \dim V$) in hyperplanes.
\end{theorem}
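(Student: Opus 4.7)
The plan is to proceed by induction on $n = \dim V$. The base case $n = 1$ is immediate: an orthogonal transformation of a nondegenerate $1$-dimensional space is $\pm \mathrm{id}$, which is either the empty product of reflections or a single reflection.

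For the inductive step, assuming the claim for dimensions below $n$, I would try to produce a single reflection $s$ such that $s \circ f$ fixes some non-isotropic vector $v \in V$. Once this is achieved, $s \circ f$ restricts to an orthogonal transformation of the nondegenerate hyperplane $v^\perp \subset V$, and the inductive hypothesis expresses this restriction as a product of at most $n - 1$ reflections of $v^\perp$. Extending each by the identity on $\mathrm{span}(v)$ produces reflections of $V$, and multiplying by $s$ then presents $f$ as a product of at most $n$ reflections.

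To find such a reflection, I would pick a non-isotropic $v$ with $f(v) \neq v$; such a vector exists whenever $f \neq \mathrm{id}$, because the non-isotropic vectors span $V$ in the nondegenerate case. The identity
$$g(v - f(v), v - f(v)) + g(v + f(v), v + f(v)) = 4\, g(v, v) \neq 0,$$
which is just a rewriting of $g(f(v), f(v)) = g(v, v)$, forces at least one of $v \pm f(v)$ to be non-isotropic. In the generic subcase where $v - f(v)$ is non-isotropic, plugging $x = f(v)$ and $u = v - f(v)$ into the reflection formula $\widetilde{\Ad}_u\, x = x - 2\, g(x, u)/g(u, u)\, u$ and using $g(f(v), f(v)) = g(v, v)$ gives $\widetilde{\Ad}_{v - f(v)}(f(v)) = v$, so $\widetilde{\Ad}_{v - f(v)} \circ f$ fixes $v$ and the induction closes cleanly.

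The hard part will be the degenerate subcase where $v - f(v)$ is isotropic for every non-isotropic $v$ not fixed by $f$. Then only $v + f(v)$ is non-isotropic; reflecting along it sends $f(v)$ to $-v$, so restoring $v$ costs a second reflection (along $v$ itself), which naively produces $n + 1$ reflections in total. To preserve the sharp bound, I would argue that in this subcase the image $(f - \mathrm{id})(V)$ is totally isotropic: polarising the isotropy condition applied to non-isotropic $v_1$, $v_2$, and $v_1 + v_2$ yields $g((f - \mathrm{id})(v_1), (f - \mathrm{id})(v_2)) = 0$, so the fixed space of $f$ already has dimension at least $n/2$ and the extra reflection can be absorbed by the inductive slack. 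An alternative organisation, following Scherk, is to induct on $k = \dim \mathrm{im}(f - \mathrm{id})$ rather than on $n$, and to show that a well-chosen reflection always strictly decreases $k$; this isolates the totally isotropic case for the only nontrivial argument, which is what I expect to be the real technical obstacle in the proof.
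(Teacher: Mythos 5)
The paper states the Cartan--Dieudonn\'e theorem without proof; it is cited there as a classical result, so there is no in-paper argument to compare against. Your plan is the standard induction on $\dim V$, and you carry out the generic case correctly: the identity $Q(v-f(v))+Q(v+f(v))=4Q(v)$ guarantees one of $v\pm f(v)$ is non-isotropic, your computation that $\widetilde{\Ad}_{v-f(v)}$ sends $f(v)$ to $v$ is right, and reflections of the nondegenerate hyperplane $v^\perp$ do extend to $V$ as claimed.

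The gap is in the degenerate case; ``the extra reflection can be absorbed by the inductive slack'' is not a proof. You correctly deduce that $(f-\mathrm{id})(V)$ is totally isotropic, so that $\ker(f-\mathrm{id})=\mathrm{im}(f-\mathrm{id})^\perp$ contains $\mathrm{im}(f-\mathrm{id})$ as its radical and has dimension $n-k$, where $k:=\dim\mathrm{im}(f-\mathrm{id})\leq n/2$. But what this buys is different from what you describe. If $k<n/2$, the nondegenerate quotient $\ker(f-\mathrm{id})/\mathrm{im}(f-\mathrm{id})$ is nonzero and so contains a non-isotropic class; lifting it gives a non-isotropic \emph{fixed} vector $v$, and restricting $f$ to $v^\perp$ closes the induction with at most $n-1$ reflections -- you do not pay for two reflections and recover one, you pay for none at all. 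The case that genuinely resists is $\ker(f-\mathrm{id})=\mathrm{im}(f-\mathrm{id})$ maximal totally isotropic of dimension $n/2$: there every fixed vector is isotropic, so the induction on $n$ cannot even start, and the two-reflection trick you sketch overshoots the bound. This is exactly where Scherk's reorganisation is needed -- inducting on $k$ rather than $n$, and in the maximal totally isotropic case spending two reflections, one to leave the totally isotropic locus and one to strictly decrease $k$, which stays within $k+2\leq 2k\leq n$ once one checks that $k\geq 2$ in this situation. You flag both the obstacle and the Scherk alternative, so the plan is right in spirit, but as written it does not close the critical case.
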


Let us consider the group  $\Gamma^2_{p,q}:=\{v_1 v_2 \cdots v_k \semicolon v_1, \ldots, v_k\in\cl^{1\times}_{p,q}\}.$

\begin{lemma}\label{lemmagamma2}
$\widetilde{\Ad}(\Gamma^2_{p,q})=\O(p,q)$ (surjectivity).
\end{lemma}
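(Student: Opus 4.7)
The plan is to combine the two ingredients already assembled just before the statement: the fact that for each nonisotropic vector $v\in\cl^{1\times}_{p,q}$ the map $\widetilde{\Ad}_v$ acts on $V=\cl^1_{p,q}$ as the reflection in the hyperplane orthogonal to $v$, and the Cartan--Dieudonn\'e theorem (Theorem \ref{thCD}), which says that every element of $\O(p,q)$ is a product of at most $n$ such reflections.

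First I would check that $\widetilde{\Ad}(\Gamma^2_{p,q})\subset\O(p,q)$. The twisted adjoint representation is a group homomorphism $\cl^\times_{p,q}\to\Aut\cl_{p,q}$, since $\widetilde{\Ad}_{UV}(x)=\widehat{UV}x(UV)^{-1}=\widehat{U}(\widehat{V}xV^{-1})U^{-1}=\widetilde{\Ad}_U\widetilde{\Ad}_V(x)$. Therefore, for $T=v_1\cdots v_k\in\Gamma^2_{p,q}$ with $v_i\in\cl^{1\times}_{p,q}$, the restriction of $\widetilde{\Ad}_T$ to $V$ equals the composition $\widetilde{\Ad}_{v_1}\circ\cdots\circ\widetilde{\Ad}_{v_k}$, which by the preceding lemma is a composition of orthogonal reflections, hence lies in $\O(p,q)$.

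Next I would establish the reverse inclusion. Given any $A\in\O(p,q)$, Theorem \ref{thCD} supplies vectors $v_1,\ldots,v_k\in V$, with $k\le n$, such that $A=R_{v_1}\cdots R_{v_k}$, where $R_{v_i}$ is the reflection in the hyperplane orthogonal to $v_i$. Each such $v_i$ is nonisotropic, i.e.\ $Q(v_i)\ne 0$, and since $v_i^2=Q(v_i)e$ in the Clifford algebra, $v_i$ is invertible with $v_i^{-1}=Q(v_i)^{-1}v_i$, so $v_i\in\cl^{1\times}_{p,q}$. The explicit formula $\widetilde{\Ad}_v x=x-2\frac{g(x,v)}{g(v,v)}v$ shows $\widetilde{\Ad}_{v_i}|_V=R_{v_i}$. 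Therefore $A=\widetilde{\Ad}_{v_1}\cdots\widetilde{\Ad}_{v_k}=\widetilde{\Ad}_{v_1\cdots v_k}$, and $v_1\cdots v_k\in\Gamma^2_{p,q}$.

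The main obstacle is a subtle one: the statement of the Cartan--Dieudonn\'e theorem guarantees reflections but leaves implicit the nondegeneracy of the reflecting vectors, which is precisely what is needed to realize each reflection by an invertible Clifford element; this is the only place where nondegeneracy of the form on $\cl^1_{p,q}=V$ matters. Everything else reduces to the multiplicativity of $\widetilde{\Ad}$ and the reflection formula already derived above.
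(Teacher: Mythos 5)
Your proof is correct and follows essentially the same route as the paper: invoke the Cartan--Dieudonn\'e theorem to decompose an arbitrary $f\in\O(p,q)$ into reflections, identify each reflection with $\widetilde{\Ad}_{v_i}$ for some nonisotropic $v_i\in\cl^{1\times}_{p,q}$ via the formula derived just before the lemma, and then use multiplicativity of $\widetilde{\Ad}$ to collapse the composition into $\widetilde{\Ad}_{v_1\cdots v_k}$. Your explicit check that the reflecting vectors are nonisotropic (hence invertible in the Clifford algebra) is a worthwhile remark that the paper leaves implicit.
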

\begin{proof} If $f\in\O(p,q)$, then
\begin{eqnarray}
f(x)&=&\widetilde{\Ad}_{v_1}\circ\cdots\circ\widetilde{\Ad}_{v_k}(x)=\hat{v_1}\cdots\hat{v_k}x v_k^{-1}\cdots v_1^{-1}\nonumber\\
&=&\widehat{v_1 \cdots v_k}x (v_1 \cdots v_k)^{-1}=\widetilde{\Ad}_{v_1\cdots v_k}(x)\nonumber
\end{eqnarray}
 for $v_1, \ldots, v_k\in V^\times$ and $x\in V$.
\end{proof}

Let us consider the group $\Gamma^1_{p,q}:=\{T\in\cl^\times_{p,q} \semicolon \hat{T} x T^{-1}\in\cl^1_{p,q}\, \mbox{for all}\, x\in\cl^1_{p,q}\}$ and the
\emph{norm mapping (norm function)} $N: \cl_{p,q} \to \cl_{p,q}$, $N(U):=\widehat{\widetilde{U}} U$.

\begin{lemma}\label{lemmanorm}
$N: \Gamma^1_{p,q} \to \cl^{0 \times}_{p,q}$ $\cong\R^\times$.
\end{lemma}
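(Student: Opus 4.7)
The plan is to show that $N(T) = \widehat{\widetilde{T}}\, T$ is a nonzero scalar in $\cl^0_{p,q}$ for every $T \in \Gamma^1_{p,q}$. The starting point exploits the defining property of $\Gamma^1_{p,q}$: for any $x \in V = \cl^1_{p,q}$, the element $y := \hat{T}\, x\, T^{-1}$ again lies in $V$, hence is fixed by reversion. Applying reversion to this equality, using that reversion is an anti-involution and $\widetilde{x} = x$ for $x \in V$, I obtain $\hat{T}\, x\, T^{-1} = \widetilde{T}^{-1}\, x\, \widehat{\widetilde{T}}$. Multiplying on the left by $\widetilde{T}$ and on the right by $T$ gives
$$\widetilde{T}\, \hat{T}\, x \;=\; x\, \widehat{\widetilde{T}}\, T \;=\; x\, N(T),$$
and since $\widetilde{T}\, \hat{T} = \widehat{\,\widehat{\widetilde{T}}\, T\,} = \widehat{N(T)}$, this produces the key relation $\widehat{N(T)}\, x = x\, N(T)$ for every $x \in V$.

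Next I will decompose $N(T) = N^{(0)} + N^{(1)}$ along the $\Z_2$-grading $\cl_{p,q} = \cl^{(0)}_{p,q} \oplus \cl^{(1)}_{p,q}$. The relation above becomes $[N^{(0)}, x] = \{N^{(1)}, x\}$; since the left-hand side lies in $\cl^{(1)}_{p,q}$ and the right-hand side in $\cl^{(0)}_{p,q}$, both vanish identically. The first condition says $N^{(0)}$ commutes with every generator $e_a$, so $N^{(0)} \in \Cen(\cl_{p,q})$. By Theorem \ref{thcenter}, $N^{(0)} \in \cl^0_{p,q}$ when $n$ is even, and $N^{(0)} \in \cl^0_{p,q} \oplus \cl^n_{p,q}$ when $n$ is odd; but in the odd case $\cl^n_{p,q} \subset \cl^{(1)}_{p,q}$ while $N^{(0)} \in \cl^{(0)}_{p,q}$, so the pseudoscalar summand is excluded and we always conclude $N^{(0)} \in \cl^0_{p,q}$.

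The main obstacle is ruling out the odd part $N^{(1)}$, and this is precisely where nondegeneracy of $(p,q)$ enters. Writing $N^{(1)} = \sum_{|I| \text{ odd}} c_I e_I$, a direct graded-commutation computation shows that $e_I$ already anticommutes with $e_a$ when $a \notin I$ (because $|I|$ is odd), while $e_I$ \emph{commutes} with $e_a$ when $a \in I$. Hence $\{N^{(1)}, e_a\} = 2\, e_a \sum_{I \ni a} c_I e_I$. Multiplying by $e_a^{-1}$ (which exists because $e_a^2 = \pm e$ in the nondegenerate case) and using linear independence of the basis forces $c_I = 0$ for every $I$ containing $a$; letting $a$ range over $\{1,\dots,n\}$ covers every nonempty odd-length multi-index, so $N^{(1)} = 0$. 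Finally, because $T$, $\widetilde{T}$, $\hat{T}$ are all invertible, so is the product $N(T) = \widehat{\widetilde{T}}\, T$; combined with $N(T) \in \cl^0_{p,q}$, this places $N(T)$ in $\cl^{0\times}_{p,q} \cong \R^\times$, as claimed.
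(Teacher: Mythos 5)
Your proof is correct and follows the same strategy as the paper's: apply reversion to the defining relation $\widehat{T}xT^{-1}\in\cl^1_{p,q}$ to obtain $\widehat{N(T)}\,x = x\,N(T)$ for all $x\in\cl^1_{p,q}$, and then conclude that $N(T)$ is a nonzero scalar. The paper compresses the final step into a single appeal to $\ker(\widetilde{\Ad})=\cl^{0\times}_{p,q}$, a characterization stated only for elements fixing all of $\cl_{p,q}$ rather than just $V$; your $\Z_2$-graded decomposition $N(T)=N^{(0)}+N^{(1)}$ together with the explicit computation showing $N^{(1)}=0$ (using $e_a^{-1}$, hence nondegeneracy) and $N^{(0)}\in\cl^0_{p,q}$ makes that inference fully rigorous.
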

\begin{proof} If $T\in\Gamma^1_{p,q}$ and $x\in\cl^1_{p,q}$, then
$$\widehat{T} x T^{-1}=\widetilde{\widehat{T} x T^{-1}}=\widetilde{T^{-1}} x \widetilde{\widehat{T}}=(\widetilde{T})^{-1}x \widetilde{\widehat{T}}.$$
Since $\widehat{\widehat{\widetilde{T}} T} x=x \widehat{\widetilde{T}} T$, it follows that $\widehat{\widetilde{T}} T\in\ker\widetilde{\Ad}=\cl^{0\times}_{p,q}$.
\end{proof}

\begin{lemma}
$N: \Gamma^1_{p,q}\to \R^\times$ is a group homomorphism:
$$N(UV)=N(U)N(V),\qquad N(U^{-1})=(N(U))^{-1},\qquad U, V\in\Gamma^1_{p,q}.$$
\end{lemma}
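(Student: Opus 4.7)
The plan is to exploit two facts established earlier in the excerpt: first, the anti-involutive behavior of reversion ($\widetilde{UV}=\widetilde{V}\widetilde{U}$) and the involutive behavior of grade involution ($\widehat{UV}=\widehat{U}\widehat{V}$), which together give $\widehat{\widetilde{UV}}=\widehat{\widetilde{V}}\,\widehat{\widetilde{U}}$; and second, the content of the immediately preceding lemma (Lemma \ref{lemmanorm}), which identifies $N(U)$ as an element of the central subalgebra $\cl^{0\times}_{p,q}\cong \R^\times$ whenever $U\in\Gamma^1_{p,q}$. Centrality is the key ingredient that lets scalars slide past arbitrary Clifford algebra elements.

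For the main identity, I would compute directly:
\begin{equation*}
N(UV)=\widehat{\widetilde{UV}}\,(UV)=\widehat{\widetilde{V}}\,\widehat{\widetilde{U}}\,U\,V=\widehat{\widetilde{V}}\,N(U)\,V.
\end{equation*}
Since $N(U)\in\cl^{0\times}_{p,q}$ lies in the center of $\cl_{p,q}$ (by Theorem \ref{thcenter}, $\cl^0_{p,q}$ is contained in the center), it commutes with $\widehat{\widetilde{V}}$, yielding $N(UV)=N(U)\,\widehat{\widetilde{V}}\,V=N(U)N(V)$. Here I am implicitly using that $UV\in\Gamma^1_{p,q}$ so that $N(UV)$ lands in $\R^\times$; this closure follows from a short check that $\widehat{UV}\,x\,(UV)^{-1}=\widehat{U}(\widehat{V}xV^{-1})U^{-1}\in\cl^1_{p,q}$ for $x\in\cl^1_{p,q}$.

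The inverse formula then drops out formally: applying the multiplicativity just proved to the identity $e=UU^{-1}$ gives
\begin{equation*}
1=N(e)=\widehat{\widetilde{e}}\,e=e=N(U)N(U^{-1}),
\end{equation*}
so $N(U^{-1})=N(U)^{-1}$ provided $U^{-1}\in\Gamma^1_{p,q}$ and $N(U)$ is invertible in $\R^\times$. Invertibility of $N(U)$ is automatic from Lemma \ref{lemmanorm}, and closure of $\Gamma^1_{p,q}$ under inverses follows by rewriting $\widehat{\widetilde{U}}U=N(U)\in\cl^{0\times}_{p,q}$ as $U^{-1}=N(U)^{-1}\widehat{\widetilde{U}}$, whence $\widehat{U^{-1}}\,x\,U=N(U)^{-1}\widetilde{U}\,x\,U$ can be shown to lie in $\cl^1_{p,q}$ by applying reversion to the defining property of $\Gamma^1_{p,q}$.

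There is essentially no deep obstacle; the entire argument is an exercise in juggling involutions, and the only subtlety is the bookkeeping verification that $\Gamma^1_{p,q}$ is genuinely closed under products and inverses so that the formulas make sense. I would present this verification as a short remark preceding the computation, so the multiplicativity step reads as a clean three-line calculation pivoting on the centrality of $N(U)$.
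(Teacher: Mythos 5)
Your proof is correct and follows essentially the same computation as the paper: expand $N(UV)=\widehat{\widetilde{V}}\,\widehat{\widetilde{U}}\,U\,V$, recognize $N(U)$ in the middle, slide it out, and recognize $N(V)$. The paper leaves the ``slide it out'' step implicit, while you correctly identify and spell out the justification (centrality of $N(U)\in\cl^{0\times}_{p,q}$ via Lemma~\ref{lemmanorm}), and you also make the reasonable observation that one should check $\Gamma^1_{p,q}$ is closed under products and inverses so the formulas are well-posed; these are minor additions to the paper's argument rather than a different route.
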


\begin{proof} We have
$$N(UV)=\widehat{\widetilde{UV}}UV=\widehat{\widetilde{V}}\widehat{\widetilde{U}}UV= \widehat{\widetilde{V}} N(U) V=N(U) N(V)$$ and
$$e=N(e)=N(U U^{-1})=N(U) N(U^{-1}).$$
\end{proof}

\begin{lemma}\label{lemmagamma1}
$\widetilde{\Ad}: \Gamma^1_{p,q} \to \O(p,q)$.
\end{lemma}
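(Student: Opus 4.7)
The plan is to establish two things: first, that for $T\in\Gamma^1_{p,q}$ the linear map $\widetilde{\Ad}_T$ actually restricts to a well-defined map $V\to V$ on $V:=\cl^1_{p,q}$; and second, that this restriction preserves the quadratic form $Q$, so that it lies in $\O(V,Q)=\O(p,q)$.

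The first point is immediate from the definition of $\Gamma^1_{p,q}$: by construction, $\widehat{T}xT^{-1}\in V$ whenever $x\in V$ and $T\in\Gamma^1_{p,q}$. So the only real content is preservation of $Q$. For this I would compute $\widehat{y}y$ in two different ways for $y:=\widetilde{\Ad}_T(x)=\widehat{T}xT^{-1}\in V$. On one hand, since $y\in V\subset\cl^1_{p,q}$ we have $\widehat{y}=-y$, so $\widehat{y}y=-y^2=-Q(y)\,e$. On the other hand, using the fact that grade involution is an algebra homomorphism with $\widehat{T^{-1}}=(\widehat{T})^{-1}$, we get
$$
\widehat{y}\,y=\widehat{\widehat{T}xT^{-1}}\cdot \widehat{T}xT^{-1}=T\,\widehat{x}\,\widehat{T}^{-1}\widehat{T}\,x\,T^{-1}=T\,\widehat{x}\,x\,T^{-1}.
$$
Since $x\in V$, $\widehat{x}=-x$ and so $\widehat{x}\,x=-x^2=-Q(x)\,e$; the central scalar $-Q(x)\,e$ slips through, giving $\widehat{y}y=-Q(x)\,e$. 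Comparing with the previous expression yields $Q(y)=Q(x)$.

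Having shown that $\widetilde{\Ad}_T|_V$ is a $Q$-preserving linear map $V\to V$, I would invoke nondegeneracy of $Q$ (the group $\O(p,q)$ is defined in the nondegenerate setting) to conclude that this map is automatically injective, and hence bijective on the finite-dimensional space $V$. Thus $\widetilde{\Ad}_T|_V\in \O(V,Q)=\O(p,q)$. The fact that $T\mapsto\widetilde{\Ad}_T|_V$ is a group homomorphism follows from the straightforward identity $\widetilde{\Ad}_{TS}=\widetilde{\Ad}_T\circ\widetilde{\Ad}_S$, which in turn follows because $\widehat{TS}=\widehat{T}\widehat{S}$ and $(TS)^{-1}=S^{-1}T^{-1}$.

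I do not anticipate a real obstacle. The only place one must be careful is the sign bookkeeping around grade involution: using $\widehat{x}=-x$ for $x\in V$ twice (once for $x$ and once for $y$) is what produces the cancellation, and using $\widehat{T^{-1}}=\widehat{T}^{-1}$ is what collapses the middle. Notably, the norm homomorphism $N$ of Lemma \ref{lemmanorm} is \emph{not} needed for this particular statement; it enters later when one wants to control $\ker\widetilde{\Ad}$ on $\Gamma^1_{p,q}$ or build the Pin and Spin groups.
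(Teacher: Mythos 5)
Your proof is correct. The crux of both arguments is the same -- evaluate a ``norm-like'' quantity of $y=\widehat{T}xT^{-1}$ in two ways and compare -- but the routes differ. The paper routes through the norm map $N(U)=\widehat{\widetilde{U}}U$: it first records $N(\widehat T)=N(T)$ and then invokes the multiplicativity $N(\widehat{T}xT^{-1})=N(\widehat T)N(x)N(T^{-1})$, which it can do because by Lemma~\ref{lemmanorm} the value $N(T)$ is a central scalar, and $N(x)=-Q(x)e$ is central as well. You instead compute $\widehat{y}\,y$ directly using only the fact that grade involution is an algebra automorphism with $\widehat{T^{-1}}=\widehat{T}^{-1}$: the inner $\widehat{T}^{-1}\widehat T$ cancels to give $T\widehat{x}xT^{-1}$, and the central scalar $\widehat{x}x=-Q(x)e$ passes through conjugation. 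Since on $V$ one has $\widehat{y}\,y=N(y)=-Q(y)e$, the two computations coincide. Your version is more self-contained -- it needs neither reversion nor the norm homomorphism established in Lemma~\ref{lemmanorm} -- whereas the paper's version deliberately reuses the $N$ machinery it has just built and will need again. Your closing remarks on bijectivity (via nondegeneracy of $Q$, or equivalently by exhibiting $\widetilde{\Ad}_{T^{-1}}$ as the inverse) and on the homomorphism property are both correct, and your observation that Lemma~\ref{lemmanorm} is not strictly needed for \emph{this} lemma is accurate -- the paper's choice to use $N$ here is a matter of economy of means across the section rather than logical necessity.
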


\begin{proof} We have
$$N(\widehat{T})=\widehat{\widetilde{\widehat{T}}} \widehat{T}=\widehat{\widehat{\widetilde{T}} T} =\widehat{N(T)}=N(T)$$
 and
\begin{eqnarray}
N(\widetilde{\Ad}_T(x))&=& N(\widehat{T}xT^{-1})=N(\widehat{T})N(x)N(T^{-1})\nonumber\\
&=& N(T)N(x)(N(T))^{-1}=N(x).\nonumber
 \end{eqnarray}
 Since $N(x)=\widehat{\widetilde{x}}x=-x^2=-Q(x)$, it follows that $Q(\widetilde{\Ad}_T(x))=Q(x)$.
\end{proof}

\begin{lemma}\label{lemmagamma12}
$\Gamma^1_{p,q}=\Gamma^2_{p,q}$.
\end{lemma}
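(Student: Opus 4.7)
The plan is to prove the two inclusions separately. For $\Gamma^2_{p,q}\subseteq\Gamma^1_{p,q}$, the argument is immediate from the computation displayed in the proof of Lemma~\ref{lemmagamma2}: if $T=v_1\cdots v_k$ with $v_i\in\cl^{1\times}_{p,q}$, then for any $x\in\cl^1_{p,q}$ one has $\widehat{T}xT^{-1}=\widetilde{\Ad}_{v_1}\circ\cdots\circ\widetilde{\Ad}_{v_k}(x)$, and each factor is a reflection that sends $V:=\cl^1_{p,q}$ into itself, so $T\in\Gamma^1_{p,q}$.

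For the opposite inclusion I would take $T\in\Gamma^1_{p,q}$ and invoke Cartan--Dieudonn\'e. By Lemma~\ref{lemmagamma1}, $\widetilde{\Ad}_T$ restricts to an element of $\O(p,q)$ on $V$, so Theorem~\ref{thCD} produces vectors $v_1,\ldots,v_k\in\cl^{1\times}_{p,q}$ with $\widetilde{\Ad}_T|_V=\widetilde{\Ad}_S|_V$, where $S:=v_1\cdots v_k\in\Gamma^2_{p,q}$. Setting $R:=S^{-1}T\in\cl^\times_{p,q}$, one obtains the key identity $\widehat{R}\,x=xR$ for every $x\in V$, and everything reduces to showing that $R$ is a nonzero scalar multiple of $e$.

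The crucial step is this last reduction, and it is where the main technical work lies. I would split $R=R^{(0)}+R^{(1)}$ by parity; the identity $\widehat{R}x=xR$ becomes $R^{(0)}x-R^{(1)}x=xR^{(0)}+xR^{(1)}$, whose left side is odd-graded and right side even-graded, so both must vanish separately to give $[R^{(0)},x]=0$ and $\{R^{(1)},x\}=0$ for all $x\in V$. Since $V$ generates $\cl_{p,q}$ as an algebra, the first condition places $R^{(0)}$ in the center of $\cl_{p,q}$, and Theorem~\ref{thcenter} forces $R^{(0)}=\alpha e$ (the pseudoscalar $e_{1\ldots n}$ is odd when $n$ is odd, so cannot contribute to the even part). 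For $R^{(1)}$, a direct expansion in the basis together with the identities $\{e_A,e_i\}=0$ for $i\notin A$ and $\{e_A,e_i\}=\pm 2\eta_{ii}\,e_{A\setminus\{i\}}$ for $i\in A$ (valid for $|A|$ odd) shows, by nondegeneracy ($\eta_{ii}\neq 0$) and linear independence of the basis, that every coefficient of $R^{(1)}$ must vanish, so $R^{(1)}=0$. Invertibility of $R$ then gives $\alpha\neq 0$, and finally the factorization $\alpha e=w\cdot(\alpha/Q(w))w$ for any chosen $w\in\cl^{1\times}_{p,q}$ exhibits $\alpha e\in\Gamma^2_{p,q}$, whence $T=\alpha v_1\cdots v_k\in\Gamma^2_{p,q}$. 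The main obstacle I expect is the parity--centralizer argument eliminating $R^{(1)}$: one must keep the bookkeeping honest and verify explicitly that nondegeneracy is being used, to rule out the possibility of some odd monomial slipping through and anticommuting with every generator.
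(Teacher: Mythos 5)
Your proof is correct and follows essentially the same route as the paper: both directions are handled the same way, with the nontrivial inclusion reduced via Lemmas~\ref{lemmagamma1} and~\ref{lemmagamma2} (i.e.\ Cartan--Dieudonn\'e) to showing that $R:=S^{-1}T$, which satisfies $\widehat{R}x=xR$ for all $x\in V$, must be a nonzero scalar. The paper merely asserts $TS^{-1}=\lambda e$ at this point; your parity-split argument is a useful fill, since the kernel computation given earlier in the section concerns $\widetilde{\Ad}$ acting on all of $\cl_{p,q}$, and because $\widetilde{\Ad}$ is not an algebra homomorphism one cannot pass directly from triviality on $V$ to triviality on $\cl_{p,q}$. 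One small slip in the write-up: the equation $R^{(0)}x-R^{(1)}x=xR^{(0)}+xR^{(1)}$ does not have an odd-graded left side and an even-graded right side---both sides mix parities; the correct statement is that matching the odd parts gives $R^{(0)}x=xR^{(0)}$ and matching the even parts gives $-R^{(1)}x=xR^{(1)}$, which is exactly the pair of conditions you actually use in the next step, so the argument goes through.
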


\begin{proof}
We know that $\Gamma^2_{p,q}\subseteq \Gamma^1_{p,q}$. Let us prove that $\Gamma^1_{p,q}\subseteq \Gamma^2_{p,q}$. If $T\in\Gamma^1_{p,q}$, then $\widetilde{\Ad}_T\in\O(p,q)$ by Lemma \ref{lemmagamma1}. Using Lemma \ref{lemmagamma2}, we conclude that $S\in\Gamma^2_{p,q}$ exists: $\widetilde{\Ad}_S=\widetilde{\Ad}_T$. We obtain $\widetilde{\Ad}_{T S^{-1}}=\id$ and $TS^{-1}=\lambda e$, $\lambda\in\R$. Finally, $T=\lambda S\in\Gamma^2_{p,q}$.
\end{proof}

\begin{defi} The following group is called Lipschitz group
\begin{eqnarray}
\Gamma^\pm_{p,q}:=\Gamma^1_{p,q}=\Gamma^2_{p,q}&=& \{T\in\cl^{(0)\times}_{p,q}\cup\cl^{(1)\times}_{p,q} \semicolon T x T^{-1}\in \cl^1_{p,q} \, \mbox{for all}\, x\in\cl^1_{p,q}\}\nonumber\\
&=&\{v_1 v_2 \cdots v_k \semicolon v_1, \ldots, v_k\in\cl^{1\times}_{p,q}\}.\nonumber
\end{eqnarray}
\end{defi}
\begin{defi} The following group is called Clifford group
$$\Gamma_{p,q}:=\{T\in\cl^\times_{p,q} \semicolon  T x T^{-1}\in \cl^1_{p,q}\, \mbox{for all} \, x\in\cl^1_{p,q}\}\supseteq \Gamma^\pm_{p,q}.$$
\end{defi}

So, we have $\widetilde{\Ad}(\Gamma^\pm_{p,q})=\O(p,q)$, i.e.
\begin{eqnarray}
\mbox{for any}\, P=||p^a_b||\in\O(p,q) \,\, \mbox{there exists}\,\, T\in\Gamma^\pm_{p,q}: \widehat{T} e_a T^{-1}=p_a^b e_b.\label{eqr1}
\end{eqnarray}

Let us consider the following subgroup of Clifford group
$$\Gamma^+_{p,q}:=\{T\in\cl^{(0)\times}_{p,q} \semicolon T x T^{-1}\in \cl^1_{p,q}\, \mbox{for all} \, x\in\cl^1_{p,q}\}\subset \Gamma^\pm_{p,q}.$$

We have
$\widetilde{\Ad}(\Gamma^+_{p,q})=\Ad(\Gamma^+_{p,q})=\SO(p,q)$, i.e.
\begin{eqnarray}
\mbox{for all} \, P=||p^a_b||\in\SO(p,q) \, \, \mbox{there exists}\,\, T\in\Gamma^+_{p,q}: \widehat{T} e_a T^{-1}=Te_a T^{-1}=p_a^b e_b.\label{eqr2}
\end{eqnarray}

We can prove (\ref{eqr1}) and (\ref{eqr2}) without the Cartan-Diedonn\'{e} theorem (see Theorem \ref{thCD} and Lemmas \ref{lemmagamma2} - \ref{lemmagamma12}) but with the use of the Pauli theorem (see Theorems \ref{thPauli1} and \ref{thPauli2}). One can find this approach in \cite{sh11} and \cite{msh3}.

\subsection{Spin Groups}\label{sec:6.3}

Let us define spin groups as normalized Lipschitz subgroups.
\begin{defi} The following groups are called spin groups:
\begin{eqnarray}
&&\Pin(p,q):=\{ T\in\Gamma^\pm_{p,q} \semicolon \widetilde{T} T=\pm e\}=\{ T\in\Gamma^\pm_{p,q} \semicolon \widehat{\widetilde{T}} T=\pm e\}\nonumber\\
&&\Pin_+(p,q):=\{T\in\Gamma^\pm_{p,q} \semicolon \widehat{\widetilde{T}} T=+e\}\nonumber\\
&&\Pin_-(p,q):=\{T\in\Gamma^\pm_{p,q} \semicolon \widetilde{T} T=+e\}\label{spingr}\\
&&\Spin(p,q):=\{T\in\Gamma^+_{p,q} \semicolon \widetilde{T} T= \pm e\}= \{T\in\Gamma^+_{p,q} \semicolon \widehat{\widetilde{T}} T= \pm e\}\nonumber\\
&&\Spin_+(p,q):=\{T\in\Gamma^+_{p,q} \semicolon \widetilde{T} T=+e\}=\{T\in\Gamma^+_{p,q} \semicolon \widehat{\widetilde{T}} T=+e\}.\nonumber
\end{eqnarray}
\end{defi}
In the case $p\neq 0$ and $q\neq 0$, we have
$$\Pin(p,q)=\Spin_+(p,q)\sqcup\Pin_+(p,q)'\sqcup\Pin_-(p,q)'\sqcup \Spin(p,q)'$$
$$\Pin_+(p,q)=\Spin_+(p,q)\sqcup\Pin_+(p,q)',\, \Pin_-(p,q)=\Spin_+(p,q)\sqcup\Pin_-(p,q)'$$ $$\Spin(p,q)=\Spin_+(p,q)\sqcup\Spin(p,q)'.$$

In Euclidian cases, we have two groups:
$$\Pin(n):=\Pin(n,0)=\Pin_-(0,n),\, \Spin(n,0)=\Pin_+(n,0)=\Spin_+(n,0)$$
$$\Pin(0,n):=\Pin(0,n)=\Pin_+(0,n),\, \Spin(0,n)=\Pin_-(0,n)=\Spin_+(0,n).$$

All considered subgroups are normal (for example, $\Spin_+(p,q)\triangleleft\Spin(p,q)$).

All quotient groups are the same as for the group $\O(p,q)$ and its subgroups respectively (see (\ref{quot})).

\begin{theorem}\label{thDC} The following homomorphisms are surjective with the kernel $\{\pm 1\}$:\\
\begin{eqnarray}
&&\widetilde{\Ad}: \Pin(p,q) \to \O(p,q)\nonumber\\
&&\widetilde{\Ad}: \Spin(p,q) \to \SO(p,q)\nonumber\\
&&\widetilde{\Ad}: \Spin_+(p,q) \to \SO_+(p,q)\nonumber\\
&&\widetilde{\Ad}: \Pin_+(p,q) \to \O_+(p,q)\nonumber\\
&&\widetilde{\Ad}: \Pin_-(p,q) \to \O_-(p,q).\nonumber
\end{eqnarray}
\end{theorem}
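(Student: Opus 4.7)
All five maps will be controlled simultaneously by two already-established ingredients: the Lipschitz surjectivity $\widetilde{\Ad}(\Gamma^\pm_{p,q})=\O(p,q)$ from (\ref{eqr1}) (with its even refinement (\ref{eqr2})), and the identification $\ker(\widetilde{\Ad})=\cl^{0\times}_{p,q}$ carried out just after $\widetilde{\Ad}$ was defined. The spin groups are by definition the Lipschitz elements whose norm equals $\pm e$, so the proof amounts to rescaling a Lipschitz preimage into the correctly normalized subgroup, then intersecting the one-parameter scalar kernel with the normalization condition.

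For surjectivity of $\widetilde{\Ad}\colon\Pin(p,q)\to\O(p,q)$, I would take $P\in\O(p,q)$, pick $T'\in\Gamma^\pm_{p,q}$ with $\widetilde{\Ad}_{T'}=P$ via (\ref{eqr1}), and exploit that $T'=v_1\cdots v_k$ with $v_i\in\cl^{1\times}_{p,q}$ gives $\widetilde{T'}T'=v_1^2\cdots v_k^2\in\R^\times$ (Lemma \ref{lemmanorm} together with $\widehat{T'}=\pm T'$ on each parity component of $\Gamma^\pm_{p,q}$). Then $\lambda\in\R$ chosen with $\lambda^2=|\widetilde{T'}T'|^{-1}$ yields $T=\lambda T'\in\Pin(p,q)$ with $\widetilde{\Ad}_T=P$. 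Restricting to $\Gamma^+_{p,q}$ via (\ref{eqr2}) handles $\Spin(p,q)\to\SO(p,q)$ identically.

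The kernel argument is uniform for all five maps. If $T\in\Pin(p,q)$ satisfies $\widetilde{\Ad}_T=\id$, then $T\in\cl^{0\times}_{p,q}$, so $T=\lambda e$; the normalization $\widetilde{T}T=\pm e$ forces $\lambda=\pm 1$, and both $\pm e$ manifestly lie in $\Spin_+(p,q)$ and hence in each of the five spin groups, giving kernel $\{\pm e\}$ throughout.

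For the three component-sensitive surjections $\Pin_+\to\O_+$, $\Pin_-\to\O_-$, $\Spin_+\to\SO_+$, I would match components via the parities of the reflection types. For $T=v_1\cdots v_k$ let $a=\#\{i\colon v_i^2>0\}$ and $b=\#\{i\colon v_i^2<0\}$; then $\widetilde{T}T$ has sign $(-1)^b$ and $\widehat{\widetilde{T}}T$ has sign $(-1)^a$, so the rescaled $T$ lies in $\Pin_-$ exactly when $b$ is even and in $\Pin_+$ exactly when $a$ is even. On the orthogonal side, a direct check on $\widetilde{\Ad}_{e_a}$ shows that a single reflection through a positive-norm vector belongs to $\O_-(p,q)\setminus\O_+(p,q)$ and one through a negative-norm vector belongs to $\O_+(p,q)\setminus\O_-(p,q)$; composing these reflections (and using the connectedness of the sphere of positive- or negative-norm vectors to extend from basis vectors to arbitrary $v$) gives $\widetilde{\Ad}_T\in\O_+(p,q)\Leftrightarrow a$ even and $\widetilde{\Ad}_T\in\O_-(p,q)\Leftrightarrow b$ even. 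This parity dictionary matches the scaled Lipschitz preimage to the correct $\Pin_\pm$ subgroup, and the $\Spin_+\to\SO_+$ case then follows because $\Spin_+=\Spin\cap\Pin_+=\Spin\cap\Pin_-$. The main obstacle is precisely this dictionary: tracking how the principal-minor conditions defining $\O_\pm(p,q)$ transform under composition of reflections, and checking that the norm signs $(-1)^a$, $(-1)^b$ line up with them. Without this step one only obtains the coarser surjections $\Pin\to\O$ and $\Spin\to\SO$; with it, all five statements of the theorem follow.
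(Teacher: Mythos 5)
Your proof is correct, and for the first two maps it coincides with the paper's: use the Lipschitz surjectivity (Lemmas \ref{lemmagamma2} and \ref{lemmagamma1}) plus the norm function (Lemma \ref{lemmanorm}) and rescale, then kill the scalar kernel by the normalization $\widetilde{T}T=\pm e$. Where you diverge is in the three component-sensitive maps. The paper handles $\Pin_\pm\to\O_\pm$ and $\Spin_+\to\SO_+$ by citing the norm theorem from \cite{sh4} and \cite{sh6}, which computes $\Tr(T^\dagger T)$ explicitly in terms of the principal minors $P^{1\ldots p}_{1\ldots p}$ and $P^{p+1\ldots n}_{p+1\ldots n}$ and reads the component membership off the resulting signs. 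You instead derive the component dictionary from scratch: writing $T=v_1\cdots v_k$, counting $a$ positive-norm and $b$ negative-norm vectors, observing that $\widetilde{T}T$ has sign $(-1)^b$ and $\widehat{\widetilde{T}}T$ has sign $(-1)^a$ (using $\widehat{T}=(-1)^kT$ and $k=a+b$), and matching these parities against the $\Z_2\times\Z_2$ component group of $\O(p,q)$ by checking a single reflection $\widetilde{\Ad}_{e_1}=\diag(-1,1,\ldots,1)\in\O_-'$, $\widetilde{\Ad}_{e_{p+1}}\in\O_+'$, and extending to arbitrary vectors by connectedness of the spacelike and timelike cones modulo $\pm 1$. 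Both routes rest on the same underlying parity correspondence; the paper's is quantitative and offloads the work to an external theorem, while yours is qualitative, elementary, and self-contained. Your version has the advantage of not requiring the norm theorem at all, at the small cost of the connectedness lemma needed to pass from basis vectors to arbitrary $v$ (which is why the paper favours the minor computation). The kernel argument and the observation $\Spin_+=\Spin\cap\Pin_\pm$ are exactly as in the paper. No gaps.
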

It means that
\begin{eqnarray}
\mbox{for all}\, P=||p^a_b||\in\O(p,q) \,\, \mbox{there exists}\,\, \pm T\in\Pin(p,q): \widehat{T} e_a T^{-1}=p_a^b e_b\label{PT}
\end{eqnarray}
and for the other groups similarly.
\begin{proof} Statement for the group $\Pin(p,q)$ follows from the statements of the previous section (see
Lemmas \ref{lemmagamma2} and \ref{lemmanorm}). For the other groups statement follows from the theorem on the norm of elements of spin groups which we give below (see \cite{sh4} and \cite{sh6}).
\end{proof}

\begin{theorem}\cite{sh4}, \cite{sh6} The square of the norm of the element $T\in\Pin(p,q)$ in (\ref{PT}) equals
$$||T||^2=\Tr(T^\dagger T)=
\left\lbrace
\begin{array}{ll}
P^{1\ldots p}_{1\ldots p}=P^{p+1 \ldots n}_{p+1 \ldots n}, & T\in\Spin_+(p,q)\\
P^{1\ldots p}_{1\ldots p}=-P^{p+1 \ldots n}_{p+1 \ldots n}, & T\in\Pin_+(p,q)'\\
-P^{1\ldots p}_{1\ldots p}=P^{p+1 \ldots n}_{p+1 \ldots n}, & T\in\Pin_-(p,q)'\\
-P^{1\ldots p}_{1\ldots p}=-P^{p+1 \ldots n}_{p+1 \ldots n}, & T\in\Spin(p,q)'
\end{array}
\right.\nonumber
$$
where $P^{1\ldots p}_{1\ldots p}$ and $P^{p+1 \ldots n}_{p+1 \ldots n}$ are the minors of the matrix $P\in\O(p,q)$ that corresponds to the element $T$ by (\ref{PT}).
\end{theorem}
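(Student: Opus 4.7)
My plan is to combine Theorem \ref{thHerm} (which expresses $T^\dagger$ as a conjugate of $\widetilde T$ or $\widehat{\widetilde T}$ by $e_{1\ldots p}$) with the defining norm condition of each component of $\Pin(p,q)$, then use cyclicity of $\Tr$ to reduce the statement to extracting the coefficient of $e_{1\ldots p}$ in $T\,e_{1\ldots p}\,T^{-1}$. That coefficient will be a determinant, namely the stated minor of $P$.

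Since $T\in\Pin(p,q)\subset\cl_{p,q}$ is real, complex conjugation acts trivially and Theorem \ref{thHerm} collapses to $T^\dagger=e_{1\ldots p}^{-1}\,\sigma(T)\,e_{1\ldots p}$, where $\sigma(T)=\widetilde T$ for $p$ odd and $\sigma(T)=\widehat{\widetilde T}$ for $p$ even. On each of the four components, the defining relation ($\widetilde T\,T=\pm e$ or $\widehat{\widetilde T}\,T=\pm e$) together with the parity of $T$ (even on $\Spin_+(p,q)$ and $\Spin(p,q)'$, odd on $\Pin_\pm(p,q)'$) forces $\sigma(T)=\varepsilon T^{-1}$ for an explicit sign $\varepsilon\in\{\pm1\}$ that depends on the component and on the parity of $p$. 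Consequently $T^\dagger T=\varepsilon\,e_{1\ldots p}^{-1}\,T^{-1}\,e_{1\ldots p}\,T$.

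Applying cyclicity of the trace together with $e_{1\ldots p}^{-1}=(-1)^{p(p-1)/2}e_{1\ldots p}$ I get
$$\Tr(T^\dagger T)=\varepsilon(-1)^{p(p-1)/2}\,\Tr\bigl((T\,e_{1\ldots p}\,T^{-1})\,e_{1\ldots p}\bigr).$$
Since $\Tr(X\,e_{1\ldots p})$ equals $(e_{1\ldots p})^2=(-1)^{p(p-1)/2}$ times the $e_{1\ldots p}$-coefficient of $X$, the global sign $(-1)^{p(p-1)/2}$ cancels and one is left with $\varepsilon$ times that coefficient. To compute it, I write $\widehat T=\epsilon_T T$ with $\epsilon_T=\pm1$ the parity sign of $T$, so that $\widehat T e_a T^{-1}=p_a^b e_b$ becomes $T e_a T^{-1}=\epsilon_T\,p_a^b e_b$, and expand
$$T\,e_{1\ldots p}\,T^{-1}=\prod_{k=1}^p T e_k T^{-1}=\epsilon_T^{\,p}\sum_{a_1,\dots,a_p}p_1^{a_1}\cdots p_p^{a_p}\,e_{a_1}\cdots e_{a_p}.$$
Only multi-indices $(a_1,\dots,a_p)$ that are permutations of $(1,\dots,p)$ contribute to the $e_{1\ldots p}$-component, and their contribution is $\epsilon_T^{\,p}\det\bigl([p_k^b]_{1\le k,b\le p}\bigr)=\epsilon_T^{\,p}\,P^{1\ldots p}_{1\ldots p}$. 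Hence $\Tr(T^\dagger T)=\varepsilon\,\epsilon_T^{\,p}\,P^{1\ldots p}_{1\ldots p}$.

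The parallel argument starting from the second formula of Theorem \ref{thHerm} (with $e_{p+1\ldots n}$ and $(e_{p+1\ldots n})^2=(-1)^{q(q+1)/2}$) yields $\Tr(T^\dagger T)=\varepsilon'\,\epsilon_T^{\,q}\,P^{p+1\ldots n}_{p+1\ldots n}$, which furnishes the second equality in each case. The main obstacle is therefore a careful sign audit across four components and two parities of $p$ (respectively $q$): one must verify that $\varepsilon\,\epsilon_T^{\,p}$ and $\varepsilon'\,\epsilon_T^{\,q}$ reduce, in each of the four rows of a single case table, to the sign pairs $(+,+)$, $(+,-)$, $(-,+)$, $(-,-)$ matching $\Spin_+(p,q)$, $\Pin_+(p,q)'$, $\Pin_-(p,q)'$, and $\Spin(p,q)'$ respectively. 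The bookkeeping is routine but delicate, and is the only non-mechanical part of the proof.
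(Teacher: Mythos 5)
Your proposal is correct, and since the paper defers the proof entirely to the cited references \cite{sh4} and \cite{sh6}, there is no in-text proof to compare against; here I simply confirm that your argument closes. The chain you set up -- using Theorem \ref{thHerm} (which for real $T$ reduces to $T^\dagger = e_{1\ldots p}^{-1}\sigma(T)e_{1\ldots p}$ with $\sigma=\widetilde{\phantom{T}}$ or $\widehat{\widetilde{\phantom{T}}}$ according to the parity of $p$), the component-wise identification $\sigma(T)=\varepsilon T^{-1}$, cyclicity of $\Tr$, the identity $e_{1\ldots p}^{-1}=(-1)^{p(p-1)/2}e_{1\ldots p}$ (whose square cancels against the analogous factor from $\Tr(\,\cdot\,e_{1\ldots p})$), and finally the determinant extraction of the $e_{1\ldots p}$-coefficient of $T e_{1\ldots p}T^{-1}=\epsilon_T^p\sum p_1^{a_1}\cdots p_p^{a_p}e_{a_1}\cdots e_{a_p}$ -- is sound at every step. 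Your observation that only index sets equal to $\{1,\ldots,p\}$ contribute to the grade-$p$ part is correct, since any repeated index drops the grade below $p$.

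I also carried out the sign audit you deferred, and it does indeed resolve uniformly: for the $e_{1\ldots p}$ branch one finds $\varepsilon\,\epsilon_T^p = +1$ on $\Spin_+(p,q)$ and $\Pin_+(p,q)'$ and $-1$ on $\Pin_-(p,q)'$ and $\Spin(p,q)'$, independently of the parity of $p$ (the parity-dependence of $\varepsilon$ exactly compensates the parity-dependence of $\epsilon_T^p$ on the odd components). Likewise for the $e_{p+1\ldots n}$ branch, $\varepsilon'\,\epsilon_T^q$ gives $+1$ on $\Spin_+(p,q)$ and $\Pin_-(p,q)'$ and $-1$ on $\Pin_+(p,q)'$ and $\Spin(p,q)'$, reproducing the stated four sign pairs. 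So the proof is complete once this table is written out.
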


\begin{theorem} We have the isomorphism $\Spin(p,q)\cong\Spin(q,p)$.
\end{theorem}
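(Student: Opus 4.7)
The plan is to use a ``Wick rotation'' on the complexification to build an explicit real algebra isomorphism that takes $\Spin(p,q)$ onto $\Spin(q,p)$. Writing the generators of $\cl_{p,q}$ as $e_1,\ldots,e_p$ (with $e_a^2=e$) and $e_{p+1},\ldots,e_n$ (with $e_a^2=-e$), and analogously $f_1,\ldots,f_q,f_{q+1},\ldots,f_n$ for $\cl_{q,p}$, define the permutation $\sigma(a)=a+q$ for $a\leq p$ and $\sigma(a)=a-p$ for $a>p$, and set $\Phi(e_a):=\ii f_{\sigma(a)}$. From $(\ii f_{\sigma(a)})^2=-f_{\sigma(a)}^2=e_a^2$ and the preserved anticommutation, $\Phi$ satisfies the Clifford relations, hence extends to a $\C$-algebra isomorphism $\C\otimes\cl_{p,q}\to\C\otimes\cl_{q,p}$ with $\Phi(\cl^k_{p,q})=\ii^k\cl^k_{q,p}$ for every $k$. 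In particular $\Phi$ restricts to a real isomorphism $\cl^{(0)}_{p,q}\cong\cl^{(0)}_{q,p}$ (agreeing with Theorem \ref{thEven}(3)) and sends $\cl^1_{p,q}$ onto $\ii\,\cl^1_{q,p}$.

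Next I would check that $\Phi$ commutes with reversion. This is immediate on every generator ($e_a$ and $\ii f_{\sigma(a)}$ are fixed by their respective reversions, extended $\ii$-linearly to the complexifications) and propagates to all of $\C\otimes\cl_{p,q}$ by the anti-multiplicative rule $\widetilde{UV}=\widetilde V\widetilde U$. Consequently, for $T\in\Spin(p,q)\subset\cl^{(0)\times}_{p,q}$, the normalization $\widetilde TT=\pm e$ transports to $\widetilde{\Phi(T)}\Phi(T)=\Phi(\pm e)=\pm e$.

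It remains to verify the Lipschitz condition $\Phi(T)\in\Gamma^+_{q,p}$. Given any $y\in\cl^1_{q,p}$ there is a unique $x\in\cl^1_{p,q}$ with $\Phi(x)=\ii y$; since $T\in\Gamma^+_{p,q}$ gives $TxT^{-1}\in\cl^1_{p,q}$, we compute
\[
\Phi(T)\,y\,\Phi(T)^{-1}=-\ii\,\Phi(T)\Phi(x)\Phi(T)^{-1}=-\ii\,\Phi(TxT^{-1})\in -\ii\cdot\ii\,\cl^1_{q,p}=\cl^1_{q,p},
\]
so $\Phi(T)\in\Gamma^+_{q,p}$ and hence $\Phi(T)\in\Spin(q,p)$. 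The symmetric construction applied to $\Phi^{-1}$ yields the reverse inclusion, and $\Phi$ restricts to a group isomorphism $\Spin(p,q)\cong\Spin(q,p)$. The only real obstacle is bookkeeping: choosing $\sigma$ so that the signs of squares match up, and carefully tracking $\ii$-factors when transferring conditions between the two real spin groups through the complexification.
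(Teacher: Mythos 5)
Your argument is correct, and it is somewhat more careful than the paper's. The paper's own proof is a single line: it simply cites the even-subalgebra isomorphism $\cl^{(0)}_{p,q}\cong\cl^{(0)}_{q,p}$ of Theorem~\ref{thEven}(3), which is obtained there by chaining the concrete isomorphisms $\cl^{(0)}_{p,q}\cong\cl_{p,q-1}$ and $\cl^{(0)}_{q,p}\cong\cl_{p,q-1}$. You instead build the even-subalgebra isomorphism as the restriction of a ``Wick-rotated'' $\C$-algebra isomorphism $\Phi:\C\otimes\cl_{p,q}\to\C\otimes\cl_{q,p}$, $\Phi(e_a)=\ii f_{\sigma(a)}$, and then you do the extra work that the paper silently skips: you check that $\Phi$ commutes with reversion (so the normalization $\widetilde T T=\pm e$ is preserved) and that $\Phi$ carries the Lipschitz condition $TxT^{-1}\in\cl^1_{p,q}$ to the corresponding condition in $\cl_{q,p}$ (since $\Phi(\cl^1_{p,q})=\ii\,\cl^1_{q,p}$). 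This verification is not gratuitous: a bare associative-algebra isomorphism $\cl^{(0)}_{p,q}\cong\cl^{(0)}_{q,p}$ need not respect reversion or the grade-one subspace, which live outside the even subalgebra, so one must check (as you do, or as one could for the paper's explicit generator assignment) that the particular isomorphism used interacts correctly with this extra structure. In short, you reach the same conclusion by the same underlying mechanism — the even subalgebras agree — but via a cleaner global construction through the complexification and with the structural compatibilities made explicit rather than assumed.
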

\begin{proof} This follows from the isomorphism $\cl^{(0)}_{p,q}\cong \cl^{(0)}_{q,p}$ (see Theorem \ref{thEven}).
\end{proof}

\begin{ex} We have
$\Spin(1,0)=\Spin(0,1)=\{\pm e\}=\Z_2$.
\end{ex}

\begin{ex} Note that $\Pin(p,q)\ncong \Pin(q,p)$ in general case. For example, $\Pin(1,0)=\{\pm e, \pm e_1\}\cong \Z_2\times \Z_2$ and $\Pin(0,1)\cong\Z_4$.
\end{ex}

\begin{theorem}
The condition $T x T^{-1}\in\cl^1_{p,q}$ for all $x\in\cl^1_{p,q}$ holds automatically in the cases $n\leq 5$ for all spin groups (\ref{spingr}), i.e.
$$\Pin(p,q)=\{T\in\cl^{(0)}_{p,q}\cup\cl^{(1)}_{p,q}  \semicolon \widetilde{T} T=\pm e\},\quad n=p+q\leq 5.$$
\end{theorem}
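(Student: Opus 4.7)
The plan is to show that any $T\in\cl^{(0)}_{p,q}\cup\cl^{(1)}_{p,q}$ with $\widetilde{T}T=\pm e$ automatically sends $\cl^1_{p,q}$ into itself under $x\mapsto TxT^{-1}$; together with the obvious opposite inclusion from the definition of $\Pin(p,q)$, this yields the claimed equality. Since $T$ is homogeneous and $x$ is odd, $y:=TxT^{-1}$ lies in $\cl^{(1)}_{p,q}$, so in the range $n\leq 5$ one has $y=<\!\!y\!\!>_1+<\!\!y\!\!>_3+<\!\!y\!\!>_5$, with the pseudoscalar summand present only when $n=5$. The task is therefore to kill $<\!\!y\!\!>_3$ and, in dimension five, also $<\!\!y\!\!>_5$.

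The first step uses reversion. From $\widetilde{T}T=\epsilon e$ with $\epsilon=\pm 1$ one immediately reads off $\widetilde{T^{-1}}=\epsilon T$ and $\widetilde{T}=\epsilon T^{-1}$, and since $\widetilde{x}=x$ for $x$ of grade one,
$$\widetilde{y}=\widetilde{T^{-1}}\,\widetilde{x}\,\widetilde{T}=\epsilon T\cdot x\cdot\epsilon T^{-1}=y.$$
Because reversion multiplies $<\!\!U\!\!>_k$ by $(-1)^{k(k-1)/2}$, which equals $+1$ for $k=1,5$ and $-1$ for $k=3$, the identity $\widetilde{y}=y$ forces $<\!\!y\!\!>_3=0$. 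This already closes all cases $n\leq 4$.

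The main obstacle is the pseudoscalar in $n=5$, and the idea is to exploit that $x^2=<\!\!x^2\!\!>_0 e$ is central, so $y^2=Tx^2T^{-1}=x^2\in\cl^0_{p,q}$. Write $y=y_1+y_5$ with $y_1\in\cl^1_{p,q}$ and $y_5=\lambda e_{12345}$ in the (here central) top grade. Expanding,
$$y^2=y_1^2+2y_1y_5+y_5^2,$$
both $y_1^2$ and $y_5^2$ are scalars, while $y_1y_5$ is the product of a grade-$1$ element with $e_{12345}$ and hence has pure grade $4$. Comparing grades forces $y_1y_5=0$, and since $e_{12345}$ is invertible in $\cl_{p,q}$ with $n=5$, this says that for each individual $x$ either $y_1(x)=0$ or $y_5(x)=0$.

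To globalize this pointwise dichotomy, note that the projections $\phi_k(x):=<\!\!TxT^{-1}\!\!>_k$ are linear maps on $\cl^1_{p,q}$, and the above reads $\cl^1_{p,q}=\ker\phi_1\cup\ker\phi_5$. A real vector space is not a union of two proper subspaces, so one of these kernels is the whole space; the option $\ker\phi_1=\cl^1_{p,q}$ is excluded because $x\mapsto TxT^{-1}$ is injective while $\dim\cl^1_{p,q}=5>1=\dim\cl^5_{p,q}$. Therefore $\phi_5\equiv 0$ and $y\in\cl^1_{p,q}$ for every $x$, completing the proof. The bound $n\leq 5$ is sharp for this strategy: for $n\geq 6$ the odd subspace acquires grades whose reversion signs coincide with grade $1$ while no longer being central, so neither the reversion identity nor the squaring trick suffices on its own.
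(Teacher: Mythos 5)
Your proof is correct. The first two steps (homogeneity of $T$ forcing $y:=TxT^{-1}$ into odd grades $1,3,5$, then reversion-invariance $\widetilde{y}=y$ killing grade $3$) are exactly the paper's argument, which already disposes of $n\leq 4$. For $n=5$ you diverge: the paper evaluates $\lambda=\Tr(TxT^{-1}e_{1\ldots 5}^{-1})$ and uses cyclicity of the trace together with the centrality of $e_{1\ldots 5}$ (for $n$ odd) to get $\lambda=\Tr(xe_{1\ldots 5}^{-1})=0$ directly, since $xe_{1\ldots 5}^{-1}$ has pure grade $4$ and hence zero scalar part. Your route instead squares $y$, uses $y^2=x^2\in\cl^0_{p,q}$, and isolates the grade-$4$ cross term $2y_1y_5$ to conclude $y_1(x)y_5(x)=0$ pointwise; this only gives a disjunction for each $x$, so you then invoke the fact that a real vector space is not a union of two proper subspaces, plus a dimension count ruling out $\ker\phi_1=\cl^1_{p,q}$, to globalize. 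Both arguments are elementary; the paper's trace computation is more direct because cyclicity eliminates $T$ in one stroke and delivers $\lambda=0$ for each $x$ without any extra linear-algebra lemma, while yours is self-contained in that it avoids the trace machinery entirely, at the cost of the additional union-of-subspaces step.
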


\begin{proof} If $T\in\cl^{(0)}_{p,q}\cup\cl^{(1)}_{p,q}$, then $TxT^{-1}\in\cl^1_{p,q}\oplus\cl^3_{p,q}\oplus\cl^5_{p,q}$. Using $\widetilde{T}T=\pm e$, we get $\widetilde{TxT^{-1}}=\widetilde{\pm T x \widetilde{T}}=\pm T x \widetilde{T}$ and $TxT^{-1}\in\cl^1_{p,q}\oplus\cl^5_{p,q}$. The statement is proved for $n\leq 4$.

Suppose that $n=5$ and $TxT^{-1}=v+\lambda e_{1\ldots 5}$, $v\in\cl^1_{p,q}$, $\lambda\in\R^\times$. Then $$\lambda=(TxT^{-1}e_{1\ldots 5}^{-1}-ve_{1\ldots 5}^{-1})|_{e\to 1}=\Tr(TxT^{-1}e_{1\ldots 5}^{-1})=\Tr(xe_{1\ldots 5}^{-1})=0$$
and we obtain a contradiction.
\end{proof}

\begin{ex} If the case $n=6$ the previous theorem is not valid. The element $T=\frac{1}{\sqrt{2}}(e_{12}+e_{3456})\in\cl^{(0)}_{6,0}$ satisfies $\widetilde{T}T=e$, but $Te_1 T^{-1}=-e_{23456}\notin\cl^1_{6,0}$.
\end{ex}

\begin{theorem} $\Spin_+(p,q)$ is isomorphic to the following groups in Table \ref{tab:5} in the cases $n=p+q\leq 6$.

\begin{table}[ht]
\centering
\begin{tabular}{|p{0.045\linewidth}|p{0.08\linewidth}|p{0.11\linewidth}| p{0.11\linewidth}|p{0.1\linewidth}|p{0.1\linewidth}|p{0.1\linewidth}|p{0.08\linewidth}|}\hline
$p \backslash q$ & 0 & 1 & 2 & 3 & 4 & 5 & 6   \\ \hline \hline
0 & $ \O(1)$& $\O(1)$ & $\U(1)$ & $\SU(2)$ & $^2\SU(2)$ & $\Sp(2)$ & $\SU(4)$ \\ \hline
1 & $\O(1)$& $\GL(1,\R)$& $\SU(1,1)$ & $\Sp(1,\C)$ & $\Sp(1,1)$ & $\SL(2, \H)$ &  \\ \hline
2 & $\U(1)$ & $\SU(1,1)$ & $^2\SU(1,1)$ & $\Sp(2,\R)$ & $\SU(2,2)$ &  &   \\ \hline
3 & $\SU(2)$ & $\Sp(1,\C)$ & $\Sp(2,\R)$ & $\SL(4,\R)$ &  &  &   \\ \hline
4 & $^2\SU(2)$& $\Sp(1,1)$ & $\SU(2,2)$ &  &  &  &   \\ \hline
5 & $\Sp(2)$ & $\SL(2, \H)$ &  &  &  &  &   \\ \hline
6 & $\SU(4)$ &  &  &  &  &  &   \\ \hline
\end{tabular}
\medskip
\caption{Isomorhisms between $\Spin_+(p,q)$ and matrix Lie groups}\label{tab:5}
\end{table}
\end{theorem}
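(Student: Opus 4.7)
The plan is to identify, for each pair $(p,q)$ with $n=p+q\le 6$, the even subalgebra $\cl^{(0)}_{p,q}$ as a concrete matrix algebra and then carve out $\Spin_+(p,q)$ inside its unit group by translating the spin conditions into classical matrix-group relations. By Theorem \ref{thEven}, $\cl^{(0)}_{p,q}\cong \cl_{p,q-1}\cong \cl_{q,p-1}$, and Cartan's classification (Theorem \ref{Cartan}, Table \ref{tab:3}) identifies the right-hand side with one of $\Mat(k,\FF)$ or a direct sum of two copies, for $\FF\in\{\R,\C,\H\}$. The already established isomorphism $\Spin_+(p,q)\cong\Spin_+(q,p)$ halves the case analysis, so only the upper triangle of Table \ref{tab:5} needs to be treated.

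The second step is to transport the anti-involution $\widetilde{\cdot}$ (and, where needed, $\widehat{\widetilde{\cdot}}$) to the matrix algebra $\mathcal{A}$ via an explicit isomorphism $\rho$ chosen from the generator formulas in Lemma \ref{lemma} and Theorem \ref{thEven} (products $e_ae_n$, etc.). A direct check on basis elements shows that the image of reversion is in each case one of the standard anti-involutions: transpose, Hermitian conjugation, quaternionic conjugation, quaternionic conjugate transpose, or a symplectic variant $X\mapsto J^{-1}X^\T J$. Then the defining condition $\widetilde{T}T=+e$ from (\ref{spingr}) reads as a classical group condition ($T^*T=I$, $T^\T J T=J$, etc.), and combined with the requirement $T\in\cl^{(0)\times}_{p,q}$ it pins down the group. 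For example, in $\cl^{(0)}_{3,0}\cong\cl_{0,2}\cong\H$ reversion becomes quaternionic conjugation and $\widetilde{T}T=e$ reads $\bar qq=1$, giving $\SU(2)\cong\Sp(1)$; in $\cl^{(0)}_{3,1}\cong\cl_{3,0}\cong\Mat(2,\C)$ reversion acts as $T\mapsto \sigma_2T^\T\sigma_2^{-1}$, so $\widetilde{T}T=e$ is $\det T=1$, yielding $\SL(2,\C)=\Sp(1,\C)$; in $\cl^{(0)}_{6,0}\cong\cl_{0,5}\cong\Mat(4,\C)$ reversion descends to Hermitian conjugation, and $\widetilde{T}T=e$ combined with $T\in\cl^{(0)\times}_{p,q}$ (which forces unit determinant via the eigenvalue/trace structure inherited from Table \ref{tab:3}) produces $\SU(4)$. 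The remaining entries follow from the same template, with signatures $(p,q-1)$ or $(q,p-1)$ dictating indefinite variants ($\SU(1,1)$, $\Sp(1,1)$, $\SU(2,2)$, etc.).

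Finally, one checks that the matrix subgroup obtained is indeed $\Spin_+(p,q)$ and not something strictly larger or smaller: by Theorem \ref{thDC}, $\widetilde{\Ad}:\Spin_+(p,q)\to\SO_+(p,q)$ is a double cover, so dimensions must match those of $\SO_+(p,q)$, which pins down the classical group up to isogeny; connectedness of the identified classical groups in Table \ref{tab:5}, together with $\{\pm e\}$ lying in the identity component, forces equality. The main obstacle is the second step: tracking how reversion in $\cl_{p,q}$ transports through the composite isomorphism $\cl^{(0)}_{p,q}\xrightarrow{\sim}\cl_{p,q-1}\xrightarrow{\sim}\mathcal{A}$. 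Reversion in $\cl_{p,q}$ does \emph{not} in general correspond to reversion in $\cl_{p,q-1}$ (the sign shift coming from the generators $e_ae_n$ introduces extra signatures), and one must carefully determine in each case whether it becomes transpose, Hermitian conjugate, quaternionic variant, or a twisted symplectic anti-involution. Sign bookkeeping on the generators is the delicate point; everything else is dimension counting and recognition of standard groups.
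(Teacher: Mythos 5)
Your overall strategy -- identify $\cl^{(0)}_{p,q}$ via Theorem \ref{thEven} and Cartan's classification, transport the reversal anti-involution to the matrix side, read off the norm condition, then sanity-check against $\dim\SO_+(p,q)$ -- is the natural and essentially standard route, and it works cleanly up through $n\le 5$. (The paper offers no proof of this theorem, so there is nothing to compare against; I am evaluating your argument on its own terms.)

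There is, however, a genuine gap at $n=6$, and it is precisely the point you wave away. You write that in $\cl^{(0)}_{6,0}\cong\Mat(4,\C)$, the norm condition $\widetilde{T}T=e$ "combined with $T\in\cl^{(0)\times}_{p,q}$ (which forces unit determinant via the eigenvalue/trace structure inherited from Table \ref{tab:3}) produces $\SU(4)$." This is false: nothing about membership in the even subalgebra forces $\det=1$, and the set $\{T\in\cl^{(0)\times}_{6,0}\,:\,\widetilde{T}T=e\}$ is the full group $\U(4)$ (this is the group $\G^{2}_{6,0}$ of Table \ref{tab:7}), which is strictly larger than $\Spin_+(6,0)\cong\SU(4)$. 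The paper makes this explicit: the theorem just before Table \ref{tab:5} establishes that the Lipschitz condition $TxT^{-1}\in\cl^1_{p,q}$ holds automatically for $n\le 5$ but \emph{not} for $n=6$, and the subsequent example exhibits $T=\tfrac{1}{\sqrt 2}(e_{12}+e_{3456})\in\cl^{(0)}_{6,0}$ with $\widetilde{T}T=e$ yet $Te_1T^{-1}=-e_{23456}\notin\cl^1_{6,0}$. So for the four $n=6$ signatures $(6,0),(5,1),(4,2),(3,3)$ you must retain the defining membership $T\in\Gamma^+_{p,q}$ as a separate nontrivial constraint, translate it into a matrix condition (it is exactly what cuts $\U(4)\to\SU(4)$, $\U(2,2)\to\SU(2,2)$, the double copy of $\Mat(4,\R)^\times\to\SL(4,\R)$, and the double copy of $\Mat(2,\H)^\times\to\SL(2,\H)$), and only then appeal to the double-cover dimension count as confirmation rather than as a device to silently repair an over-large group. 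Your closing dimension-count paragraph would indeed \emph{detect} the discrepancy at $n=6$ -- $\dim\U(4)=16\neq 15=\dim\SO(6)$ -- but detecting it is not the same as fixing the construction.
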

Note that
\begin{eqnarray}
&&\U(1)\simeq\SO(2),\qquad \SU(2)\simeq \Sp(1),\qquad \SL(2,\C)\simeq \Sp(1,\C)\nonumber\\
&&\SU(1,1)\simeq \SL(2,\R)\simeq \Sp(1,\R).\nonumber
\end{eqnarray}

The Lie groups $\Gamma^\pm_{p,q}$, $\Gamma^+_{p,q}$ has the Lie algebra $\cl^0_{p,q}\oplus\cl^2_{p,q}$. All spin groups $\Pin(p,q)$, $\Spin(p,q)$, $\Pin_+(p,q)$, $\Pin_-(p,q)$, $\Spin_+(p,q)$ has the Lie algebra $\cl^2_{p,q}$.

Since Theorem \ref{thDC} and some facts from differential geometry, it follows that the spin groups are two-sheeted coverings of the orthogonal groups.

The groups $\Spin_+(p,q)$ are pathwise connected for $p\geq 2$ or $q\geq 2$. They are nontrivial covering groups of the corresponding orthogonal groups.

\begin{ex} The group $\Spin_+(1,1)=\{ue+ve_{12} \semicolon u^2-v^2=1\}$ is not pathwise connected (it is two branches of the hyperbole).
\end{ex}

The groups $\Spin_+(n)$, $n\geq 3$ and $\Spin_+(1,n-1)\cong\Spin_+(n-1,1)$, $n\geq 4$ are simply connected. They are the universal covering groups of the corresponding orthogonal groups.

\subsection{Other Lie Groups and Lie Algebras in Clifford Algebra}\label{sec:6.4}

Let us consider the following Lie groups and the corresponding Lie algebras (see Table \ref{tab:6}).

\begin{table}[ht]
\centering
\scriptsize{\begin{tabular}{|c|c|c|c|}\hline
& Lie group & Lie algebra & dimension\\ \hline
1& $(\C\otimes\cl_{p,q})^\times=\{U\in \C\otimes\cl_{p,q} \semicolon U^{-1} \,\mbox{exists}\}$ & $\overline{\textbf{0123}}\oplus \ii\overline{\textbf{0123}}$ &$2^{n+1}$  \\
2& $\cl^\times_{p,q}=\{U\in\cl_{p,q} \semicolon U^{-1}\, \mbox{exists}\}$ & $\overline{\textbf{0123}}$ &$2^n$  \\
3& $\cl^{(0) \times}_{p,q}=\{U\in \cl^{(0)}_{p,q} \semicolon U^{-1} \, \mbox{exists}\}$ & $\overline{\textbf{02}}$ &$2^{n-1}$  \\
4& $(\C\otimes\cl^{(0)}_{p,q})^\times=\{ U\in\C\otimes\cl_{p,q} \semicolon U^{-1}\, \mbox{exists}\}$ & $\overline{\textbf{02}}\oplus \ii\overline{\textbf{02}} $ &$2^n$  \\
5& $(\cl^{(0)}_{p,q}\oplus \ii \cl^{(1)}_{p,q})^\times=\{U\in \cl^{(0)}_{p,q}\oplus \ii \cl^{(1)}_{p,q} \semicolon  U^{-1}\,\mbox{exists}\}$ & $\overline{\textbf{02}}\oplus \ii\overline{\textbf{13}}$ &$2^n$  \\
6& $\G^{23\ii 01}_{p,q}=\{U\in \C\otimes\cl_{p,q} \semicolon \overline{\tilde{U}} U=e\}$ & $\overline{\textbf{23}}\oplus \ii\overline{\textbf{01}}$ & $2^n$  \\
7& $\G^{12\ii 03}_{p,q}=\{U\in \C\otimes\cl_{p,q} \semicolon \overline{\tilde{\hat{U}}} U=e\}$ & $\overline{\textbf{12}}\oplus \ii\overline{\textbf{03}}$ & $2^n$  \\
8& $\G^{2\ii 0}_{p,q}=\{U\in \cl^{(0)}_{p,q} \semicolon \overline{\tilde{U}} U=e\}$ & $\overline{\textbf{2}}\oplus \ii\overline{\textbf{0}}$ & $2^{n-1}$  \\
9& $\G^{23\ii 23}_{p,q}=\{U\in\C\otimes\cl_{p,q} \semicolon \tilde{U}U=e\}$ & $\overline{\textbf{23}}\oplus \ii\overline{\textbf{23}}$ & $2^{n}-2^{\frac{n+1}{2}}\sin{\frac{\pi (n+1)}{4}}$  \\
10& $\G^{12\ii 12}_{p,q}=\{U\in\C\otimes\cl_{p,q} \semicolon \hat{\tilde{U}}U=e\}$ & $\overline{\textbf{12}}\oplus \ii\overline{\textbf{12}}$ & $2^{n}-2^{\frac{n+1}{2}}\cos{\frac{\pi (n+1)}{4}}$  \\
11& $\G^{2\ii 2}_{p,q}=\{U\in\C\otimes\cl^{(0)}_{p,q} \semicolon \tilde{U}U=e\}$ & $\overline{\textbf{2}}\oplus \ii\overline{\textbf{2}}$ & $2^{n-1}-2^{\frac{n}{2}}\cos{\frac{\pi n}{4}}$  \\
12& $\G^{2\ii 1}_{p,q}=\{U\in\cl^{(0)}_{p,q}\oplus \ii\cl^{(1)}_{p,q} \semicolon \overline{\tilde{U}} U=e\}$ & $\overline{\textbf{2}}\oplus \ii\overline{\textbf{1}}$ & $2^{n-1}-2^{\frac{n-1}{2}}\cos{\frac{\pi (n+1)}{4}}$  \\
13& $\G^{2\ii 3}_{p,q}=\{U\in\cl^{(0)}_{p,q}\oplus \ii\cl^{(1)}_{p,q} \semicolon \overline{\tilde{\hat{U}}} U=e\}$ & $\overline{\textbf{2}}\oplus \ii\overline{\textbf{3}}$ & $2^{n-1}-2^{\frac{n-1}{2}}\sin{\frac{\pi (n+1)}{4}}$   \\
14& $\G^{23}_{p,q}=\{U\in\cl_{p,q} \semicolon \tilde{U}U=e\}$ & $\overline{\textbf{23}}$ & $2^{n-1}-2^{\frac{n-1}{2}}\sin{\frac{\pi (n+1)}{4}}$  \\
15& $\G^{12}_{p,q}=\{U\in\cl_{p,q} \semicolon \hat{\tilde{U}}U=e\}$ & $\overline{\textbf{12}}$ & $2^{n-1}-2^{\frac{n-1}{2}}\cos{\frac{\pi (n+1)}{4}}$  \\
16& $\G^{2}_{p,q}=\{U\in\cl^{(0)}_{p,q} \semicolon \tilde{U}U=e\}$ & $\overline{\textbf{2}}$ & $2^{n-2}-2^{\frac{n-2}{2}}\cos{\frac{\pi n}{4}}$\\  \hline
\end{tabular}}
\medskip
\caption{Lie groups and Lie algebras in $\C\otimes\cl_{p,q}$}\label{tab:6}
\end{table}

Isomorphisms for the group $\G^{23\ii 01}_{p,q}$ are proved in \cite{Snygg} (see also \cite{sh19}):
$$
\G^{23i01}_{p,q}\cong\left\lbrace
\begin{array}{ll}
\U(2^{\frac{n}{2}}), & \mbox{\rm if $p$ is even and $q=0$}\\
\U(2^{\frac{n-1}{2}})\oplus \U(2^{\frac{n-1}{2}}), & \mbox{\rm if $p$ is odd and $q=0$}\\
\U(2^{\frac{n-2}{2}},2^{\frac{n-2}{2}}), & \mbox{\rm if $n$ is even and $q\neq 0$}\\
\U(2^{\frac{n-3}{2}},2^{\frac{n-3}{2}})\oplus \U(2^{\frac{n-3}{2}},2^{\frac{n-3}{2}}), & \mbox{\rm if $p$ is odd and $q\neq 0$ is even}\\
\GL(2^{\frac{n-1}{2}}, \C), & \mbox{\rm if $p$ is even and $q$ is odd}
\end{array}
\right.\nonumber
$$

We call $\G^{23\ii 01}_{p,q}$ \emph{the pseudo-unitary group in Clifford algebra} and use it in some problems of the field theory \cite{msh3}, \cite{sh2}, \cite{msh4}.

Some of these Lie groups are considered in \cite{Port} and \cite{Lounesto}. Some of them are related to automorphism groups of the scalar products on the spinor spaces (\cite{Port}, \cite{Lounesto}, \cite{Benn:Tucker}, \cite{Abl3}).
Note that spin group $\Spin_+(p,q)$ is a subgroup of all groups in Table \ref{tab:6}. The group $\G^{2}_{p,q}$ coincides with $\Spin_+(p,q)$ in the cases $n\leq 5$. The Lie algebra of the spin group $\cl^2_{p,q}\in \overline{\textbf{2}}$ is a Lie subalgebra of all Lie algebras in Table \ref{tab:6}. We have $\cl^2_{p,q}=\overline{\textbf{2}}$ in the cases $n\leq 5$.

The isomorphisms for the group $\G^2_{p,q}$ are represented in Tables \ref{tab:7} and \ref{tab:8}.
There is $n \mod 8$ in the lines and $p-q \mod 8$ in the columns.

\begin{table}[ht]
\centering
\begin{tabular}{|l|l|l|l|}
\hline
$n \diagdown p-q$ & $0$ & $2, 6$  & $4$ \\ \hline
$0$  & $\begin{array}{l} ^2\O(2^{\frac{n-4}{2}},2^{\frac{n-4}{2}})\\ \mbox{if $p, q\neq 0$}\\ ^2\O(2^{\frac{n-2}{2}})\\ \mbox{if $p=0$ or $q=0$} \end{array}$ & $\O(2^{\frac{n-2}{2}},\C)$ & $^2\O(2^{\frac{n-4}{2}},\H)$  \\ \hline
$2, 6$ & $\GL(2^{\frac{n-2}{2}},\R)$& $\begin{array}{l} \U(2^{\frac{n-4}{2}},2^{\frac{n-4}{2}})\\ \mbox{if $p,q\neq 0$}\\ \U(\frac{n-2}{2})\\ \mbox{if $p=0$ or $q=0$} \end{array}$ &  $\GL(2^{\frac{n-4}{2}},\H)$ \\ \hline
$4$ & $^2\Sp(2^{\frac{n-4}{2}},\R)$  & $\Sp(2^{\frac{n-4}{2}},\C)$ &  $\begin{array}{l} ^2\Sp(2^{\frac{n-6}{2}},2^{\frac{n-6}{2}})\\ \mbox{if $p, q\neq 0$} \\ ^2\Sp(2^{\frac{n-4}{2}})\\ \mbox{if $p=0$ or $q=0$} \end{array}$\\ \hline
\end{tabular}
\medskip
\caption{Isomorphisms for the group $\G^{2}_{p,q}$ in the cases of even $n$}\label{tab:7}
\end{table}
\begin{table}[ht]
\centering
\begin{tabular}{|l|l|l|}
\hline
$n \diagdown p-q$ & $1, 7$  & $3, 5$ \\ \hline
$1, 7$  & $\begin{array}{l} \O(2^{\frac{n-3}{2}},2^{\frac{n-3}{2}})\\ \mbox{if $p, q\neq 0$}\\ \O(2^{\frac{n-1}{2}})\\ \mbox{if $p=0$ or $q=0$}\end{array}$ &  $\O(2^{\frac{n-3}{2}},\H)$   \\ \hline
$3, 5$ & $\Sp(2^{\frac{n-3}{2}},\R)$ &  $\begin{array}{l} \Sp(2^{\frac{n-5}{2}},2^{\frac{n-5}{2}})\\ \mbox{if $p, q\neq 0$}\\ \Sp(2^\frac{n-3}{2})\\ \mbox{if $p=0$ or $q=0$}. \end{array}$  \\ \hline
\end{tabular}
\medskip
\caption{Isomorphisms for the group $\G^{2}_{p,q}$ in the cases of odd $n$}\label{tab:8}
\end{table}

One can find isomorphisms for all remaining Lie groups and corresponding Lie algebras from Table \ref{tab:6} in a series of papers \cite{sh13}, \cite{sh16}, and \cite{sh19}.

\section{Dirac Equation and Spinors in n Dimensions}\label{sec:7}

\subsection{Dirac Equation in Matrix Formalism}\label{sec:7.1}

In Section \ref{sec:7}, we use the notation with upper indices for the Dirac gamma-matrices and the generators of the Clifford algebra because of the useful covariant form of the Dirac equation.

Let $\R^{1,3}$ be Minkowski space with Cartesian coordinates $x^\mu$, $\mu=0, 1, 2, 3$. The metric tensor of Minkowski space is given by a diagonal matrix
$$\eta=\diag(1, -1, -1, -1).$$
We denote partial derivatives by $\partial_\mu:=\frac{\partial}{\partial x^\mu}$.

The \emph{Dirac equation} for the electron \cite{Dirac}, \cite{Dirac2} can be written in the following way
\begin{eqnarray}
\ii\gamma^\mu (\partial_\mu \psi-\ii a_\mu \psi)-m\psi=0\nonumber
\end{eqnarray}
where $a_\mu:\R^{1,3}\to \R$ is the electromagnetic 4-vector potential, $m\geq 0\in\R$ is the electron mass,
$\psi:\R^{1,3}\to \C^4$ is the wave function (the Dirac spinor) and $\gamma^\mu$ are the Dirac gamma-matrices which satisfy conditions
$$\gamma^\mu\gamma^\nu+\gamma^\nu\gamma^\mu=2\eta^{\mu\nu}{\bf 1},\quad \gamma^\mu\in\Mat(4,\C).$$

The Dirac equation is gauge invariant. If we take the expressions
$$a_\mu \to \acute a_\mu=a_\mu +\lambda(x),\quad \psi\to\acute\psi=\psi \e^{\ii\lambda(x)},\qquad \lambda(x)\in\R$$
then they satisfy the same equation:
$$\ii\gamma^\mu (\partial_\mu \acute\psi -\ii \acute a_\mu\acute \psi)-m\acute\psi=
\ii \gamma^\mu (\partial_\mu (\e^{i\lambda}\psi) -i(a_\mu+\partial_\mu \lambda)(\e^{\ii\lambda}\psi))-m(\e^{\ii\lambda}\psi)$$
$$=\ii \gamma^\mu (\ii(\partial_\mu\lambda)\e^{\ii\lambda}\psi+\e^{\ii\lambda}(\partial_\mu\psi) -\ii a_\mu \e^{\ii\lambda}\psi-\ii(\partial_\mu \lambda) \e^{\ii\lambda}\psi)-m\e^{\ii\lambda}\psi$$
$$=\e^{\ii\lambda}(\ii \gamma^\mu (\partial_\mu \psi -\ii a_\mu\psi)-m\psi)=0.$$

One says that the Dirac equation is gauge invariant with respect to the gauge group
$$\U(1)=\{ \e^{\ii\lambda} \semicolon \lambda\in\R\}.$$
The corresponding Lie algebra is
$$\u(1)=\{\ii\lambda \semicolon \lambda\in\R\}.$$

The Dirac equation is relativistic invariant. Let us consider orthogonal transformation of coordinates
$$x^\mu\to \acute x^{\mu}=p^\mu_\nu x^\nu,\quad P=||p^\mu_\nu||\in\O(1,3).$$
Then
$$\partial_\mu \to \acute \partial_\mu=q_\mu^\nu \partial_\nu,\quad a_\mu \to \acute a_\mu=q_\mu^\nu a_\nu,\quad Q=||q^\mu_\nu||=P^{-1}.$$

There are two points of view on transformations of the Dirac gamma-matrices and the wave function (see \cite{Som}).

In the first (tensor) approach, we have
$$\gamma^\mu \to \acute \gamma^{\mu}=p_\nu^\mu \gamma^\nu,\quad \psi\to \acute \psi=\psi.$$
In this approach, all expressions are tensors and the Dirac equation is relativistic invariant. The tensor approach is considered in details in \cite{marchuk}.

In the second (spinor) approach, we have
$$\gamma^\mu \to \acute \gamma^{\mu}=\gamma^\mu,\quad \psi \to \acute\psi=S\psi,\quad S^{-1}\gamma^\mu S=p^\mu_\nu \gamma^\nu$$
$$\ii\acute\gamma^{\mu}(\acute\partial_\mu\acute\psi-\ii \acute a_\mu \acute\psi)-m\acute\psi=\ii \gamma^\mu(q^\nu_\mu \partial_\nu (S \psi)-\ii q^\nu_\mu a_\nu S \psi)-m S\psi)$$
$$=S(\ii S^{-1}q^\nu_\mu\gamma^\mu S(\partial_\nu\psi-\ii a_\mu \psi)-m\psi)=S(\ii\gamma^\nu(\partial_\nu\psi-\ii a_\mu \psi)-m\psi)=0.$$
In this approach, the Dirac gamma-matrices do not change and the wave function $\psi$ changes as spinor with the aid of the element $S$ of the spin group. The formula $S^{-1}\gamma^\mu S=p^\mu_\nu \gamma^\nu$ describes the double cover of the orthogonal group by the spin group. This approach is generally accepted.

\subsection{Dirac Equation in Formalism of Clifford Algebra}\label{sec:7.2}

Let us consider the complexified Clifford algebra $\C\otimes\cl_{1,3}$ with the generators $e^0, e^1, e^2, e^3$. In Section \ref{sec:7}, we use notation with upper indices for the generators of the Clifford algebra.

We have a primitive idempotent
$$t=\frac{1}{2}(e+e^0)\frac{1}{2}(e+\ii e^{12})\leftrightarrow \left(
                        \begin{array}{cccc}
                          1 & 0 & 0 & 0 \\
                          0 & 0 & 0 & 0 \\
                          0 & 0 & 0 & 0 \\
                          0 & 0 & 0 & 0 \\
                        \end{array}
                      \right),\qquad t^2=t=t^\dagger.
$$
The Dirac spinor is
$$\psi\leftrightarrow \left(
                        \begin{array}{cccc}
                          \psi_1 & 0 & 0 & 0 \\
                          \psi_2 & 0 & 0 & 0 \\
                          \psi_3 & 0 & 0 & 0 \\
                          \psi_4 & 0 & 0 & 0 \\
                        \end{array}
                      \right)
\in \I(t)=(\C\otimes\cl_{1,3})t.
$$
The corresponding left ideal $\I(t)$ is called \emph{spinor space}.

The Dirac equation can be written in the following form
$$
\ii e^\mu (\partial_\mu \psi-\ii a_\mu \psi)-m\psi=0
$$
where $\psi$ is an element of the left ideal of the Clifford algebra.

All properties of the Dirac equation from the previous section are valid.

\subsection{Dirac-Hestenes Equation}\label{sec:7.3}

Let us consider Minkowski space $\R^{1,3}$ and the complexified Clifford algebra $\C\otimes\cl_{1,3}$ with the generators $e^0, e^1, e^2, e^3$. We have a primitive idemptonent $t=\frac{1}{4}(e+E)(e-\ii I)$ and the corresponding left ideal $\I(t)$, where
$E:=e^0$, $I:=-e^{12}$, $t^2=t=t^\dagger$, $\ii t=It$, $t=Et$.

\begin{lemma}\label{lemmaDH} For arbitrary $U\in \I(t)$ the equation $X t=U$ has a unique solution $X\in\cl^{(0)}_{1,3}$ (and a unique solution $X\in\cl^{(1)}_{1,3}$).
\end{lemma}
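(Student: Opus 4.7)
The plan is to reduce the problem to an injectivity argument via a dimension count, and then transport the result from the even to the odd grading using right multiplication by $E$.

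First I would check that the two spaces have the same real dimension. On the one hand $\dim_\R \cl^{(0)}_{1,3} = 2^{n-1} = 8$, while on the other hand $\I(t)$ is a minimal left ideal of $\C \otimes \cl_{1,3} \cong \Mat(4,\C)$, so it has complex dimension $4$ and real dimension $8$. Consequently the $\R$-linear map $\phi_0 \colon \cl^{(0)}_{1,3} \to \I(t)$, $X \mapsto Xt$, goes between equidimensional real vector spaces, and bijectivity will follow from injectivity alone.

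The core step will be injectivity of $\phi_0$. Suppose $X \in \cl^{(0)}_{1,3}$ satisfies $Xt = 0$. Expanding $4t = e + E - \ii I - \ii EI$ (where $E = e^0$ is odd and $I = -e^{12}$ is even), I obtain
$$4Xt = X + XE - \ii XI - \ii XEI.$$
Since $X$ is even, the even part of the right-hand side is $X - \ii XI$ and must vanish. Both $X$ and $XI$ lie in the real subalgebra $\cl_{1,3}$, so the equation $X = \ii XI$, read in the decomposition $\C \otimes \cl_{1,3} = \cl_{1,3} \oplus \ii\, \cl_{1,3}$, immediately forces $X = 0$. This proves injectivity and hence the even half of the lemma.

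For the odd case, right multiplication by $E$ is an $\R$-linear involution that interchanges $\cl^{(0)}_{1,3}$ and $\cl^{(1)}_{1,3}$ (because $E^2 = e$), and the identity $Et = t$ already stated just before the lemma gives $(YE)t = Y(Et) = Yt$. Hence if $X_0 \in \cl^{(0)}_{1,3}$ is the unique even solution of $Xt = U$, then $X_0 E \in \cl^{(1)}_{1,3}$ is an odd solution; conversely a putative odd $Y$ with $Yt = 0$ yields $YE \in \cl^{(0)}_{1,3}$ with $(YE)t = 0$, forcing $YE = 0$ and hence $Y = 0$ by the even case. The only real obstacle is the one-line injectivity computation above; everything else is a dimension count plus the symmetry supplied by $Et = t$.
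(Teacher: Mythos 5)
Your proof is correct, and it takes a genuinely different route from the paper's. The paper argues constructively: it fixes the orthonormal basis $\tau_k=F_k t$ of $\I(t)$ with $F_1=2e$, $F_2=2e^{13}$, $F_3=2e^{03}$, $F_4=2e^{01}\in\cl^{(0)}_{1,3}$, writes $U=(\alpha^k+\ii\beta^k)\tau_k$, exhibits the explicit even solution $X=F_k(\alpha^k+I\beta^k)$ using $\ii t=It$, and then proves uniqueness by expanding a general $Y\in\cl^{(0)}_{1,3}$ component by component in the $\tau_k$ basis and reading off that $Yt=0$ kills all eight real coefficients. You instead observe that $\dim_\R\cl^{(0)}_{1,3}=8=\dim_\R\I(t)$, reducing everything to injectivity of $X\mapsto Xt$, and you establish injectivity abstractly: writing $4t=e+E-\ii I-\ii EI$, the even projection of $4Xt=0$ gives $X=\ii XI$ with both $X$ and $XI$ in the real subalgebra $\cl_{1,3}$, so the decomposition $\C\otimes\cl_{1,3}=\cl_{1,3}\oplus\ii\,\cl_{1,3}$ forces $X=0$. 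Your odd case via right multiplication by $E$ matches the paper's use of $X\mapsto XE$ in spirit. The trade-off: the paper's computation yields an explicit inversion formula (useful when one actually wants to pass from a Dirac spinor $\psi$ to its Hestenes representative $\Psi$), while your argument is shorter, basis-free, and makes it transparent why both the even- and odd-grading conditions pin down the solution -- it is the interaction of the $\Z_2$ grading with the real-versus-$\ii$ splitting of the complexified algebra. Both are valid; yours is the more conceptual.
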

\begin{proof} We can choose the orthonormal basis of the left ideal $\I(t)$ of the following form:
$$\tau_k=F_k t,\quad k=1, 2, 3, 4,\quad F_1=2e, F_2=2e^{13}, F_3=2e^{03}, F_4=2e^{01}\in\cl^{0)}_{1,3}.$$
We have $U=(\alpha^k+\ii \beta^k)\tau_k$ for some $\alpha^k, \beta^k\in \R$.

Using $\ii t=It$, we conclude that $X=F_k(\alpha^k+I\beta^k)\in\cl^{(0)}_{1,3}$ is a solution of $Xt=U$.

Now let us prove the following statement. If the element $Y\in\cl^{(0)}_{1,3}$ is a solution of equation $Yt=0$, then $Y=0$. For element $Yt\in \I(t)$ we have
$$Y t=\frac{1}{2}((y-\ii y_{12})\tau_1+(-y_{13}-\ii y_{23})\tau_2+(y_{03}-\ii y_{0123})\tau_3+ (y_{01}+\ii y_{02})\tau_4)=0.$$

Using $t=Et$, we conclude that $X=F_k E(\alpha^k+I\beta^k)\in\cl^{(1)}_{1,3}$ is also a solution of equation $Xt=U$. The proof of uniqueness in this case is similar.
\end{proof}
One can find this lemma and similar statements, for example, in \cite{marchuk}.

Let us rewrite the Dirac equation
$\ii e^\mu(\partial_\mu \psi-\ii a_\mu\psi)-m\psi=0$ in the following form
\begin{eqnarray}
e^\mu(\partial_\mu\psi-\ii a_\mu\psi)+\ii m\psi=0,\quad \psi\in \I(t).\label{dir1}
\end{eqnarray}

The \emph{Dirac-Hestenes equation} \cite{Hestenes} is
\begin{eqnarray}
e^\mu (\partial_\mu\Psi-a_\mu\Psi I)E+m\Psi I=0,\quad \Psi\in\cl^{(0)}_{1,3}.\label{dir2}
\end{eqnarray}

\begin{theorem} The Dirac equation and the Dirac-Hestenes equation are equivalent.
\end{theorem}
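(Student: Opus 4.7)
The plan is to exploit the bijection $\cl^{(0)}_{1,3} \to \I(t)$, $\Psi \mapsto \Psi t$, supplied by Lemma~\ref{lemmaDH}, together with the defining properties $\ii t = I t$ and $E t = t$ of the idempotent. The key observation is that scalar multiplication by $\ii$ inside $\I(t)$ is the same as right multiplication by $I$, while right multiplication by $E$ fixes $\I(t)$ but converts odd elements of $\cl_{1,3}$ into even ones.

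First I would verify the easier direction, that (\ref{dir2}) implies (\ref{dir1}). Given a solution $\Psi \in \cl^{(0)}_{1,3}$ of the Dirac--Hestenes equation, right-multiply by $t$, use $E t = t$ to absorb the $E$ in the first term, and then replace $I t$ by $\ii t$ in each of the two resulting terms. Setting $\psi := \Psi t$ and using that $t$ is constant in spacetime yields precisely (\ref{dir1}).

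For the converse, start with a solution $\psi \in \I(t)$ of (\ref{dir1}) and use Lemma~\ref{lemmaDH} to write $\psi = \Psi t$ uniquely with $\Psi \in \cl^{(0)}_{1,3}$. Substituting and using $\ii t = I t$ in reverse gives
\[
\bigl[ e^\mu(\partial_\mu \Psi - a_\mu \Psi I) + m \Psi I \bigr] t = 0.
\]
The bracket decomposes as $A^{(0)} + A^{(1)}$ with $A^{(0)} = m \Psi I \in \cl^{(0)}_{1,3}$ and $A^{(1)} = e^\mu(\partial_\mu \Psi - a_\mu \Psi I) \in \cl^{(1)}_{1,3}$, since $\Psi$ and $I$ are even while each $e^\mu$ is odd. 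Applying $E t = t$ to the odd piece gives $A^{(1)} t = (A^{(1)} E) t$ with $A^{(1)} E \in \cl^{(0)}_{1,3}$, so the displayed equation becomes $\bigl( A^{(0)} + A^{(1)} E \bigr) t = 0$ with an entirely even coefficient. The uniqueness half of Lemma~\ref{lemmaDH} then forces $A^{(0)} + A^{(1)} E = 0$, which is exactly (\ref{dir2}).

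The only real obstacle is the step of ``stripping off'' the factor $t$: naively, $Xt = 0$ does not imply $X = 0$, because left multiplication by $t$ has a large kernel. The trick of inserting $E$ via $E t = t$ converts the odd part $A^{(1)}$ into the even element $A^{(1)} E$, thereby moving the whole bracket into $\cl^{(0)}_{1,3}$, where Lemma~\ref{lemmaDH} supplies the required injectivity. Everything else is routine symbol-pushing with $\ii t = I t$, $E t = t$, and the fact that the complex scalar $\ii$ commutes with all Clifford elements.
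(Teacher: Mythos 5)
Your argument is correct and is essentially the paper's own proof: right-multiply by $t$ and use $Et=t$, $It=\ii t$ for the easy direction; for the converse, lift $\psi$ to $\Psi\in\cl^{(0)}_{1,3}$ via Lemma~\ref{lemmaDH}, insert $E$ through $Et=t$ so the coefficient of $t$ becomes even, and invoke the uniqueness half of the lemma to strip off $t$. You merely make explicit the even/odd bookkeeping that the paper leaves implicit when it asserts the bracketed expression lies in $\cl^{(0)}_{1,3}$.
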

\begin{proof} Let us multiply both sides of the Dirac-Hestenes equation (\ref{dir2}) by $t$ on the right. Using $Et=t$, $It=\ii t$, and $\Psi t=\psi$, we obtain the Dirac equation (\ref{dir1}).

Now let us start with the Dirac equation (\ref{dir1}). We have $\psi\in\I(t)$. Using Lemma \ref{lemmaDH}, we conclude that there exists $\Psi\in\cl^{(0)}_{1,3}$ such that $\Psi t=\psi$. Using $Et=t$ and $It=\ii t$, we obtain $$\underbrace{(e^\mu(\partial_\mu\Psi-a_\mu\Psi I)E+m\Psi I)}_{\in\cl^{(0)}_{1,3}}t=0.$$
Using Lemma \ref{lemmaDH} for the second time, we obtain the Dirac-Hestenes equation (\ref{dir2}).
\end{proof}

Note that the dimensions of the spinor spaces are the same in two approaches:
$$\dim \I(t)=\dim \C^4=8,\qquad \dim \cl^{(0)}_{1,3}=8.$$

The Dirac-Hestenes equation is widely used in applications (see, for example, \cite{rodr}, \cite{lasenby}).

\subsection{Weyl, Majorana and Majorana-Weyl Spinors}\label{sec:7.4}

Detailed information on $n$-dimensional spinors (using the methods of Clifford algebra) can be found in \cite{sh12}. See also \cite{Benn:Tucker}.

We study the connection between matrix operations (transpose, matrix complex conjugation) and operations in Clifford algebra (reverse, complex conjugation, grade involution), we introduce the notion of additional signature of the Clifford algebra (for more details, see \cite{sh12}, also \cite{sh13}, \cite{sh16}, \cite{sh19}, where we develop and use these results).

Let us consider chirality operator (pseudoscalar) in $\C\otimes\cl_{p,q}$:
$$\omega=\left\{
                                             \begin{array}{ll}
                                               e^{1\ldots n}, & \hbox{$p-q=0, 1\mod 4$} \\
                                               \ii e^{1\ldots n}, & \hbox{$p-q=2, 3\mod 4$.}
                                             \end{array}
                                           \right.
$$
We have
$$\omega=\omega^{-1}=\omega^\dagger.$$

Let us consider orthogonal idempotents
$$P_L:=\frac{1}{2}(e-\omega),\quad P_R:=\frac{1}{2}(e+\omega)$$
$$P_L^2=P_L,\qquad P_R^2=P_R,\qquad P_L P_R=P_R P_L=0.$$

In the case of odd $n$, the complexified Clifford algebra $\C\otimes\cl_{p,q}$ is the direct sum of two ideals:
$$
\C\otimes\cl_{p,q}=P_L (\C\otimes\cl_{p,q})\oplus P_R (\C\otimes\cl_{p,q}),\quad \C\otimes\cl_{p,q} \cong {}^2\Mat(2^{\frac{n-1}{2}},\C).
$$

Let us consider the case of even $n$. For the set of Dirac spinors $E_D=\{\psi\in \I(t)\}$ we have
$$E_D=E_{LW}\oplus E_{RW}$$
where
$$E_{LW}:=\{\psi\in E_D \semicolon P_L \psi=\psi \}=\{\psi\in E_D \semicolon \omega \psi=-\psi \}$$
is the set of left Weyl spinors and
$$E_{RW}:=\{\psi\in E_D \semicolon P_R \psi=\psi\}=\{\psi\in E_D \semicolon \omega \psi=\psi\}$$
is the set of right Weyl spinors.

Using the Pauli theorem (Theorems \ref{thPauli1} and \ref{thPauli2}), we obtain existence of the elements $A_\pm$ such that:
\begin{eqnarray}
(e^a)^\dagger=\pm A_{\pm}^{-1} e^a A_{\pm}.\label{tA}
\end{eqnarray}
If $n$ is even, then both elements $A_{\pm}$ exist. If $p$ is odd and $q$ is even, then only $A_+$ exists. If $p$ is even and $q$ is odd, then only $A_-$ exists.

We can rewrite (\ref{tA}) in the following way:
$$U^\dagger=A_{+}^{-1} \overline{\tilde{U}} A_+,\quad U^\dagger=A_{-}^{-1} \widehat{\overline{\tilde{U}}} A_-,\quad U\in\C\otimes\cl_{p,q}.$$

The explicit formulas for $A_\pm$ are given in Theorem \ref{thHerm}.

Let us consider two \emph{Dirac conjugations}
$$\psi^{D_\pm}:=\psi^\dagger (A_{\pm})^{-1}.$$

\begin{ex} In the case $(p,q)=(1,3)$ with the gamma-matrices $\gamma^0$, $\gamma^1$, $\gamma^2$, $\gamma^3$, we obtain the standard Dirac conjugation $\psi^{D_+}=\psi^\dagger \gamma^0$ and one else $\psi^{D-}=\psi^\dagger \gamma^{123}$.
\end{ex}

The Dirac conjugation is used to define \emph{bilinear covariants}
$$j_{\pm}^A=\psi^{D_\pm}e^A \psi.$$
The \emph{Dirac current} $\psi^{D_+} e^\mu \psi$ is a particular case of the bilinear covariants. Using the Dirac equation, it is not difficult to obtain the law of conservation of the Dirac current:
$$\partial_\mu(\psi^{D_+} e^\mu \psi)=0.$$

We denote the matrix complex conjugation by $\mcc{\,}$. It should not be confused with the operation of complex conjugation in the complexified Clifford algebra $\C\otimes\cl_{p,q}$.

Let us consider the following two operations in $\C\otimes\cl_{p,q}$:
$$U^\T:=\beta^{-1} (\beta^\T(U)),\qquad \mcc{U}:=\beta^{-1}(\mcc{\beta(U)}),\qquad U\in\C\otimes\cl_{p,q}$$
where
$$\beta: \C\otimes\cl_{p,q} \to
\left\lbrace
\begin{array}{ll}
\Mat(2^{\frac{n}{2}}, \C), & \parbox{.5\linewidth}{if $n$ is even}\\
\Mat(2^{\frac{n-1}{2}}, \C)\oplus \Mat(2^{\frac{n-1}{2}}, \C), & \parbox{.5\linewidth}{if $n$ is odd}
\end{array}\nonumber
\right.
$$
is the faithful representation of $\C\otimes\cl_{p,q}$ of the minimal dimension. These two operations depend on the representation $\beta$.

Using the Pauli theorem (Theorems \ref{thPauli1} and \ref{thPauli2}), we obtain existence of the elements $C_\pm$ such that:
\begin{eqnarray}
(e^a)^\T=\pm C_{\pm}^{-1} e^a C_{\pm}.\label{tC}
\end{eqnarray}
If $n$ is even, then both elements $C_{\pm}$ exist. If $n=1\mod 4$, then only $C_+$ exists. If $n=3\mod 4$, then only $C_-$ exists.

We can rewrite (\ref{tC}) in the following way:
$$U^\T=C_{+}^{-1} \tilde{U} C_+,\quad U^\T=C_{-}^{-1} \widehat{\tilde{U}} C_-,\quad U\in\C\otimes\cl_{p,q}.$$

The explicit formulas for $C_\pm$ are given in \cite{sh12} using the notion of additional signature of the Clifford algebra. Also these elements have the following properties:
$$(C_{\pm})^\T=\lambda_{\pm} C_{\pm},\qquad \mcc{C_{\pm}}C_{\pm}=\lambda_{\pm} e$$
$$\lambda_+=\left\lbrace
\begin{array}{ll}
+1, & \mbox{$n\equiv 0, 1, 2\!\!\mod 8$}\\
-1, &  \mbox{$n\equiv 4, 5, 6\!\!\mod 8$}
\end{array}
\right.\quad
\lambda_-=\left\lbrace
\begin{array}{ll}
+1, & \mbox{$n\equiv 0, 6, 7\!\!\mod 8$}\\
-1, &  \mbox{$n\equiv 2, 3, 4\!\!\mod 8$}.
\end{array}
\right.
$$

Using the Pauli theorem (Theorems \ref{thPauli1} and \ref{thPauli2}), we obtain existence of the elements $B_\pm$ such that:
\begin{eqnarray}
\mcc{e^a}=\pm B_{\pm}^{-1} e^a B_{\pm}.\label{tB}
\end{eqnarray}
If $n$ is even, then both elements $B_{\pm}$ exist. If $p-q=1\mod 4$, then only $B_+$ exists. If $p-q=3\mod 4$, then only $B_-$ exists.

We can rewrite (\ref{tB}) in the following way:
$$\mcc{U}=B_{+}^{-1} \overline{U} B_+,\quad \mcc{U}=B_{-}^{-1} \widehat{\overline{U}} B_-,\quad U\in\C\otimes\cl_{p,q}.$$

The explicit formulas for $B_\pm$ are given in \cite{sh12} using the notion of additional signature of the Clifford algebra. Also these elements have the following properties:
$$B_{\pm}^T=\epsilon_{\pm} B_{\pm},\qquad \mcc{B_{\pm}} B_{\pm}=\epsilon_{\pm} e$$
$$\epsilon_+=\left\lbrace
\begin{array}{ll}
+1, & \mbox{$p-q\equiv 0, 1, 2\!\!\mod 8$}\\
-1, &  \mbox{$p-q\equiv 4, 5, 6\!\!\mod 8$}
\end{array}
\right.\,
\epsilon_-=\left\lbrace
\begin{array}{ll}
+1, & \mbox{$p-q\equiv 0, 6, 7\!\!\mod 8$}\\
-1, &  \mbox{$p-q\equiv 2, 3, 4\!\!\mod 8$}.
\end{array}
\right.
$$

We introduce the \emph{Majorana conjugation} in the following way
$$\psi^{M_\pm}:=\psi^\T (C_{\pm})^{-1}.$$

\begin{ex} In the case $(p,q)=(1,3)$, we have $\psi^{M_+}=\psi^\dagger (\gamma^{13})^{-1}$ and $\psi^{M_-}=\psi^\dagger (\gamma^{02})^{-1}$.
\end{ex}

We introduce the \emph{charge conjugation} in the following way
$$\psi^{ch_\pm}:=B_{\pm}\mcc{\psi}.$$

\begin{ex} In the case $(p,q)=(1,3)$, we have $\psi^{ch_+}=\gamma^{013}\mcc{\psi}$ and $\psi^{ch_-}=\gamma^2\mcc{\psi}$.
\end{ex}

We have the following relation between $A_\pm$, $B_\pm$, and $C_\pm$ (when they exist):
$$B_{+}=\widetilde{A_{+}^{-1}} C_{+},\quad B_{+}=\widehat{\widetilde{A_{-}^{-1}}} C_{-},\quad B_{-}=\widetilde{A_{-}^{-1}} C_{+},\quad B_{-}=\widehat{\widetilde{A_{+}^{-1}}} C_{-}$$
$$\psi^{ch_+}=C_+ (\psi^{D_+})^\T=C_- (\psi^{D_-})^\T,\qquad \psi^{ch_-}=C_- (\psi^{D_+})^\T=C_+ (\psi^{D_-})^\T.$$

Let us denote the set of \emph{Majorana spinors} by
$$E_M:=\{ \psi\in E_D \semicolon \psi^{ch_-}=\pm\psi\}$$
and the set of \emph{pseudo-Majorana spinors} by
$$E_{psM}:=\{ \psi\in E_D \semicolon \psi^{ch_+}=\pm\psi\}.$$

Using definition of the charge conjugation and the properties of $B_{\pm}$, it can be proved that Majorana spinors are realized only in the cases  $p-q=0, 6, 7\mod 8$ and pseudo-Majorana spinors are realized only in the cases $p-q=0, 1, 2\mod 8$ (see, for example, \cite{sh12}).

Let us denote a set of \emph{left Majorana-Weyl spinors} by
$$E_{LMW}:=\{ \psi\in E_{LW} \semicolon \psi^{ch_-}=\pm\psi\}=\{ \psi\in E_{LW} \semicolon \psi^{ch_+}=\pm\psi\}$$
and a set of \emph{right Majorana-Weyl spinors} by
$$E_{RMW}:=\{ \psi\in E_{RW} \semicolon \psi^{ch_-}=\pm\psi\}=\{ \psi\in E_{RW} \semicolon \psi^{ch_+}=\pm\psi\}.$$
It can be proved that Majorana-Weyl spinors are realized only in the cases $p-q=0\mod 8$ (see, for example, \cite{sh12}).

The question of existence of the spinors in the cases of different dimensions and signatures is related to the supersymmetry theory (see classic works on supersymmetry and supergravity \cite{sup1}, \cite{sup2} and other papers and reviews \cite{sup3}, \cite{sup4}, \cite{sup5}, \cite{sup6}, \cite{sup7}, \cite{sup8}, \cite{sup9}, \cite{sup10}).

\section*{Acknowledgements}

The author is grateful to Professor Iva\"ilo~M.~Mladenov for invitation and support. The author is grateful to Professor Nikolay~G.~Marchuk for fruitful discussions.

The reported study was funded by RFBR according to the research project
No. 16-31-00347 mol\_a.

\printindex
\end{document}